\documentclass[11pt]{article}
\usepackage[margin=1in]{geometry}
\usepackage{graphicx}
\usepackage{amsmath}
\usepackage{amsthm}
\usepackage{physics}
\usepackage{amssymb}
\usepackage{xcolor}
\usepackage{hyperref}
\usepackage{mathtools}
\usepackage{comment}
\usepackage{subcaption}
\usepackage[normalem]{ulem}
\usepackage{mathrsfs}
\usepackage{enumitem}
\usepackage{authblk}

\theoremstyle{plain}
\newtheorem{theorem}{Theorem}[section]
\newtheorem{lemma}[theorem]{Lemma}
\newtheorem{proposition}[theorem]{Proposition}
\newtheorem{corollary}[theorem]{Corollary}

\theoremstyle{definition}

\theoremstyle{remark}

\newtheorem{claim}[theorem]{Claim}

\numberwithin{equation}{section}

\usepackage{dsfont}
\newcommand{\eps}{\epsilon}
\newcommand{\indicator}[1]{\mathds{1}_{#1}}
\DeclarePairedDelimiter{\bars}{\lvert}{\rvert}
\DeclarePairedDelimiter{\set}{\{}{\}}
\newcommand{\ceil}[1]{\left\lceil#1\right\rceil}
\newcommand{\floor}[1]{\left\lfloor#1\right\rfloor}
\DeclarePairedDelimiter{\parentheses}{(}{)}

\newcommand{\E}{\mathbb{E}}
\renewcommand{\P}{\mathbb{P}}
\newcommand{\R}{\mathbb{R}}
\newcommand{\V}{\mathbb{V}}
\newcommand{\Z}{\mathbb{Z}}

\newcommand{\cC}{\mathcal{C}}
\newcommand{\cE}{\mathcal{E}}
\newcommand{\cF}{\mathcal{F}}
\newcommand{\cG}{\mathcal{G}}
\newcommand{\cL}{\mathcal{L}}
\newcommand{\cP}{\mathcal{P}}

\newcommand{\fm}{\mathfrak{m}}
\newcommand{\fn}{\mathfrak{n}}
\newcommand{\fs}{\mathfrak{s}}
\newcommand{\fv}{\mathfrak{v}}
\newcommand{\sF}{\mathscr{F}}
\newcommand{\sP}{\mathscr{P}}

\newcommand{\eff}{\mathrm{eff}}
\newcommand{\oned}{\mathrm{1D}}
\newcommand{\per}{\mathrm{per}}
\newcommand{\mic}{\mathrm{mic}}
\newcommand{\Cov}{\mathrm{Cov}}
\newcommand{\Side}{\mathrm{Side}}
\newcommand\rect[2]{\mathrm{R}_{#1,#2}}
\newcommand\peierls[1]{\mathfrak{p}_{#1}}
\newcommand\es[1]{\overline{#1}}
\newcommand\xre[2]{X_{#1}#2}
\newcommand{\Phase}[1]{\mathrm{Phase}(#1)}

\DeclarePairedDelimiter{\brackets}{[}{]}

\let\emptyset\varnothing
\let\vec\mathbf

\allowdisplaybreaks

\title{Nematic order in monomer-dimer models of Heilmann and Lieb with lateral attraction in dimensions $d\ge 2$}
\author{Qidong He\thanks{Email address: \url{q.he@northeastern.edu}}}
\date{}
\affil{Department of Mathematics, Northeastern University}

\begin{document}

\maketitle

\begin{abstract}
    We study the $d$-dimensional generalization of Models~II ($d=2$) and V ($d=3$) for liquid crystals introduced by Heilmann--Lieb (1979), which consists of dimers on the hypercubic lattice at chemical potential $\mu$, interacting via a hard-core repulsion and a lateral attraction between parallel dimers with coupling constant $b>0$.
    For $d\in\set{2,3}$, these models were conjectured to exhibit liquid-crystalline order at low temperatures provided that $\mu+2(d-1)b>0$.
    We establish nematic order for all $d\ge 2$ under the additional condition $(4d-6)b>\mu$, which corresponds to the physical regime in which misaligned dimers are rare on scales comparable to the correlation length of the one-dimensional reference system.
    While the treatment of orientational order here follows the approach of our recent work on Model~I, the proof of the absence of translational order relies on a custom implementation of cluster swapping in the sense of Sheffield (2005).
    Consequently, we deduce uniqueness of Gibbs measure and anisotropic exponential decay of correlations within each oriented phase.
\end{abstract}


\section{Introduction}

Liquid crystals are a state of matter intermediate between a liquid and a solid crystal, characterized microscopically by broken rotational symmetry while retaining translation invariance in at least one spatial direction~\cite{de1993physics}.
Deriving this behavior from concrete microscopic interactions is a delicate and significant mathematical challenge, as it requires simultaneous control of two seemingly distinct properties: the stability of global alignment and the absence of full accompanying crystalline order.
Establishing the former does not automatically settle the latter, since an aligned system may still develop positional order in every spatial direction~\cite{bolhuis1997tracing}.
In the mathematical literature, these problems have been addressed for a limited number of two-dimensional models, including nematic order in systems of rods of fixed~\cite{disertori2013nematic} and variable~\cite{ioffe2006entropy} lengths, rectangles~\cite{he2026rigorous}, and interacting dimers~\cite{jauslin2018nematic,he2025extended} on $\Z^2$, as well as columnar order in a system of $2\times 2$ squares on $\Z^2$ at high densities~\cite{hadas2026columnar}.
Comparable results establishing liquid-crystalline order in higher dimensions are much rarer; a notable exception is the proof of uniaxial nematic order in systems of anisotropic plates in $\R^3$~\cite{disertori2020plate}.
We emphasize that all these models restrict the particles to having a \textit{finite} number of orientations, whereas Onsager's famous original prediction~\cite{onsager1949effects} concerns the development of nematic order in systems of elongated particles with \textit{continuous} rotational symmetry.
For models with continuous rotational symmetry, there are a small number of rigorous results concerning Onsager-type free-energy functionals~\cite{fatkullin2005critical,vollmer2017critical} and their derivations in the mean-field~\cite{bachmann2017mean} and van der Waals--Kac~\cite{erignoux2020nematic} limits.

In this paper, we continue our study of the lattice models for liquid crystals proposed by Heilmann--Lieb in 1979~\cite{heilmann1979lattice}, focusing specifically on Models~II and~V.
Both models feature the same types of microscopic interactions and differ only in dimension: Model~II is two-dimensional, while Model~V is three-dimensional.
Particles are represented by dimers (edges) of a hypercubic lattice $\Z^d$, and the interactions between particles consist of a hard-core repulsion and a lateral attractive interaction between adjacent, parallel dimers.
With $\mu\in\R$ and $b>0$ denoting the (dimer) chemical potential and the strength of the attractive interaction, Heilmann--Lieb proved by reflection-positivity methods that, for $d\in\set{2,3}$, both models exhibit orientational order at low temperatures provided that $\mu+2(d-1)b>0$.
However, as with the other models presented in their paper, they stopped short of ascertaining the status of translational symmetry within each oriented phase: for instance, \textit{nematic} liquid-crystalline order (depicted in Figure~\ref{fig: model V} for Model~V), as we establish in this paper for both models as well as their higher-dimensional generalizations, would follow from a proof that no translational symmetry breaking occurs whatsoever.

\begin{figure}[t]
    \centering
    \includegraphics[width=0.5\textwidth]{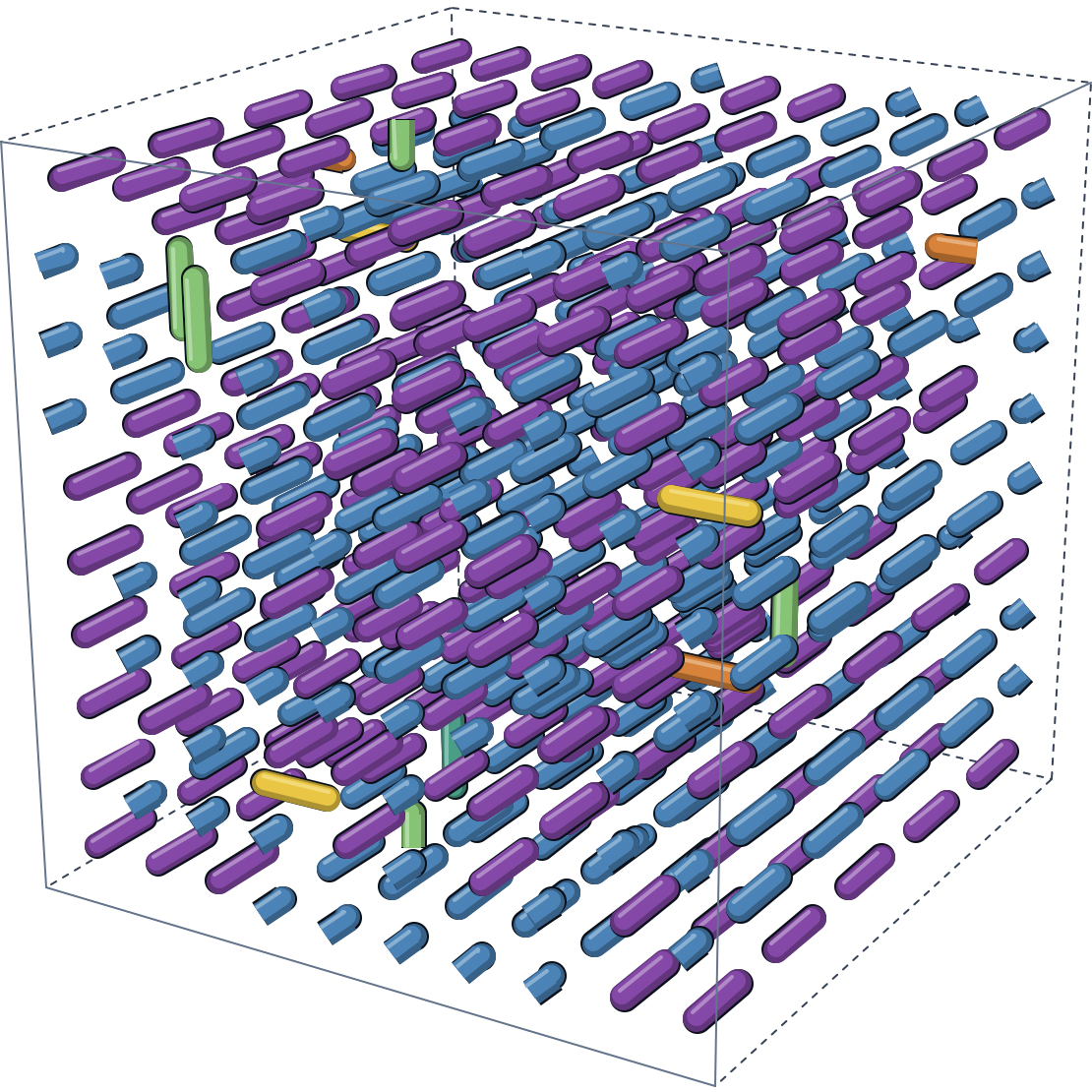}
    \caption{A typical configuration of Model~V on a $10\times10\times10$ torus with parameters $\beta=1.8$, $\mu=0.2$, and $b=0.3$. Nematic order is evident.}
    \label{fig: model V}
\end{figure}

Whereas Model~I has been the subject of a number of subsequent mathematical works~\cite{alberici2016cluster,jauslin2018nematic,he2025extended}, the remaining models, to our knowledge, have seen virtually no progress since the publication of Heilmann--Lieb's original paper~\cite{heilmann1979lattice}.
For Models~II and~V, specifically, we speculate that this is in large part due to the perceived complexity of the attractive interaction therein, which is built around pairs of \textit{non-collinear} but \textit{parallel} dimers and raises the question of whether the multi-dimensionality can be effectively tamed using existing mathematical tools.
In this paper, we give an \textit{affirmative} answer to this question in the regime $(4d-6)b>\mu$, though this requires a methodological shift from our previous work on Model~I~\cite{he2025extended}, as we now explain.

In~\cite{he2025extended}, the proof of the absence of translational order relies on the method of disagreement percolation~\cite{van1993uniqueness,van1994disagreement}, which requires considering an extended connectivity on the edge set of $\Z^2$ under which Model~I is a Markov random field.
Since the attractive interaction in Model~I is between \textit{collinear} dimers, the extended connectivity reduces to an effectively one-dimensional one once misalignment becomes sufficiently rare.
Indeed, the so-called \textit{disagreement components} that ensue are simply one-dimensional paths on $\Z^2$, which are easy to control.
This picture breaks down completely for Models~II and~V.
Due to the nature of the attractive interaction, any connectivity on the edge set of $\Z^d$ that makes these models into Markov random fields must connect pairs of parallel (non-collinear) edges.
Hence, even when misalignment is rare, disagreement components are still sprawling, multi-dimensional objects that are difficult to control.

Using the method of cluster swapping~\cite{sheffield2005random,peled2020three}, we are able to overcome this difficulty.
Cluster swapping, as introduced by Sheffield~\cite{sheffield2005random}, is a generalization of disagreement percolation that allows one to consider more flexible \textit{clusters} than disagreement components with respect to a particular connectivity.
It turns out that, for Models~II and~V, the structure of the attractive interaction permits an especially convenient choice of clusters (see Section~\ref{sec:cluster swapping}) such that, provided that misalignment is sufficiently rare, the clusters reduce to one-dimensional paths just as the disagreement components did for Model~I.
Therefore, the move from disagreement percolation to cluster swapping is not merely a technical refinement but an essential step in overcoming the difficulties posed by the multi-dimensional attractive interaction in Models~II and~V.

Apart from the transition to cluster swapping, the proof of nematic order in this paper follows the same strategy as that of~\cite{he2025extended}.
Namely, we start from the same microscopic characterization of orientational order as used by Heilmann--Lieb~\cite{heilmann1979lattice} and strengthen it into a much sharper quantification of the rarity of misaligned dimers in each oriented phase.
We then define the clusters and use the sharper estimate to prove the strong percolation of \textit{sealed boxes} which confine passing clusters to one-dimensional columns, making them vulnerable to termination.
Thus, we verify the assumption for cluster swapping and conclude from it uniqueness of Gibbs measure and exponential decay of correlations within each oriented phase.

We mention that the additional condition $(4d-6)b>\mu$ plays the same role as the condition $3a>\lambda$ did in our work on Model~I~\cite{he2025extended} (the dimer chemical potential there is denoted by $\lambda$).
Namely, it separates the length scale on which vacancies are frequent from that on which the existence of misaligned dimers is non-negligible.
What happens when this separation of scales is no longer valid is an interesting open question.
We refer the reader to~\cite[Section~6]{he2025extended} for a related discussion in the context of Model~I.

\paragraph{Model and main results}

We now introduce the model and state the main results.

Let $d\ge 2$ and $\V$ be the edge set of the hypercubic lattice $\Z^d$.
We identify each element of $\V$ with its midpoint in $\R^d$.
A \textbf{configuration} is a function $\sigma:\V\to\set{0,1}$ such that $\sigma(e)\sigma(e')=0$ whenever $e,e'\in\V$ are distinct and incident to the same vertex of $\Z^d$.
We identify each configuration $\sigma$ with the set $\sigma^{-1}(\set{1})$ of pairwise non-incident edges of $\Z^d$.
Let $\Omega$ be the set of all configurations on $\V$.

Let $\beta>0$ be the inverse temperature, $\mu\in\R$ be the chemical potential, and $b>0$ be the coupling constant.
The formal Hamiltonian of the $d$-dimensional monomer-dimer model with lateral attraction, as Heilmann--Lieb introduced it in~\cite{heilmann1979lattice} (for $d\in\set{2,3}$), is
\begin{equation}
    \label{eqn:formal Hamiltonian}
    H(\sigma)= -\mu\abs{\sigma}-b\,\sum_{e\in\V}\indicator{\begin{subarray}{l}\text{there exist parallel dimers }e_1,e_2\in\sigma\\\text{each perpendicular and incident to }e\end{subarray}},
\end{equation}
where $\abs{\cdot}$ denotes the cardinality.
The formal Hamiltonian~\eqref{eqn:formal Hamiltonian} is adapted to finite volumes in the usual way and gives rise to finite-volume Gibbs measures as well as infinite-volume ones via the usual Dobrushin--Lanford--Ruelle formalism.
A detailed review of the definitions and constructions referenced above is given in Section~\ref{sec:the Hamiltonian}.

Our main result is as follows; see Figure~\ref{fig: model V} for an illustration in the case $d=3$.

\begin{theorem}[Nematic order]
    \label{thm:main}
    Fix $d\ge 2$, $\mu\in\R$, and $b>0$ such that $\mu+2(d-1)b>0$ and $(4d-6)b>\mu$.
    There exist constants $\beta_0,C,c>0$ and $\fn\ge 4(6d-5)[2^{d+1}(4d-5)+1]$ (see~\eqref{eqn: microscopic length scale n} for an exact expression) such that, for all $\beta\ge\beta_0$, the $d$-dimensional monomer-dimer model with lateral attraction admits a unique Gibbs measure $\P_{x_1}$ such that:
    \begin{enumerate}
        \item \label{itm:main - invariance and extremality} $\P_{x_1}$ is $\Z^d$-invariant and extremal.
        \item \label{itm:main - dimer probabilities} For all $e\in\V$,
        \begin{equation}
            \P_{x_1}(\sigma(e)=1)=
            \begin{cases}
                \frac{1}{2}-\order{e^{-\frac{1}{2}\beta [\mu+2(d-1)b]}} & \text{if }e\text{ is $x_1$-oriented} \\ 
                \order{e^{-\beta(1-1/\fn)(4d-5)b}} & \text{otherwise}
            \end{cases}.
        \end{equation}
        \item \label{itm:main - decay of correlations} Let $A,B\subset\V$ and $f,g:\Omega\to[-1,1]$ be such that $f$ depends only on the restriction of $\sigma$ to $A$ and $g$ only on the restriction of $\sigma$ to $B$.
        Then,
        \begin{equation}
            \abs{\Cov_{\P_{x_1}}(f,g)}\le \sum_{u\in A}\sup_{v\in B}\alpha_{x_1}(u,v),
        \end{equation}
        where, defining $\ell_0:=e^{\frac{1}{2}\beta[\mu+4(d-1)b]}$,
        \begin{equation}
            \alpha_{x_1}(u,v):=C\exp\set[\bigg]{-c\brackets[\bigg]{\ell_0^{-1}\abs{u_1-v_1}+\sum_{i=2}^d\abs{u_i-v_i}}}.
        \end{equation}
    \end{enumerate}
    Let $\varphi:\R^d\to\R^d$ be the coordinate permutation $(x_1,\dots,x_{d-1},x_d)\mapsto(x_d,x_1,\dots,x_{d-1})$.
    For each $2\le j\le d$, the push-forward $\P_{x_j}:=\varphi^{j-1}\P_{x_1}$ is another Gibbs measure of the model that satisfies analogous properties.
    Furthermore, every periodic Gibbs measure of the model is a convex combination of the $\P_{x_i}$, $1\le i\le d$.
\end{theorem}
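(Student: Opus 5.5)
The proof has two halves. The first is orientational order, following Heilmann--Lieb's reflection-positivity argument, sharpened as in our work on Model~I. The second is the absence of translational order, obtained by cluster swapping in the sense of Sheffield.

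\emph{Step 1: reference system and orientational order.} Work on even tori $\mathbb{T}_L$ with periodic boundary conditions. The restriction of the model to $x_1$-oriented dimers is a family of decoupled one-dimensional monomer-dimer chains along the $x_1$-columns, with lateral attraction between neighbouring columns. At low temperature the dominant configurations are close-packed $x_1$-dimers. Their defects are vacancy pairs, which cost roughly $\mu+2(d-1)b$ and separate the two shifted perfect packings along a column. This yields the one-dimensional correlation length $\ell_0=e^{\frac12\beta[\mu+4(d-1)b]}$ and the vacancy density $e^{-\frac12\beta[\mu+2(d-1)b]}$.

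Next, apply the chessboard estimate, which uses reflection positivity through hyperplanes through vertices, as Heilmann--Lieb do. It bounds the torus probability that a block of side $\fn$ contains dimers of two different orientations by $e^{-\beta(1-1/\fn)(4d-5)b}$ per block. The quantity $(4d-5)b$ is the minimal energetic cost of inserting a misaligned dimer into an $x_1$-aligned background, computed by counting lost lateral bonds. The condition $(4d-6)b>\mu$ guarantees that this cost exceeds the scale $\ell_0^{-1}$ at which vacancies are frequent, so misaligned dimers are rare even on the scale $\ell_0$ of a block. A Peierls argument over $\fn$-blocks then shows that periodic measures decompose into $d$ phases. In each phase, blocks are $x_1$-aligned except on a sparse set that is dominated by subcritical Bernoulli percolation.

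\emph{Step 2: cluster swapping.} Take two independent samples $\sigma,\sigma'$ from Gibbs measures in the $x_1$-phase with different boundary conditions. Define clusters as follows. Away from misaligned dimers, the symmetric difference of two $x_1$-aligned configurations decomposes into $x_1$-columns. The lateral interaction between two columns depends only on the column contents, so swapping a whole maximal disagreement segment within one column between $\sigma$ and $\sigma'$ preserves the total energy of the pair. Clusters are therefore one-dimensional segments, together with connections through non-$x_1$ dimers wherever those occur.

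Sheffield's swapping lemma reduces uniqueness and decay of correlations to showing that, with high probability, the cluster of an edge $u$ does not reach $v$. To show this, introduce \emph{sealed boxes}: anisotropic boxes of length $\asymp\ell_0$ in direction $x_1$ and bounded width, containing no misaligned dimers, and in which each column sees both configurations agree at some location. By Step 1, sealed boxes percolate strongly, since a coarse-grained site is unsealed with probability $o(1)$. Any cluster that enters a sealed box is confined to its column and terminates inside the box. A cluster reaching from $u$ to $v$ must therefore cross a chain of unsealed boxes, which costs exponentially in $\ell_0^{-1}|u_1-v_1|+\sum_{i\ge2}|u_i-v_i|$. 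This gives the bound $\alpha_{x_1}$ in part~\ref{itm:main - decay of correlations}. Uniqueness, extremality and $\Z^d$-invariance follow, and part~\ref{itm:main - dimer probabilities} comes from the block estimates together with the one-dimensional computation.

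\emph{Main obstacle.} The hardest part is Step 2. One must check that the chosen clusters are admissible in Sheffield's sense, meaning that swapping preserves the joint weight even near misaligned dimers, whose interactions couple parallel non-collinear edges. One must also prove the strong percolation of sealed boxes uniformly in the boundary conditions. I would attempt the latter first, via a conditional chessboard estimate combined with a Liggett--Schonmann--Stacey domination argument.
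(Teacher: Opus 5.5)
\textbf{Main gap: the clusters in Step~2 are not weight-preserving.} The step ``the lateral interaction between two columns depends only on the column contents, so swapping a whole maximal disagreement segment within one column preserves the total energy of the pair'' is false, even in perfectly $x_1$-aligned regions. There the energy is a vacancy term plus $\frac b2$ for each transverse edge $\{u,v\}$ with exactly one vacant endpoint. Write $n(w)\in\{0,1\}$ for the $x_1$-occupancy of $w$. The joint contribution of $\{u,v\}$ is then $\frac b2\sum_{\sigma,\sigma'}[n(u)+n(v)-2n(u)n(v)]$. Any swap preserves the linear part. By $2(xy+x'y')=(x+x')(y+y')+(x-x')(y-y')$, however, the cross term is preserved only if the product $(n^\sigma(u)-n^{\sigma'}(u))(n^\sigma(v)-n^{\sigma'}(v))$ is. Endpoints of disagreement segments are exactly the sites vacant in one configuration and occupied in the other. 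So take $u,v$ transversally adjacent segment endpoints in neighbouring columns, e.g.\ both vacant in $\sigma$ and both occupied in $\sigma'$. If you swap only $u$'s segment, the edge $\{u,v\}$ goes from contributing $0$ in both configurations to being a broken link in both, and in general nothing compensates.

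The paper's fix is to enlarge $\sigma\triangle\sigma'$ by every $x_i$-oriented edge $\{u,v\}$ with $(n_j^\sigma(u)-n_j^{\sigma'}(u))(n_j^\sigma(v)-n_j^{\sigma'}(v))\neq0$ for some $j\neq i$, and to take components in $L(\Z^d)$. One then checks three things: hard-core compatibility, invariance of $\Delta_{\sigma,\sigma'}$ under the swap, and preservation of each $\Phi_B(\sigma)+\Phi_B(\sigma')$ via exactly the identity above. The price is that clusters now jump between columns at transverse pairs of sites each vacant in one of the two configurations. Your sealed boxes (no misaligned dimers plus termination in each column) therefore do not confine them. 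One needs an extra sealing condition excluding such transverse edges (the paper's $\Sigma_2$), and this requires its own estimate. The paper shows, by defect chasing, that the event $M$ of such an edge at the centre of an aligned $\fn$-block has chessboard seminorm at most $Ce^{-\beta[\mu+(4d-5)b]}$. This is $o(\fs^{-1})$ because $\mu+(4d-5)b>\frac12[\mu+4(d-1)b]$.

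\textbf{Second gap: the sharp misalignment bound in Step~1.} The rate $e^{-\beta(1-1/\fn)(4d-5)b}$ cannot come from a chessboard bound on ``the $\fn$-block contains dimers of two orientations''. Under the reflections, a misaligned dimer at a block corner is glued to its images into a $2\times\dots\times2$ droplet shared by $2^d$ blocks. This costs only $(d-1)b$ per block, much more than $\ell_0^{-1}$ in probability. Likewise, ``the centre is incident to a misaligned dimer'' has seminorm of order one, because of the fully $x_2$-aligned configuration. (Also, the reflections are through hyperplanes bisecting edges, not through vertices.)

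The paper proceeds in two stages. First it gets the $d$ phases from the crude bound $\norm{Q}_R\le C_d e^{-\beta\min\{\mu+2(d-1)b,b\}/2}$ on $2\times\dots\times2$ cubes. Only then, inside the $x_1$-phase, does it prove the sharp rarity by a Peierls argument:
\begin{itemize}
\item take the components of vacancies and misaligned sites that hit block centres;
\item chase defects on their outer boundaries: $2(4d-5)$ broken links (with multiplicity) for small components, and at least $\fm$ defect edges per piece for large ones;
\item average over the $\fn^d$ grid translates, so that a fraction $1-1/\fn$ of these defects is local to single cells, and apply the chessboard estimate to those cells.
\end{itemize}
That averaging is where the factor $1-1/\fn$ comes from.

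Finally, you do not need strong percolation uniformly in boundary conditions, nor a Liggett--Schonmann--Stacey argument. The theorem only concerns periodic Gibbs measures, so it suffices to compare $\Z^d$-ergodic measures, to which the chessboard estimate applies directly.
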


\paragraph{Organization of the paper}

In Section~\ref{sec:preliminaries}, we formally define the monomer-dimer model and review the reflection-positivity property, the chessboard estimate, and the framework of strongly percolating sets~\cite[Section~6]{hadas2026columnar}.
In Section~\ref{sec:one-dimensional systems}, we analyze the one-dimensional reference model using the transfer-matrix method and derive the basic estimates used later in the proof. 
In Section~\ref{sec: orientational symmetry breaking}, we first formulate the Heilmann--Lieb characterization of the oriented phases in the strong-percolation framework, and then strengthen it to a near-optimal quantitative characterization of the rarity of misaligned dimers in each phase.
In Section~\ref{sec: absence of translational symmetry breaking}, we construct sealed boxes, prove their strong percolation on the mesoscopic scale, and implement cluster swapping to prove uniqueness of Gibbs measure and exponential decay of correlations within each phase.
Finally, in Section~\ref{sec:microscopic characterization of nematic order}, we characterize the phases microscopically by estimating the densities of aligned and misaligned dimers.

\section{Preliminaries}
\label{sec:preliminaries}

\subsection{Basic definitions}
\label{sec:basic definitions}

\paragraph{Graphical constructions}

For $d\ge 2$, let $\Z^d$ be the $d$-dimensional hypercubic lattice.
Denote by $L(\Z^d)$ the \textbf{line graph} of $\Z^d$, which has vertices corresponding to the edges of $\Z^d$, and where two vertices are connected by an edge if and only if their corresponding edges in $\Z^d$ share a vertex.
Denote the vertex set of $L(\Z^d)$ by $\V$; the dimension $d$ underlying the latter will always be clear from the context.
We identify the elements of $\V$ with the midpoints of the corresponding edges of $\Z^d$.
Thus, each $e\in\V$ may be viewed both as an edge of $\Z^d$ and as a vertex of $L(\Z^d)$.
Unless otherwise noted, we adopt the former viewpoint from here on, so that the terms ``vertex'' and ``edge'' refer to a vertex and an edge of $\Z^d$, respectively.

We will consider two connectivities on $\Z^d$ itself.
The first, denoted by the symbol $\Box$, is the standard connectivity on $\Z^d$ induced by the $1$-norm on $\R^d$.
The second, denoted by the symbol $\boxtimes$, is induced by the sup-norm on $\R^d$ and is particularly useful for characterizing the connectivity of boundaries of $\Box$-connected subsets of $\Z^d$; see~\cite{timar2013boundary}.

\paragraph{Boxes}

In this paper, the term \textbf{box} refers to a closed, axis-aligned rectangular region in $\R^d$ with corners at the geometric centers of the cells of $\Z^d$: for $v_1,\dots,v_d\in 1/2+\Z$ and $L_1,\dots,L_d\in\Z_{>0}$, denote
\begin{equation}
    \rect{L_1\times \dots\times L_d}{(v_1,\dots,v_d)}:=[v_1,v_1+L_1]\times \dots \times [v_d,v_d+L_d]\subset\R^d.
\end{equation}
In particular, we write
\begin{equation}
      \mathrm{R}_{L_1\times \dots\times L_d}:=\rect{L_1\times \dots\times L_d}{(-1/2,\dots,-1/2)}.
\end{equation}
A box is said to be \textbf{even} if its side lengths are all even.

\paragraph{Configurations}

A \textbf{dimer} is an edge $e\in\V$.
Its \textbf{orientation} is the unique label $o\in\set{x_1,\dots,x_d}$ such that $e$ is parallel to the $o$-axis, and we say that $e$ is \textbf{$o$-oriented} in that case.
Two dimers $e,e'\in\V$ are \textbf{incident} if they intersect at a vertex of $\Z^d$.
A \textbf{configuration} on $\V$ is a function $\sigma:\V\rightarrow\set{0,1}$ with the property that no two dimers in $\sigma^{-1}(\set{1})$ are incident.
We identify the function $\sigma$ with the set $\sigma^{-1}(\set{1})$.
Denote the set of all configurations on $\V$ by $\Omega$.
Equip $\Omega$ with the subspace topology inherited from the product topology on $\set{0,1}^\V$ and with the corresponding Borel measurable structure; denote the resulting $\sigma$-algebra by $\sF$.

Let $\Lambda\subset\R^d$ be a bounded region.
We introduce restricted sets of configurations given by various boundary conditions on $\Lambda$ (or, more precisely, on $\V\cap\Lambda$):
\begin{enumerate}
    \item \emph{Prescribed boundary conditions}.
    Let $\rho\in\Omega$.
    Define the set of configurations in $\Lambda$ with $\rho$-boundary condition as
    \begin{equation}
        \Omega^\rho_\Lambda:=\set{\sigma\in\Omega\mid\text{for all }e\in\V\setminus\Lambda,\,\sigma(e)=\rho(e)}.
    \end{equation}
    \item \emph{Periodic boundary conditions}.
    Suppose that $\Lambda$ is a box.
    Define the set of $\Lambda$-periodic configurations as
    \begin{equation}
        \Omega^\per_\Lambda:=\set[\big]{\sigma\in\Omega\mid\text{for all }e\in\V\text{ and }1\le i\le d,\,
    \sigma(e)=\sigma(e+\Side_i(\Lambda)\vec{e}_i)},
    \end{equation}
    where $\Side_i(\Lambda)$ is the $i$th side length of $\Lambda$ and $\vec{e}_i$ is the $i$th standard basis vector of $\R^d$.
\end{enumerate}

Finally, given a set $A\subset\R^d$ and a measurable function $f:\Omega\to\R$, we say that $f$ is \textbf{$A$-local} if the value of $f$ depends only on the restriction of a configuration to $A\cap\V$.

\subsection{The Hamiltonian and Gibbs measures}
\label{sec:the Hamiltonian}

Let $\sigma\in\Omega$, and recall that we have identified $\sigma\equiv\sigma^{-1}(\set{1})$.
An edge $e\in\V$ is a \textbf{link} of $\sigma$ if its endpoints are incident to parallel dimers in $\sigma$ which are both perpendicular to $e$.
Let $\beta>0$ be the inverse temperature, $\mu\in\R$ be the (dimer) chemical potential, and $b>0$ be the strength of attraction associated with each link.
In the remainder of the paper, we will consider $\mu$ and $b$ to be fixed such that $\mu+2(d-1)b>0$ and $(4d-6)b>\mu$, and suppress them from the notation.
The formal Hamiltonians of the monomer-dimer models studied in this paper, referred to as Models II ($d=2$) and V ($d=3$) by Heilmann--Lieb in~\cite{heilmann1979lattice}, have the common form
\begin{equation}
    H(\sigma):=-\mu\abs{\sigma}-b\sum_{e\in\V}\indicator{e\text{ is a link of }\sigma},
\end{equation}
where $\abs{\sigma}$ is the number of dimers in the configuration $\sigma$.

In this paper, we study these models in the perturbative regime $\beta\gg1$.
To exploit the largeness of $\beta$, it is advantageous to shift the Hamiltonian by a constant multiple of the volume, as exemplified in the following formal rewriting, whose adaptation to finite volumes is straightforward:
\begin{equation}
    \label{eqn:shifted Hamiltonian monomer-dimer}
    \begin{multlined}
        \frac{1}{2}\brackets[\big]{\mu+2(d-1)b}\bars[\big]{\Z^d}+H(\sigma)
        \\
        =\frac{1}{2}\brackets[\big]{\mu+2(d-1)b}(\bars[\big]{\Z^d}-2\abs{\sigma})
        +\frac{1}{2}b\sum_{e\in\V}\parentheses[\Bigg]{\sum_{e'\in\sigma}\indicator{e'\cap e\ne\emptyset,\,e'\perp e}-2\cdot\indicator{e\text{ is a link of }\sigma}}.
    \end{multlined}
\end{equation}
Define a \textbf{broken link} as an edge $e\in\V$ such that either
\begin{enumerate}
    \item \label{itm: broken link multiplicity 1} exactly one endpoint of $e$ is incident to a dimer $e'\in\sigma$ perpendicular to $e$, or
    \item \label{itm: broken link multiplicity 2} the edge $e$ is incident to two (distinct) nonparallel dimers $e_1,e_2\in\sigma$, both perpendicular to $e$.
\end{enumerate}
Define the \textbf{multiplicity} of the broken link $e\in\V$ as
\begin{equation}
    m_\sigma(e):=\begin{cases}
        1 & \text{in Case~\ref{itm: broken link multiplicity 1}} \\
        2 & \text{in Case~\ref{itm: broken link multiplicity 2}}
    \end{cases},
\end{equation}
and set $m_\sigma(e):=0$ if $e$ is not a broken link of $\sigma$.
Thus, the terms on the RHS of~\eqref{eqn:shifted Hamiltonian monomer-dimer} have the following intuitive interpretations:
\begin{enumerate}
    \item $\abs{\Z^d}-2\abs{\sigma}$ is the number of vacant vertices, or \textbf{vacancies} for short, which are vertices of $\Z^d$ not incident to any dimer in $\sigma$.
    \item The summand $\sum_{e'\in\sigma}\indicator{e'\cap e\ne\emptyset,\,e'\perp e}-2\cdot\indicator{e\text{ is a link of }\sigma}$ is nonzero if and only if $e\in\V$ is a broken link of $\sigma$, and, if so, its value is precisely the multiplicity of that broken link.
\end{enumerate}

For clarity, we recast~\eqref{eqn:shifted Hamiltonian monomer-dimer} systematically using the standard notion of potentials (see~\cite[Section 6.3.2]{friedli2017statistical} or~\cite[(2.2)]{georgii2011gibbs}).
Let $B\subset\V$ be bounded (or $B\Subset\V$ for short).
\begin{enumerate}
    \item If $B=\set{e\mid e\text{ is incident to }v}$ for some vertex $v\in\Z^d$, define the $2d$-body vacancy potential
    \begin{equation}
        \label{eqn:vacancy potential}
        \Phi_B(\sigma):=\frac{1}{2}\brackets[\big]{\mu+2(d-1)b}\indicator{v\text{ is a vacancy of }\sigma},
    \end{equation}
    which outputs $\brackets[\big]{\mu+2(d-1)b}/2$ if and only if $v$ is a vacancy of $\sigma$.
    \item If $B=\set{e'\in\V\mid e'\cap e\ne\emptyset}$ for some edge $e\in\V$ (so that, in particular, $e\in B$), define the $(4d-1)$-body broken-link potential
    \begin{equation}
        \label{eqn:broken-link potential}
        \Phi_B(\sigma):=\frac{m_\sigma(e)}{2}b.
    \end{equation}
    \item Otherwise, set $\Phi_B(\sigma):=0$.
\end{enumerate}
Thus, the shifted formal Hamiltonian can be expressed as
\begin{equation}
    \sum_{B\Subset\V}\Phi_B(\sigma),
\end{equation}
whose adaptation to finite volumes is standard (see~\cite[(6.24)]{friedli2017statistical} or~\cite[(2.3)]{georgii2011gibbs}): if $\Lambda\subset\R^d$ is bounded and $\#$ denotes a boundary condition for $\Lambda$, then the finite-volume Hamiltonian for configurations in $\Lambda$ with $\#$-boundary condition is
\begin{equation}
    \label{eqn:finite-volume Hamiltonian}
    H^\#_{\Lambda}(\sigma):=\sum_{\substack{B\Subset\V\\ B\cap\Lambda\ne\emptyset}}\Phi_B(\sigma).
\end{equation}
As an exception, we note that if $\#$ denotes the periodic boundary condition, then the sum over $B\Subset\V$ is instead restricted to a complete set of representatives of the equivalence classes of all $B\Subset\V$ under translations generated by $\set{\Side_i(\Lambda)\vec{e}_i\mid 1\le i\le d}$.

\paragraph{Gibbs measures}

Let $\Lambda\subset\R^d$ be bounded and $\#$ denote a boundary condition on $\Lambda$.
The weight of a configuration $\sigma\in\Omega^\#_\Lambda$ is defined as
\begin{equation}
    w^\#_{\Lambda;\beta}(\sigma):=e^{-\beta H^\#_{\Lambda}(\sigma)}.
\end{equation}
Hence, each vacancy carries a penalty factor of $e^{-\frac{1}{2}\beta[\mu+2(d-1)b]}$, whereas each broken link carries a penalty factor of $e^{-\frac{1}{2}\beta b m_\sigma}$.
The finite-volume Gibbs measure on $\Lambda$ with $\#$-boundary condition is given by
\begin{equation}
    \P^\#_{\Lambda;\beta}(\sigma):=\frac{w^\#_{\Lambda;\beta}(\sigma)}{Z^\#_{\Lambda;\beta}}\quad\text{for }\sigma\in\Omega^\#_\Lambda,
\end{equation}
where $Z^\#_{\Lambda;\beta}:=\sum_{\sigma\in\Omega^\#_\Lambda}w^\#_{\Lambda;\beta}(\sigma)$ is the partition function.
Furthermore, given $E\subset\Omega$, we define the $E$-constrained partition function as
\begin{equation}
    Z^\#_{\Lambda;\beta}(E):=\sum_{\sigma\in\Omega^\#_\Lambda\cap E}w^\#_{\Lambda;\beta}(\sigma).
\end{equation}
Lastly, we define infinite-volume Gibbs measures via the standard Dobrushin--Lanford--Ruelle formalism; see~\cite[Section 6.2.1]{friedli2017statistical} or~\cite[(2.9)]{georgii2011gibbs} for details.

\subsection{Reflection positivity and the chessboard estimate}

In this subsection, we formulate the reflection-positivity of the monomer-dimer models and use it to set up the chessboard estimate.
We will not give any proof below because the results presented here are direct adaptations of those in~\cite[Section~3]{hadas2026columnar} and~\cite[Section~2.3]{he2025extended} and the proofs transfer with little modification.

We begin with some notation.
Let $R=\rect{L_1\times \dots\times L_d}{z}$ be a box.
Define the \textbf{grid} of $R$ as $G^R:=(z_1+L_1\Z)\times\dots\times(z_d+L_d\Z)$, and denote its origin-shifted version by $\cL^R:=L_1\Z\times\dots\times L_d\Z$.
Let $T^R$ be the group generated by the reflections of $R$ across the axis-aligned hyperplanes that pass through points in $G^R$.
Thus, for each $v\in G^R$, there exists a unique isometry in $T^R$ that maps $R$ to $\rect{L_1\times \dots\times L_d}{v}$; we denote this isometry by $\tau_{R,v}$.
Given $f:\Omega\to\R$ and $\tau\in T^R$, define $(\tau f)(\sigma):=f(\sigma\circ \tau)$.
Given a Gibbs measure $\P$ and an isometry $\tau:\R^d\to\R^d$, denote by $\tau\P$ the push-forward of $\P$ under $\tau$.
Given $z\in\Z^d$, let $\eta_{z}:\R^d\to\R^d$ denote the translation $v\mapsto v+z$.

It turns out that the monomer-dimer models considered in this paper are \emph{reflection-positive} with respect to reflections across the axis-aligned hyperplanes bisecting edges of $\Z^d$~\cite[Section~3]{heilmann1979lattice}; for a proof of this fact in the setting of an analogously shifted Hamiltonian for the only other monomer-dimer model introduced in~\cite{heilmann1979lattice}, see~\cite[Proof of Lemma~2.2]{he2025extended}.

\begin{lemma}[Reflection positivity]
    Let $R=\rect{L_1\times \dots\times L_d}{z}$ be a box and $f:\Omega\to\R$ be $R$-local.
    If
    \begin{equation}
        (\Lambda,\tau)\in
        \set[\big]{(\mathrm{R}_{L_1,\dots,L_{i-1},2L_i,L_{i+1},\dots,L_d},\tau_{R,z+L_i\vec{e}_i})\mid 1\le i\le d},
    \end{equation}
    then
    \begin{equation}
        \P_{\Lambda;\beta}^\per(f\cdot\tau f)\ge 0.
    \end{equation}
\end{lemma}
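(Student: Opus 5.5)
The plan is the standard reflection-positivity argument: factor the periodic weight across the reflecting hyperplane and write the expectation as a sum of squares. Fix $i$ and set $\Lambda=\mathrm{R}_{L_1,\dots,2L_i,\dots,L_d}$ with the periodic boundary condition. Let $\tau$ be the reflection across the hyperplane $P=\set{x_i=z_i+L_i}$. On the torus this map also reflects across the antipodal hyperplane $P'=\set{x_i=z_i}$, since $z_i\in 1/2+\Z$. Both hyperplanes bisect the $x_i$-oriented edges crossing them and contain no vertices of $\Z^d$. Split $\V$ into three sets: $\V_-$ (edges in $R$), $\V_+=\tau\V_-$ (edges in $\tau R$), and $\V_0$ (the $x_i$-oriented edges bisected by $P$ or $P'$).

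The first step is a decomposition of the Hamiltonian. I would show that, for every $\sigma\in\Omega^\per_\Lambda$,
\begin{equation*}
    H^\per_\Lambda(\sigma)=A(\sigma)+(\tau A)(\sigma)+\sum_{k}C_k(\sigma)\,(\tau C_k)(\sigma),
\end{equation*}
where $A$ depends only on $\sigma|_{\V_-\cup\V_0}$, each $C_k$ depends only on $\sigma|_{\V_-\cup\V_0}$, and the cross-terms enter $-\beta H$ with nonnegative coefficients. The natural way to organize this is to condition on $\eta:=\sigma|_{\V_0}$, which is fixed by $\tau$. The vacancy potentials live at vertices, and each vertex lies strictly on one side, so its indicator is a function of $\sigma|_{\V_\pm}$ and $\eta$ alone. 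The hard-core constraint across $P$ also involves only one side together with $\eta$.

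The main obstacle is the broken-link potential at edges crossing $P$. Let $e\in\V_0$ have endpoints $v\in R$ and $\tau v$, and let $n_-(v)$ be the direction of the dimer at $v$ perpendicular to $e$ (or $\emptyset$ if there is none), with $n_+(\tau v)$ defined likewise. Reflection across $P$ preserves directions perpendicular to $x_i$, so $n_+(\tau v)=n_-(v)\circ\tau$. The multiplicity $m_\sigma(e)$ is zero if both sides are empty or both carry the same direction, one if exactly one side is occupied, and two if the directions differ. Equivalently,
\begin{equation*}
    m_\sigma(e)=\sum_{j\ne i}\bigl(\indicator{n_-=j}+\indicator{n_+=j}\bigr)-2\sum_{j\ne i}\indicator{n_-=j}\indicator{n_+=j}.
\end{equation*}
This is exactly the Hamiltonian rewriting~\eqref{eqn:shifted Hamiltonian monomer-dimer} specialized to $e$. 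The first sum splits as a function of $\sigma|_{\V_-}$ plus its $\tau$-image. The second sum appears in $-\beta H$ as $+\beta b\sum_{j}F_j\,\tau F_j$ with $F_j=\indicator{n_-(v)=j}$ an $R$-local function. The same holds at $P'$. Edges of $\V_0$ that are themselves dimers force $n_\pm=\emptyset$ at their endpoints, so the case $e\in\sigma$ is consistent with this formula.

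To conclude, expand $\exp\{\beta b\sum_k F_k\,\tau F_k\}$ as a power series. Every term is a nonnegative multiple of a product $G\cdot\tau G$ with $G$ being $R$-local (given $\eta$). For fixed $\eta$ the configuration space factorizes as a product of the two sides, so
\begin{equation*}
    Z\,\P^\per_{\Lambda;\beta}(f\cdot\tau f)=\sum_\eta\sum_{\text{terms}}c\,\Bigl(\sum_{\sigma_-}f\,G\,e^{-\beta A}\Bigr)^2\ge0.
\end{equation*}
Here I use $\tau$-invariance of $\eta$ and the fact that, given $\eta$, the sum over $\sigma_+$ equals the corresponding sum over $\sigma_-$ after applying $\tau$. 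I expect the most care to be needed in checking the small cases $L_i=1$, or $L_i=2$ where $P$ and $P'$ are close, since there the sets $\V_\pm$ and the potentials' supports need to be verified to still separate as claimed. This mirrors~\cite[Proof of Lemma~2.2]{he2025extended}.
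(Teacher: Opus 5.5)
Your proposal is correct and is essentially the standard edge-bisecting reflection-positivity argument that the paper does not write out but defers to (Heilmann--Lieb, Section~3, and \cite[Proof of Lemma~2.2]{he2025extended}). You condition on the $\tau$-fixed crossing edges, note that the vacancy terms, the hard-core constraints and the linear parts of the crossing broken-link terms are one-sided, and observe that the link term $-2n_j(v)n_j(\tau v)$ contributes $+\beta b\sum_j F_j\,\tau F_j$ to $-\beta H$, whose exponential expands into nonnegative combinations of squares.

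Two small remarks. Your displayed decomposition has a sign slip: it should read $H^\per_\Lambda=A+\tau A-\sum_k c_k\,C_k\,\tau C_k$ with $c_k\ge 0$, which is the form your later explicit computation actually uses. The small cases $L_i\in\{1,2\}$ need no special care, since no vertex lies on $P$ or $P'$ and the crossing terms at $P$ and at $P'$ each have the same $F\,\tau F$ form.
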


If $\Lambda$ is a box such that $2\Side_i(R)\mid\Side_i(\Lambda)$ for all $1\le i\le d$, then we say that $R$ is a \textbf{block} of $\Lambda$.
In that case, set $T^R_\Lambda:=T^R/\cL^\Lambda$.
Given an $R$-local function $f:\Omega\to\R$, define its \textbf{$(R,\Lambda)$-chessboard seminorm} by 
\begin{equation}
    \norm{f}_{R\mid\Lambda}:=\P_{\Lambda;\beta}^\per\parentheses[\Bigg]{\prod_{\tau\in T_\Lambda^R}\tau f}^{1/\abs{T_\Lambda^R}};
\end{equation}
the name ``seminorm'' is justified by the fact that $\norm{\cdot}_{R\mid\Lambda}$ indeed satisfies the standard properties of a seminorm~\cite[Proposition 3.3]{hadas2026columnar}.
A standard consequence of reflection-positivity is the \emph{chessboard estimate}, which asserts that the expectation of a product of local functions of a particular form under $\P_{\Lambda;\beta}^\per$ factorizes as an upper bound into the product of their chessboard seminorms.

It is straightforward to extend the chessboard estimate to apply to finite products of \textit{periodic} Gibbs measures of the monomer-dimer model by adapting the argument of Hadas--Peled~\cite[Proof of Proposition 3.8]{hadas2026columnar}, which has been done in~\cite{he2025extended} for a monomer-dimer model;
here, we say that a Gibbs measure is \textbf{periodic} if it is invariant under translations by a full-rank sublattice of $\Z^d$.
The results are as follows.

Let $k\ge 1$.
An element $(\sigma_i)_{i=1}^k\in\Omega^k$ may be thought of as a joint configuration that naturally specifies a mapping $\V\to\set{0,1}^k$ via $e\mapsto(\sigma_i(e))_{i=1}^k$.
Thus, given a box $R$, the notion of $R$-locality and transformation under each reflection $\tau\in T^R$ extend naturally to functions $\Omega^k\rightarrow\R$.
We equip $\Omega^k$ with the product $\sigma$-algebra $\sF^{\otimes k}$, and use the symbol $\otimes$ to denote the product of probability measures on $(\Omega,\sF)$.

Let $\Lambda$ be a box, $R$ be a block of $\Lambda$, and $f:\Omega^k\to\R$ be $R$-local.
Define
\begin{equation}
    \norm{f}_{R\mid\Lambda}:=(\otimes_{i=1}^k \P_{\Lambda;\beta}^\per)\parentheses[\Bigg]{\prod_{\tau\in T_\Lambda^R}\tau f}^{1/\abs{T_\Lambda^R}}.
\end{equation}
We note that the product measure $\otimes_{i=1}^k\P_{\Lambda;\beta}^\per$ remains reflection-positive with respect to reflections across the axis-aligned hyperplanes bisecting edges of $\Z^d$, so the quantity above satisfies the same seminorm properties as before.
Define the \textbf{$R$-chessboard seminorm} of $f$ by
\begin{equation}
    \label{eqn:infinite-volume chessboard seminorm}
    \norm{f}_R:=\limsup_{n\to\infty}\norm{f}_{R\mid\mathrm{R}_{n!\times \dots\times n!}}.
\end{equation}

\begin{proposition}[Infinite-volume chessboard estimate]
    \label{prop: infinite-volume chessboard estimate}
    Let $R$ be a box. 
    Let $A\subset T^R$ be finite and $(f_\tau)_{\tau\in A}$ be $R$-local functions from $\Omega^k$ to $\R$.
    Then, given periodic Gibbs measures $(\P_i)_{i=1}^k$,
    \begin{equation}
        (\otimes_{i=1}^k \P_i)\parentheses[\Bigg]{\prod_{\tau\in A}\tau f_\tau}\le\prod_{\tau\in A}\norm{f_\tau}_R.
    \end{equation}
\end{proposition}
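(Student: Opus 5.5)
The plan is the Hadas--Peled route: first prove the estimate for the finite-volume periodic measures $\otimes_{i=1}^k\P^\per_{\Lambda;\beta}$ on tori $\Lambda=\mathrm{R}_{n!\times\dots\times n!}$, then pass to arbitrary periodic Gibbs measures by a limiting argument that uses periodicity and ergodic averaging.

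\emph{Finite-volume chessboard estimate.} Fix $\Lambda=\mathrm{R}_{n!\times\dots\times n!}$ with $n$ large, so that $R$ is a block of $\Lambda$ and the images $\tau R$, $\tau\in A$, are distinct modulo $\cL^\Lambda$. The Lemma (reflection positivity) applies to the product measure, since each factor is reflection-positive and products of reflection-positive states on the same reflection are reflection-positive. The standard argument (Fröhlich--Israel--Lieb--Simon, as in~\cite[Proposition 3.3]{hadas2026columnar}) then gives
\begin{equation}
    (\otimes_{i=1}^k \P_{\Lambda;\beta}^\per)\parentheses[\Bigg]{\prod_{\tau\in A}\tau f_\tau}\le\prod_{\tau\in A}\norm{f_\tau}_{R\mid\Lambda}.
\end{equation}
This proceeds in three steps.
\begin{enumerate}
    \item Set $f_\tau\equiv1$ for $\tau\in T^R_\Lambda\setminus A$.
    \item Apply the Cauchy--Schwarz inequality $\abs{\langle F\,\vartheta G\rangle}^2\le\langle F\,\vartheta F\rangle\langle G\,\vartheta G\rangle$ repeatedly across the reflection hyperplanes bisecting edges in each coordinate direction, doubling the side length at each step.
    \item After $\log_2$ of the number of blocks per direction in each direction, every factor is replaced by a full chessboard product of a single function, which yields the seminorms $\norm{f_\tau}_{R\mid\Lambda}$.
\end{enumerate}
Non-power-of-two block counts are handled as in the standard literature: by the $n!$ choice, or by monotonicity in the number of copies together with the limsup. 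Taking $\limsup_{n\to\infty}$ gives the bound for the measures $\otimes_i\P^\per_{\Lambda;\beta}$ with RHS $\prod_\tau\norm{f_\tau}_R$, provided we also know that $\norm{f}_{R\mid\Lambda}$ is bounded along the sequence. Boundedness follows from boundedness of $f$, since $f$ is local.

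\emph{Extension to periodic Gibbs measures.} Let $\P_1,\dots,\P_k$ be periodic, all invariant under a common full-rank sublattice $M\mathbb{Z}^d$ (take least common multiples). The main obstacle is that we cannot directly compare $\P_i$ with torus measures. Following~\cite[Proof of Proposition 3.8]{hadas2026columnar}:
\begin{enumerate}
    \item Consider the functional $F=\prod_{\tau\in A}\tau f_\tau$ and the spatial average $F_N:=\abs{T_N}^{-1}\sum_{z}\eta_z F$ over translates $z\in\mathcal{L}^{R}$ in a large window $T_N$. Choose $R$-compatible translations, i.e.\ multiples of $2\Side_i(R)$ and of $M$, so that each translate is again of the form $\prod\tau' f_\tau$.
    \item By invariance, $(\otimes\P_i)(F)=(\otimes\P_i)(F_N)$.
    \item Use the DLR equations in a box $\Lambda_N$ much larger than $T_N$. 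Write $\otimes\P_i$ as a mixture over boundary conditions of finite-volume Gibbs measures. Then compare each such measure, for functions supported deep inside, with the periodic torus measure on $\Lambda_N$, via a bound of the form $w^\rho_{\Lambda}\le e^{\beta C\abs{\partial\Lambda}}w^\per_{\Lambda}$ up to matching boundary configurations. This holds because the potentials are finite-range and bounded.
    \item Surface effects cost only a factor $e^{O(\beta\abs{\partial\Lambda_N})}$. These are washed out by taking the $\abs{T_N}$-th root: apply the torus chessboard bound to the product over all translates, which is a large chessboard product, and use Hölder/AM--GM to relate the average $F_N$ to the product.
\end{enumerate}
The key step here, and the most delicate, is arranging the product over translates so that its expectation factorizes into $\prod_\tau\norm{f_\tau}_{R\mid\Lambda_N}^{\abs{T_N}}$ while the boundary error stays $e^{o(\abs{T_N})}$. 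To do this we would take $\Lambda_N$ of side $N!$ and $T_N$ covering all but a thin shell.

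Letting $N\to\infty$ and taking the limsup completes the proof. Nonnegativity issues are avoided by first reducing to $f_\tau\ge 0$, or by using the seminorm's absolute-value control as in~\cite{hadas2026columnar}.
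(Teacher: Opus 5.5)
Your finite-volume step is the standard torus chessboard estimate. The paper itself gives no proof of this proposition and defers to the argument of Hadas--Peled~\cite[Proof of Proposition~3.8]{hadas2026columnar}. The gap is in the passage to a general periodic Gibbs measure, at exactly the step you flag.

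Write $\P:=\otimes_{i=1}^k\P_i$ and $F:=\prod_{\tau\in A}\tau f_\tau$. Bounding $\P\bigl(\prod_{z\in T_N}\eta_zF\bigr)$ by $e^{O(\abs{\partial\Lambda_N})}\prod_{\tau\in A}\norm{f_\tau}_{R\mid\Lambda_N}^{\abs{T_N}}$ is the easy part; the problem is getting back to $\P(F)=\P(F_N)$. AM--GM gives $\prod_{z}(\eta_zF)^{1/\abs{T_N}}\le F_N$, and Hölder bounds $\P(\prod_z\eta_zF)$ from above, so both inequalities point the wrong way. Invariance alone cannot supply the reverse inequality you need. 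Nothing prevents far-apart translates of $F$ from being negatively correlated under $\P$, and then $\P(\prod_{z\in T_N}\eta_zF)^{1/\abs{T_N}}$ stays well below $\P(F)$. For torus measures this is excluded by $\P^\per_{\Lambda;\beta}(G\cdot\vartheta G)\ge\P^\per_{\Lambda;\beta}(G)^2$, but that inequality comes from reflection positivity, which a general periodic Gibbs measure need not have.

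What is missing is control of products over \emph{arbitrary subcollections} of translates, together with a device that turns it into a bound on $\P(F)$. One route that uses your ingredients:
\begin{enumerate}
    \item Reduce to $0\le f_\tau\le 1$ by homogeneity.
    \item Take the $z$ in a sparse sublattice of the period lattice contained in $2\cL^R$, so that distinct $z$ give disjoint families of blocks.
    \item Use Jensen: $\P(F)^m=\P(F_N)^m\le\P(F_N^m)$.
    \item Expand $F_N^m$ over $m$-tuples $(z_1,\dots,z_m)\in T_N^m$ and use $F^j\le F$. Each term is then at most $\P\bigl(\prod_{z\in S}\eta_zF\bigr)\le e^{C\abs{\partial\Lambda_N}}q_N^{\abs{S}}$, where $S:=\{z_1,\dots,z_m\}$ and $q_N:=\prod_{\tau\in A}\norm{f_\tau}_{R\mid\Lambda_N}\le 1$.
    \item Choose $\abs{\partial\Lambda_N}\ll m\ll\abs{T_N}$. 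The boundary factor disappears after taking $m$-th roots. A birthday bound gives $\abs{S}\ge(1-\delta)m$ outside an event of probability at most $(em/(\delta\abs{T_N}))^{\delta m}$.
\end{enumerate}
This yields $\P(F)\le\limsup_N q_N^{1-\delta}$ for every $\delta>0$, and letting $\delta\to 0$ gives the claim. Ergodic decomposition, the ergodic theorem, and a Chernoff-type bound built from the torus estimates for all sub-products would also work.

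Two smaller points:
\begin{enumerate}
    \item The DLR comparison controls only nonnegative functions, so your reduction to $f_\tau\ge 0$ is not automatic for signed $f_\tau$. The paper only ever applies the estimate to indicators.
    \item The comparison must be made for normalized measures on interior events despite the hard-core constraint. It works here because emptying a boundary shell is always admissible and all shifted potentials are nonnegative and bounded, not merely because the potentials are finite-range.
\end{enumerate}
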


\begin{proposition}[Recursive chessboard estimate in infinite volume]
    \label{prop: infinite-volume recursive chessboard estimate}
    Let $R$ and $S$ be boxes, and suppose that the corners of $S$ are in $G^R$.
    Let $A\subset T^R$ be such that $\bigcup_{\tau\in A}\tau R\subset S$ and $(f_\tau)_{\tau\in A}$ be $R$-local functions.
    Then,
    \begin{equation}
        \norm{\prod_{\tau\in A}\tau f_\tau}_{S}\le\prod_{\tau\in A}\norm{f_\tau}_{R}.
    \end{equation}
\end{proposition}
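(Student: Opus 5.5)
The plan is to deduce the recursive estimate from Proposition~\ref{prop: infinite-volume chessboard estimate} applied at the finite-volume level, working inside the boxes $\Lambda_n:=\mathrm{R}_{n!\times\dots\times n!}$ and then taking $\limsup$. Fix $n$ large enough that both $R$ and $S$ are blocks of $\Lambda_n$; this holds for all large $n$ because the side lengths of $R$ and $S$ are fixed integers. The function $F:=\prod_{\tau\in A}\tau f_\tau$ is $S$-local, since each $\tau f_\tau$ is $\tau R$-local and $\tau R\subset S$. By definition,
\begin{equation*}
\norm{F}_{S\mid\Lambda_n}^{\abs{T^S_{\Lambda_n}}}=(\otimes_{i=1}^k\P^\per_{\Lambda_n;\beta})\Bigl(\prod_{\tau'\in T^S_{\Lambda_n}}\tau' F\Bigr).
\end{equation*}

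The key combinatorial step is to rewrite the inner product as a product over reflections of $R$. Since the corners of $S$ lie in $G^R$, the side lengths of $S$ are integer multiples of those of $R$. Hence every hyperplane used to generate $T^S$ lies in the family generating $T^R$, so $T^S\subset T^R$ and the copies $\tau'\tau R$ with $\tau'\in T^S$, $\tau\in A$ are translates or reflections of $R$ through $G^R$. Each $\tau'\tau$ is therefore an element of $T^R$, and different pairs $(\tau',\tau)$ modulo $\cL^{\Lambda_n}$ give distinct elements of $T^R_{\Lambda_n}$, because the boxes $\tau' S$ tile $\Lambda_n$ (viewed as a torus) and the $\tau R$ for $\tau\in A$ are distinct boxes inside $S$. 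So
\begin{equation*}
\prod_{\tau'}\tau'F=\prod_{(\tau',\tau)}(\tau'\tau)f_\tau,
\end{equation*}
a product over a subset $B\subset T^R_{\Lambda_n}$ of size $\abs{T^S_{\Lambda_n}}\abs{A}$, with function $g_\sigma:=f_\tau$ attached to $\sigma=\tau'\tau$. Pad the remaining elements of $T^R_{\Lambda_n}$ with the constant function $1$. Because $1$ may not dominate $f_\tau$, I will not pad this way. Instead I will use the finite-volume chessboard estimate with general functions attached to each block, $\P(\prod_{\sigma\in T^R_{\Lambda_n}}\sigma g_\sigma)\le\prod_\sigma\norm{g_\sigma}_{R\mid\Lambda_n}$, noting that $\norm{1}_{R\mid\Lambda_n}=1$. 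This is the standard consequence of reflection positivity (Hadas--Peled, Prop.~3.4 analogue), and it holds for the product measure as the paper notes.

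Combining these gives
\begin{equation*}
\norm{F}_{S\mid\Lambda_n}^{\abs{T^S_{\Lambda_n}}}\le\prod_{\tau\in A}\norm{f_\tau}_{R\mid\Lambda_n}^{\abs{T^S_{\Lambda_n}}}.
\end{equation*}
Taking $\abs{T^S_{\Lambda_n}}$-th roots and then $\limsup_n$, using that the $\limsup$ of a finite product of nonnegative sequences is at most the product of the $\limsup$s, yields the claim. If some $\norm{f_\tau}_R$ is infinite or the product involves a $0\cdot\infty$ issue, this is excluded by the boundedness of local functions on the finite volume.

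I expect the main obstacle to be the bookkeeping in the combinatorial step. One must check that the composition $\tau'\tau$ acts on $f_\tau$ with the correct orientation, since reflections compose to possibly reflected copies rather than translates. One must also check that the map $(\tau',\tau)\mapsto\tau'\tau$ is injective modulo $\cL^{\Lambda_n}$. Both follow from $T^S$ being a subgroup of $T^R$ and from $\tau R$ determining $\tau$ within $T^R$, so that the labeling of each copy of $R$ by an element of $T^R$ is unique.
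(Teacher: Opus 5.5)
Your argument is correct and is essentially the standard proof that the paper defers to (Hadas--Peled, Section~3): expand $\|\prod_{\tau\in A}\tau f_\tau\|_{S\mid\Lambda_n}^{|T^S_{\Lambda_n}|}$ as a product of the $f_\tau$ placed at the distinct elements $\tau'\tau\in T^R_{\Lambda_n}$, apply the finite-volume chessboard estimate, take roots, and pass to the $\limsup$ using the uniform boundedness of local functions. The only blemish is your hesitation about padding with the constant function $1$: padding leaves the product unchanged and $\|1\|_{R\mid\Lambda_n}=1$, so no domination is needed, and it is exactly what you end up doing (also, the estimate you use is the finite-volume chessboard estimate rather than Proposition~\ref{prop: infinite-volume chessboard estimate} itself).
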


\subsection{Strongly percolating sets}
\label{sec:strongly percolating sets}

A recurrent theme in our analysis is to take a local property of configurations and quantify its prevalence in typical configurations via percolation-type arguments.
In this section, we import the notion of strongly percolating sets and their properties from~\cite[Section 6]{hadas2026columnar}, which are well-suited for this purpose.
While the results in~\cite[Section 6]{hadas2026columnar} are formulated in two dimensions, they extend naturally to all dimensions $d\ge 2$, so we will not belabor their proofs here.

We start with some notation.
Let $E\subset\Omega$ be an event, which should be thought of as a local property of configurations.
Define the random set $\es{E}:\Omega\to 2^{\Z^d}$ by 
\begin{equation}
    \es{E}(\sigma):=\set{z\in\Z^d\mid\sigma\in\eta_{z}E},
\end{equation}
where we recall that $\eta_{z}$ denotes the translation by $z$.
Thus, $\es{E}$ is the random set of locations at which the local property $E$ holds.
Given $L_1,\dots,L_d\in\Z_{>0}$, define the random set
\begin{equation}
    \xre{L_1\times \dots \times L_d}{\es{E}}:=\set{z\in\Z^d\mid(L_1z_1,\dots,L_dz_d)\in\es{E}},
\end{equation}
which tracks the manifestation of the local property $E$ on the scale of the rectangular grid $L_1\Z\times \dots \times L_d\Z$.
The above definitions generalize naturally to \emph{joint} properties of multiple, independently sampled configurations, i.e., events $E\subset\Omega^k$, $k\ge 2$.

We now define strongly percolating sets.
Let $\eps_d$ be a small constant whose value we do not determine; see~\cite[Proof of Lemma~6.2]{hadas2026columnar} for its instance in two dimensions.
Namely, all the results in this subsection should be understood to hold for a sufficiently small constant $\eps_d>0$ which depends only on the dimension $d\ge 2$.
Let $\eps\ge 0$.
A random set $B\subset\Z^d$ on a probability space $(\Omega,\sF,\P)$ is \textbf{$\eps$-rare} if, for all finite $A\subset\Z^d$, $\P(A\subset B)\le\eps^{\abs{A}}$.
A random set $\Psi\subset\Z^d$ on $(\Omega,\sF,\P)$ is \textbf{$\eps$-strongly percolating} if either $\eps\ge\eps_d$ or there exists an $\eps$-rare set $B$ such that $\Psi$ $\P$-almost surely contains an infinite $\Box$-component of $\Z^d\setminus B$.
Define
\begin{equation}
    \label{eqn: strong percolation parameter}
    \peierls{\P}(\Psi):=\inf\set{\eps\ge 0\mid\Psi\text{ is }\eps\text{-strongly percolating}}.
\end{equation}
We mention here that the quantity $\peierls{\P}(\Psi)$ may be thought of as a measure of the size of $\Psi$, with smaller values indicating larger random sets; see~\cite[Section 6.1]{hadas2026columnar} for further discussion.

\begin{lemma}[{\cite[Lemma 6.2]{hadas2026columnar}}]
    \label{lem:Peierls argument}
    If $\eps<\eps_d$ and $B$ is $\eps$-rare with respect to a probability measure $\P$, then $\Z^d\setminus B$ $\P$-almost surely contains a unique infinite $\Box$-component $I$.
    Moreover, each $\boxtimes$-component of $\Z^d\setminus I$ is finite.
\end{lemma}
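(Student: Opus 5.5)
The plan is a Peierls-type counting argument in which the $\eps$-rarity of $B$ replaces independence. I take the two assertions in order.

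\emph{Existence of an infinite component.} Fix a vertex $v\in\Z^d$. If $v$ is not in an infinite $\Box$-component of $\Z^d\setminus B$ (for instance because $v\in B$, or because the $\Box$-component $C$ of $v$ in $\Z^d\setminus B$ is finite), then some finite set separates $v$ from infinity and lies in $B$. Concretely, when $C$ is finite its outer vertex boundary $\partial^{\mathrm{out}}C$ lies entirely in $B$. By the boundary-connectivity result of~\cite{timar2013boundary}, the outer boundary of the unbounded component of $\Z^d\setminus C$ is $\boxtimes$-connected. It therefore contains a finite $\boxtimes$-connected set $\Gamma\subset B$ that surrounds $v$; when $v\in B$ we take $\Gamma=\{v\}$.
\begin{itemize}
    \item The number of $\boxtimes$-connected sets of size $n$ that surround $v$ is at most $n\cdot K_d^{\,n}$, with $K_d=3^d\cdot e$, say. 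Such a set must meet the ray from $v$ within distance $n$, and $\boxtimes$-animals of size $n$ through a fixed point number at most $(e\,(3^d-1))^n$.
    \item By $\eps$-rarity, each fixed $\Gamma$ satisfies $\P(\Gamma\subset B)\le\eps^{\abs{\Gamma}}$.
\end{itemize}
A union bound then gives
\[
\P(v\text{ is not in an infinite component})\le\sum_{n\ge1}nK_d^n\eps^n,
\]
which is less than $1$ as soon as $\eps_d$ is small enough. In fact the same bound shows $\P(\text{some }\Gamma\subset B\text{ surrounds }v\text{ and has }\abs{\Gamma}\ge N)\to0$ as $N\to\infty$. Summing over starting points $v$ in a large box, Borel--Cantelli shows that, almost surely, only finitely many surrounding contours enclose the origin. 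This yields existence of an infinite component almost surely, not merely with positive probability.

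\emph{Uniqueness.} Suppose $I_1$ and $I_2$ are two distinct infinite $\Box$-components of $\Z^d\setminus B$. Take a large box $\Lambda_N$ meeting both. Then the $\Box$-component of $I_1\cap\Lambda_N$ in $\Z^d\setminus I_2$ is separated from $I_2$ by a $\boxtimes$-connected subset of $B$. Because $d\ge2$, this separating set must be infinite, or at least have size growing with $N$ when it is intersected with an annulus, since it has to cut $\Z^d$ into two pieces each containing an infinite set. The main obstacle sits here: the Timár boundary theorem applies to finite components, so it must be adapted to the infinite case. I would handle this by considering, for each $N$, the $\boxtimes$-connected piece of the interface between $I_1$ and $I_2$ that passes within distance $N$ of the origin. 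This piece contains a $\boxtimes$-path in $B$ of length at least $N$ starting in $\Lambda_N$. The expected number of such paths is at most $\abs{\Lambda_N}(3^d\eps)^N\to0$ as $N\to\infty$, which is a contradiction.

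\emph{Finiteness of $\boxtimes$-components of $\Z^d\setminus I$.} An infinite $\boxtimes$-component $D$ of $\Z^d\setminus I$ would contain, by the same boundary argument applied to the $\Box$-components of $\Z^d\setminus D$ adjacent to $I$, arbitrarily long $\boxtimes$-connected sets of $B$-vertices. Equivalently, it would contain infinite $\boxtimes$-paths in $B$ passing near the origin. The path count above rules these out almost surely, since the expected number of $\boxtimes$-self-avoiding paths of length $n$ in $B$ from a fixed point is at most $(3^d\eps)^n$. Summing over the starting point in a box of side $n$ and applying Borel--Cantelli completes the argument.
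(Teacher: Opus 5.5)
Your plan is the standard Peierls argument, and the paper gives no proof of its own (it imports the lemma from Hadas--Peled). The genuine gap is in your uniqueness step. You assert that Timár's theorem covers only finite sets, and then claim that the $\boxtimes$-connected piece of the $I_1$--$I_2$ interface near the origin contains a $\boxtimes$-path of length at least $N$. Nothing you wrote excludes that this piece is a small finite set. The claim is exactly the connectivity you said was unavailable, so the step is circular as written.

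The obstacle is not real, however. Timár's lemma holds for any $\Box$-connected $C$ and any $\Box$-component $D$ of $\Z^d\setminus C$, finite or not. Its proof closes two boundary edges into a finite cycle through $C$ and $D$, writes that cycle as a mod-$2$ sum of plaquettes, and uses that each plaquette meets the cut $\partial_E D$ in an even number of edges; no finiteness enters. Take $C=I_2$ and let $D$ be the component of $\Z^d\setminus I_2$ containing $I_1$. The vertices of $D$ adjacent to $I_2$ lie in $B$ and form a $\boxtimes$-connected set. This set is infinite: it separates $I_1$ from $I_2$, and a finite set leaves only one infinite component of $\Z^d$ for $d\ge 2$. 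By K\"onig's lemma and your path count, a.s.\ $B$ contains no infinite $\boxtimes$-connected set, which gives uniqueness.

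The last assertion needs the same input plus a chaining step that ``the same boundary argument'' skips. Apply the lemma with $C=I$ to each $\Box$-component $D_j$ of $\Z^d\setminus I$. You must use $\Box$-adjacency to $I$ here, since a vertex merely $\boxtimes$-adjacent to $I$ need not lie in $B$. Suppose $D_j\ne D_k$ contain $\boxtimes$-adjacent points $x,y$. Then every $\Box$-path from $x$ to $y$ inside their unit cube must hit $I$. The vertex just before it first enters $I$ lies in the inner boundary of $D_j$, and the vertex just after it last leaves $I$ lies in that of $D_k$. Both are in the cube, so they are $\boxtimes$-adjacent. Hence an infinite $\boxtimes$-component of $\Z^d\setminus I$ yields an infinite $\boxtimes$-connected subset of $B$: either some $D_j$ is infinite (its boundary is then infinite, as in the uniqueness step), or infinitely many nonempty boundaries are chained. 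This has probability zero.

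Finally, in the existence part, your deduction from ``no large $B$-contour surrounds the origin'' to ``an infinite component exists'' is true but unstated. Suppose every $\Box$-component of $\Z^d\setminus B$ were finite. Then the union of $\Lambda_n$ with the finitely many components meeting $\Lambda_n$ or its outer vertex boundary is finite and $\Box$-connected. Its exterior boundary is a $\boxtimes$-connected subset of $B$ surrounding $\Lambda_n$, hence of size larger than $n$, for every $n$. Monotonicity in $N$ suffices here; Borel--Cantelli is not needed.
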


\begin{lemma}[{\cite[Corollary 6.3]{hadas2026columnar}}]
    \label{lem:strong percolation of complement of rare set}
    If $\eps\ge 0$ and $B$ is $\eps$-rare with respect to a probability measure $\P$, then $\Z^d\setminus B$ is $\eps$-strongly percolating, and in particular $\peierls{\P}(\Z^d\setminus B)\le\eps$.
\end{lemma}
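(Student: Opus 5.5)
The plan is to derive the statement directly from the definitions of $\eps$-rare and $\eps$-strongly percolating sets together with Lemma~\ref{lem:Peierls argument}, splitting into the two cases that appear in the definition.

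\emph{Case $\eps\ge\eps_d$.} Here $\Z^d\setminus B$ is $\eps$-strongly percolating by definition, so there is nothing to prove.

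\emph{Case $\eps<\eps_d$.}
\begin{enumerate}
    \item Since $B$ is $\eps$-rare with $\eps<\eps_d$, Lemma~\ref{lem:Peierls argument} gives that $\P$-almost surely $\Z^d\setminus B$ contains a unique infinite $\Box$-component $I$.
    \item To check the definition of strong percolation, we may use the same rare set $B$ as the witness. Then $\Psi:=\Z^d\setminus B$ trivially contains $I$, which is an infinite $\Box$-component of $\Z^d\setminus B$, almost surely.
    \item Hence $\Z^d\setminus B$ is $\eps$-strongly percolating.
\end{enumerate}

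Finally, $\peierls{\P}(\Z^d\setminus B)\le\eps$ follows immediately from the definition~\eqref{eqn: strong percolation parameter} as an infimum, since $\eps$ belongs to the set over which the infimum is taken.

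The only genuine input is the Peierls-type Lemma~\ref{lem:Peierls argument}, which supplies the existence of an infinite component; everything else is unwinding the definitions. If there is any subtlety, it lies in the measurability of the event ``$\Psi$ contains an infinite $\Box$-component of $\Z^d\setminus B$''. This is routine, because the event is a countable union over finite paths and base points.
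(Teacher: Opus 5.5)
Your proof is correct and is essentially the standard one. The paper simply imports this statement from Hadas--Peled, where it is recorded as an immediate corollary of the Peierls lemma (Lemma~\ref{lem:Peierls argument}): the case $\eps\ge\eps_d$ holds by definition, and for $\eps<\eps_d$ you take $B$ itself as the witnessing rare set, exactly as you do. One small quibble with your closing aside: the event that $\Z^d\setminus B$ has an infinite $\Box$-component is not a countable union over finite paths but rather $\bigcup_{v}\bigcap_{n}\{v \text{ is joined to distance } n \text{ in } \Z^d\setminus B\}$; this is harmless, since Lemma~\ref{lem:Peierls argument} already supplies the almost-sure statement you need.
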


\begin{lemma}[{\cite[Lemma 6.4]{hadas2026columnar}}]
    \label{lem:quantitative Peierls argument}
    Let $u\in\Z^d$, $\ell\ge 0$, and $\Psi\subset\Z^d$ be a random set on a probability space $(\Omega,\sF,\P)$.
    Let $E$ be the event that there exists a $\boxtimes$-path in $\Z^d\setminus\Psi$ starting at $u$ and ending at some point in $\set{v\in\Z^d\mid\norm{v-u}_\infty\ge \ell}$.
    Then, 
    \begin{equation}
        \P(E)\le\left(\frac{\peierls{\P}(\Psi)}{\eps_d}\right)^{\ell+1}.
    \end{equation}
\end{lemma}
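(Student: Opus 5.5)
The plan is a Peierls-type counting argument combined with the Markov inequality, in the spirit of Hadas--Peled's original proof. Set $\eps:=\peierls{\P}(\Psi)$. If $\eps\ge\eps_d$ the right-hand side is at least $1$ and there is nothing to prove, so assume $\eps<\eps_d$. Fix any $\eps'$ with $\eps<\eps'<\eps_d$. By the definition of $\peierls{\P}$, $\Psi$ is $\eps'$-strongly percolating, so there is an $\eps'$-rare set $B$ such that $\Psi$ almost surely contains an infinite $\Box$-component $I$ of $\Z^d\setminus B$. By Lemma~\ref{lem:Peierls argument}, $I$ is almost surely the unique such component, and every $\boxtimes$-component of $\Z^d\setminus I$ is finite. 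On the event $E$, the $\boxtimes$-path from $u$ to distance $\ell$ avoids $\Psi\supset I$. So $u$ lies in a finite $\boxtimes$-component $K$ of $\Z^d\setminus I$ whose $\ell^\infty$-diameter is at least $\ell$.

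The key step is to show that such a $K$ forces many points of $B$ near $u$. The outer $\Box$-boundary of $K$ lies in $I$, so the inner vertex boundary $\partial K$ consists of points adjacent to $I$. Every such point that is not in $B$ would be $\Box$-connected to $I$ and hence lie in $I$. Hence $\partial K\subset B$. The set $\partial K$ is $\boxtimes$-connected; this is the boundary-connectivity result of Tim\'ar, for which the $\boxtimes$ connectivity was introduced. Since $K$ has $\ell^\infty$-extent at least $\ell$ in some coordinate, $\partial K$ contains at least $\ell+1$ points. Moreover, $\partial K$ meets the half-line from $u$ in the direction $\vec{e}_1$ within distance comparable to its size.

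Next I would bound probabilities by a union bound. Enumerate the $\boxtimes$-connected sets $A\subset\Z^d$ with $\abs{A}=n\ge \ell+1$ that meet a prescribed starting point. There are at most $C_d^n$ such sets, where $C_d$ depends only on $d$ (lattice-animal counting in the $\boxtimes$ graph of degree $3^d-1$). The starting point is one of at most about $n$ sites on the ray from $u$. By $\eps'$-rareness, $\P(A\subset B)\le(\eps')^{n}$. Summing gives
\begin{equation}
    \P(E)\le\sum_{n\ge\ell+1} n\,C_d^n(\eps')^n.
\end{equation}
If $\eps_d$ is chosen small enough in terms of $C_d$, say $\eps_d\le 1/(4C_d)$, this is at most $(\eps'/\eps_d)^{\ell+1}$. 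Here one uses $n(C_d\eps')^n\le (\eps'/\eps_d)^n\,(nC_d\eps_d)^n$ and absorbs the factor $n$ with room to spare. Letting $\eps'\downarrow\eps$ gives the claim.

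The main obstacle I expect is the geometric step $\partial K\subset B$ with $\partial K$ being $\boxtimes$-connected and of size at least $\ell+1$. This requires carefully citing the boundary-connectivity theorem in dimension $d$ for finite $\boxtimes$-components whose complement contains an infinite $\Box$-connected set. The remaining work is adjusting constants so that the final bound has exactly the form $(\eps/\eps_d)^{\ell+1}$; this is where the freedom to take $\eps_d$ small depending only on $d$ is used.
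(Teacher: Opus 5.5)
Your argument is correct and is the standard Peierls/lattice-animal argument behind the cited Hadas--Peled lemma; the paper gives no proof of its own and only asserts that the result extends to $d\ge 2$, with Tim\'ar's boundary connectivity playing the role of planar duality, as in your proposal. Two small repairs are needed: first, $K$ is a $\boxtimes$-component, i.e.\ a union of $\boxtimes$-adjacent $\Box$-components of $\Z^d\setminus I$, so apply Tim\'ar to each $\Box$-component separately and observe that two $\boxtimes$-adjacent ones have $\boxtimes$-adjacent boundary points (look inside the unit cube spanned by an adjacent pair); second, your displayed inequality $n(C_d\eps')^n\le(\eps'/\eps_d)^n(nC_d\eps_d)^n$ is true but its right-hand side is not summable, so use instead $n(C_d\eps')^n\le n4^{-n}(\eps'/\eps_d)^{\ell+1}$ for $n\ge\ell+1$ together with $\sum_{n\ge1}n4^{-n}=4/9<1$.
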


\begin{lemma}[{\cite[Lemma 6.6]{hadas2026columnar}}]
    \label{lem:intersection of strongly percolating sets}
    For all $k\ge 1$ and random sets $\Psi_1,\dots,\Psi_k\subset\Z^d$ on a common probability space $(\Omega,\sF,\P)$,
    \begin{equation}
        \peierls{\P}\parentheses[\Bigg]{\bigcap_{i=1}^k\Psi_i}\le k\sqrt[k]{\max_{1\le i\le k}\peierls{\P}(\Psi_i)}.
    \end{equation}
\end{lemma}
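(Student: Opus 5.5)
The plan is to reduce the statement to the $\eps$-rare case via the definition and then run a union bound over subsets. Set $\eps_i:=\peierls{\P}(\Psi_i)$ and $\eps_*:=\max_i\eps_i$. If $k\sqrt[k]{\eps_*}\ge\eps_d$ there is nothing to prove, since by definition every random set is $\eps$-strongly percolating for $\eps\ge\eps_d$. So assume $k\sqrt[k]{\eps_*}<\eps_d$. Then each $\eps_i<\eps_d$ (taking $\eps_d\le 1$, so $\eps_i\le\sqrt[k]{\eps_i}$). Fix any $\eps'$ with $\eps_*<\eps'<\eps_d$ and $k\sqrt[k]{\eps'}<\eps_d$. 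Because the infimum in~\eqref{eqn: strong percolation parameter} is attained from above, each $\Psi_i$ is $\eps'$-strongly percolating. Hence there are $\eps'$-rare sets $B_i$ such that $\Psi_i$ almost surely contains an infinite $\Box$-component of $\Z^d\setminus B_i$. At the end I will let $\eps'\downarrow\eps_*$.

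Next I take $B:=\bigcup_i B_i$ and show it is $k\sqrt[k]{\eps'}$-rare. For finite $A\subset\Z^d$, the event $A\subset B$ forces a partition of $A$ into sets $A_1,\dots,A_k$ with $A_i\subset B_i$; for instance, assign each point to the least index covering it. Some $A_i$ then has $\abs{A_i}\ge\abs{A}/k$, and so
\begin{equation}
\P(A\subset B)\le\sum_{(A_1,\dots,A_k)}\min_i\eps'^{\abs{A_i}}\le k^{\abs{A}}\eps'^{\abs{A}/k}=\parentheses[\big]{k\sqrt[k]{\eps'}}^{\abs{A}},
\end{equation}
where the sum runs over the $k^{\abs{A}}$ assignments.

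The step I expect to be the main obstacle is showing that $\bigcap_i\Psi_i$ contains an infinite $\Box$-component of $\Z^d\setminus B$. Since $k\sqrt[k]{\eps'}<\eps_d$ and $\eps'<\eps_d$, Lemma~\ref{lem:Peierls argument} gives a unique infinite $\Box$-component $I$ of $\Z^d\setminus B$, and likewise a unique infinite component $I_i$ of each $\Z^d\setminus B_i$. By uniqueness, the component of $\Z^d\setminus B_i$ contained in $\Psi_i$ must be $I_i$, so $\Psi_i\supset I_i$. Since $\Z^d\setminus B\subset\Z^d\setminus B_i$, the connected infinite set $I$ lies inside an infinite component of $\Z^d\setminus B_i$, which must be $I_i$. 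Therefore $I\subset\bigcap_i I_i\subset\bigcap_i\Psi_i$ almost surely.

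It follows that $\bigcap_i\Psi_i$ is $k\sqrt[k]{\eps'}$-strongly percolating. Letting $\eps'\downarrow\eps_*$ gives the claimed bound.
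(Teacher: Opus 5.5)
Your proof is correct and takes essentially the same approach as the Hadas--Peled lemma that the paper cites without proof: you show $\bigcup_i B_i$ is $k\sqrt[k]{\epsilon'}$-rare by a union bound over the $k^{\abs{A}}$ assignments of points of $A$, and then use the uniqueness from Lemma~\ref{lem:Peierls argument} to place the infinite component of $\Z^d\setminus\bigcup_i B_i$ inside each $I_i\subset\Psi_i$. The one step worth stating explicitly is that $\epsilon$-strong percolation is monotone in $\epsilon$ (an $\epsilon$-rare set is $\epsilon'$-rare for $\epsilon'\ge\epsilon$), which is what actually justifies your claim that each $\Psi_i$ is $\epsilon'$-strongly percolating whenever $\epsilon'>\peierls{\P}(\Psi_i)$.
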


\begin{lemma}[{\cite[Lemma 6.7]{hadas2026columnar}}]
    \label{lem:strongly percolating grid events}
    Let $E\subset\Omega$ be an event.
    Let $l_1,\dots,l_d,L_1,\dots,L_d\in\Z_{>0}$ be such that $l_i\mid L_i$ for all $1\le i\le d$.
    Denote $H:=\set{\eta_{z}\mid z\in(l_1\Z\times\dots\times l_d\Z)\cap([0,L_1)\times\dots\times[0,L_d))}$ and $r:=\abs{H}=\frac{L_1\dots L_d}{l_1\dots l_d}$.
    Denote $\Psi:=\xre{l_1\times\dots\times l_d}{\es{E}}$, $\Psi'_\eta:=\xre{L_1\times\dots\times L_d}{\es{\eta E}}$ for each $\eta\in H$, and $\Psi':=\bigcap_{\eta\in H}\Psi'_\eta$.
    Then,
    \begin{enumerate}[label=(\roman*),ref=\thelemma(\roman*)]
        \item \label{itm:strongly percolating grid events - subgrid over grid} $\peierls{\P}(\Psi')\le r\peierls{\P}(\Psi)$.
        \item $\peierls{\P}(\Psi)\le \sqrt[r]{\peierls{\P}(\Psi')}$.
        \item \label{itm:strongly percolating grid events - grid over subgrid}  $\peierls{\P}(\Psi)\le \sqrt[r]{r\sqrt[r]{\max_{\eta\in H}\peierls{\P}(\Psi'_\eta)}}$.
        \item \label{itm:strongly percolating grid events - chessboard bound} If $E\subset\Omega$ is $R$-local for $R=\rect{L_1\times \dots\times L_d}{z}$ and is invariant under reflections through all $d$ axis-aligned hyperplanes passing through the center of $R$, then, for all periodic Gibbs measures $\P$,
        \begin{equation}
            \peierls{\P}(\Psi)\le\sqrt[r]{\norm{\Omega\setminus E}_R}.
        \end{equation}
    \end{enumerate}
\end{lemma}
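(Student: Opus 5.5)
The four parts follow directly from the definitions of $\eps$-rare and $\eps$-strongly percolating sets. Parts (i) and (ii) come from a coarse-graining/refinement comparison of bad sets, part (iii) from combining (ii) with Lemma~\ref{lem:intersection of strongly percolating sets}, and part (iv) from the chessboard estimate together with Lemma~\ref{lem:strong percolation of complement of rare set}. Throughout, the trivial cases are dispatched first: whenever the right-hand side is at least $\eps_d$, the claim holds by definition. The first thing to do is unwind $\Psi'$. Since $\sigma\in\eta_{(L_1z_1,\dots,L_dz_d)}\eta_w E$ means $(L_1z_1+w_1,\dots)\in\es{E}$, we have $z\in\Psi'$ if and only if every point of the $l$-grid lying in the block $Q_z:=(L_1z_1,\dots,L_dz_d)+[0,L_1)\times\dots\times[0,L_d)$ corresponds to a point of $\Psi$. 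After rescaling by $l$, the blocks $Q_z$ partition $\Z^d$ into boxes of $r$ points each.

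\emph{Part (i).} Suppose $\Psi$ is $\eps$-strongly percolating with witness $B$, where $\eps\,r<\eps_d$. Set $B':=\set{z\mid Q_z\text{ meets }B}$ (in rescaled coordinates). For finite $A$, we have $A\subset B'$ only if we can choose one point of $B$ in each $Q_z$, $z\in A$. A union bound over these $r^{\abs{A}}$ choices, using that distinct $z$ give disjoint blocks, yields $\P(A\subset B')\le (r\eps)^{\abs{A}}$. By Lemma~\ref{lem:Peierls argument}, $\Z^d\setminus B'$ has an infinite $\Box$-component $I'$. The union of the blocks $Q_z$, $z\in I'$, is an infinite $\Box$-connected subset of $\Z^d\setminus B$. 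It therefore lies in the unique infinite component, which is contained in $\Psi$, and hence $I'\subset\Psi'$.

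\emph{Part (ii).} Conversely, given a witness $B'$ for $\Psi'$ at level $\eps'$, set $B:=\bigcup_{z\in B'}Q_z$. A finite set $A\subset B$ meets at least $\abs{A}/r$ distinct blocks, all of which must lie in $B'$. Hence $\P(A\subset B)\le\eps'^{\abs{A}/r}$, so $B$ is $\sqrt[r]{\eps'}$-rare. The infinite component of $\Z^d\setminus B'$ blows up to an infinite connected union of blocks inside $\Z^d\setminus B$. By the description of $\Psi'$ above, this union is contained in $\Psi$.

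\emph{Part (iii).} This is immediate from (ii) combined with Lemma~\ref{lem:intersection of strongly percolating sets} applied to the $r$ sets $\Psi'_\eta$.

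\emph{Part (iv).} Let $A\subset\Z^d\setminus\Psi$ be finite. Split $A$ according to the residue of the corresponding point modulo the $L$-grid; some class $A_w$ has at least $\abs{A}/r$ elements. The points in $A_w$ correspond to translates of the shifted box $\eta_w R$ by vectors of $\cL^R$. Because $\Omega\setminus E$ is invariant under the reflections through the centre of $R$, each such translate of $\indicator{\Omega\setminus E}$ equals $\tau\indicator{\Omega\setminus\eta_wE}$ for the corresponding $\tau\in T^{\eta_wR}$. Proposition~\ref{prop: infinite-volume chessboard estimate} then gives
\begin{equation}
    \P(A\subset\Z^d\setminus\Psi)\le\norm{\Omega\setminus \eta_wE}_{\eta_wR}^{\abs{A_w}}.
\end{equation}
Translation invariance of the periodic torus measures gives $\norm{\Omega\setminus\eta_wE}_{\eta_wR}=\norm{\Omega\setminus E}_R$, and the seminorm of an indicator is at most $1$, so the bound above is at most $\norm{\Omega\setminus E}_R^{\abs{A}/r}$. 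Thus $\Z^d\setminus\Psi$ is $\sqrt[r]{\norm{\Omega\setminus E}_R}$-rare, and Lemma~\ref{lem:strong percolation of complement of rare set} finishes the proof.

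The main technical point is this last step: checking that translates of $R$ by the $L$-grid are indeed images under $T^R$, up to the symmetry of $E$, so that the chessboard estimate applies without any loss.
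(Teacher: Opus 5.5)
Your proof is correct and follows the standard argument of \cite[Lemma~6.7]{hadas2026columnar}, to which the paper defers without reproducing a proof: you coarse-grain the witness set for (i), blow it up for (ii), combine (ii) with Lemma~\ref{lem:intersection of strongly percolating sets} for (iii), and for (iv) pigeonhole over the $r$ residue classes before applying the chessboard estimate on a single shifted grid, which is what avoids the extra loss that (iii) would incur. One small point in (ii) should be made explicit, because the definition requires $\Psi$ to contain an \emph{entire} infinite $\Box$-component of $\Z^d\setminus B$: this holds because $B$ is a union of whole blocks, and distinct blocks $Q_z,Q_{z'}$ contain $\Box$-adjacent points exactly when $z,z'$ are $\Box$-adjacent, so the blow-up of $I'$ is itself such a component.
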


\begin{lemma}[{\cite[Lemma 2.18]{he2025extended}}]
    \label{lem:strongly percolating grid events - chessboard bound for product measures}
    In the setting of Lemma~\ref{lem:strongly percolating grid events}, if $E\subset\Omega^k$ is $R$-local for $R=\rect{L_1\times \dots\times L_d}{z}$ and is invariant under reflections through the axis-aligned hyperplanes passing through the center of $R$, then, given periodic Gibbs measures $(\P_i)_{i=1}^k$,
        \begin{equation}
            \peierls{\otimes_{i=1}^k\P_i}(\Psi)\le \sqrt[r]{\norm{\Omega^k\setminus E}_R}.
        \end{equation}
\end{lemma}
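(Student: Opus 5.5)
The plan is to follow the proof of Lemma~\ref{itm:strongly percolating grid events - chessboard bound} in~\cite{hadas2026columnar}. The only change is to replace the single-measure chessboard estimate by its product-measure version, Proposition~\ref{prop: infinite-volume chessboard estimate}, applied to the $k$-fold product $\otimes_{i=1}^k\P_i$ and to functions on $\Omega^k$. Set $B:=\Z^d\setminus\Psi$ and $\eps:=\sqrt[r]{\norm{\Omega^k\setminus E}_R}$. I will show that $B$ is $\eps$-rare under $\otimes_{i=1}^k\P_i$. Lemma~\ref{lem:strong percolation of complement of rare set} then gives $\peierls{\otimes_{i=1}^k\P_i}(\Psi)\le\eps$, which is the claim.

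Fix a finite $A\subset\Z^d$. By definition, $z\in B$ if and only if the joint configuration lies in $\eta_{(l_1z_1,\dots,l_dz_d)}(\Omega^k\setminus E)$.

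\textbf{Step 1 (pigeonhole).} Partition $A$ according to the residues of $(l_1z_1,\dots,l_dz_d)$ modulo $L_1\Z\times\dots\times L_d\Z$. There are exactly $r$ residue classes, so some class $A'\subset A$ satisfies $\abs{A'}\ge\abs{A}/r$.

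\textbf{Step 2 (translations become reflections).} Fix $w$ in the residue class of $A'$. Let $R':=\eta_wR$ and $E':=\eta_wE$. Then $E'$ is $R'$-local and invariant under reflections through the axis-aligned hyperplanes passing through the center of $R'$. For $z\in A'$, the point $v:=(l_1z_1,\dots,l_dz_d)$ differs from $w$ by an element of $\cL^R$, so $\eta_vR=\tau_{R',u}R'$ for the appropriate $u\in G^{R'}$. Moreover $\tau_{R',u}$ and the translation $\eta_{v-w}$ differ only by a composition of reflections about the center of $R'$. By the reflection invariance of $E'$, this gives $\indicator{\eta_v(\Omega^k\setminus E)}=\tau_{R',u}\indicator{\Omega^k\setminus E'}$. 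This identification is the only point requiring care; it is exactly where the symmetry hypothesis on $E$ enters.

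\textbf{Step 3 (chessboard bound).} Using $A\subset B\Rightarrow A'\subset B$ and Proposition~\ref{prop: infinite-volume chessboard estimate} for the distinct reflections $\tau_{R',u}$ indexed by $A'$, we get
\begin{equation}
(\otimes_{i=1}^k\P_i)(A\subset B)\le(\otimes_{i=1}^k\P_i)\parentheses[\Bigg]{\prod_{z\in A'}\tau_{R',u(z)}\indicator{\Omega^k\setminus E'}}\le\norm{\Omega^k\setminus E'}_{R'}^{\abs{A'}}.
\end{equation}
By translation invariance of the periodic measures $\P^\per_\Lambda$ defining the seminorm, $\norm{\Omega^k\setminus E'}_{R'}=\norm{\Omega^k\setminus E}_R$.

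\textbf{Step 4 (conclusion).} Since indicator functions have seminorm at most $1$ and $\abs{A'}\ge\abs{A}/r$, the last bound is at most $\eps^{\abs{A}}$. Hence $B$ is $\eps$-rare, which completes the proof via Lemma~\ref{lem:strong percolation of complement of rare set}.

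The main (mild) obstacle is Step 2: checking that translating by multiples of the side lengths of $R$ coincides, on the reflection-symmetric event, with the reflections $\tau_{R',u}$. One must also verify that the product measure inherits the reflection positivity used in Proposition~\ref{prop: infinite-volume chessboard estimate}, which was noted earlier in the paper.
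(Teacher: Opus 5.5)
Your proposal is correct and follows essentially the intended route. The paper gives no proof of its own: it cites \cite[Lemma~2.18]{he2025extended}, whose argument is the proof of Lemma~\ref{itm:strongly percolating grid events - chessboard bound} from \cite{hadas2026columnar} with the single-measure chessboard estimate replaced by its product version, Proposition~\ref{prop: infinite-volume chessboard estimate}. Your ingredients are the same as in that argument: the pigeonhole over the $r$ residue classes, the identification of $\cL^R$-translates with the reflections $\tau_{R',u}$ via the central reflection symmetry of $E$, the translation invariance of the seminorm, and the final appeal to Lemma~\ref{lem:strong percolation of complement of rare set}.
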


\begin{lemma}[{\cite[Proposition 6.8]{hadas2026columnar}}]
    \label{lem:splitting strongly percolating sets}
    Let $\P$ be an $\cL$-ergodic probability measure on $\Omega$ for some full-rank sublattice $\cL\subset L_1\Z\times \dots\times L_d\Z$.
    Let $E,F\subset\Omega$ be events.
    If
    \begin{equation}
        \begin{multlined}
            E\cap F=E\cap\eta_{L_i\vec{e}_i}F=\eta_{L_i\vec{e}_i}E\cap F=\emptyset\quad\text{for all }1\le i\le d,
        \end{multlined}
    \end{equation}
    then 
    \begin{equation}
        \min\set{\peierls{\P}(\xre{L_1\times\dots\times L_d}{\es{E}}),\peierls{\P}(\xre{L_1\times\dots\times L_d}{\es{F}})}\le\peierls{\P}(\xre{L_1\times\dots\times L_d}{\es{E\cup F}}).
    \end{equation}
\end{lemma}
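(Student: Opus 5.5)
The idea is that the disjointness hypotheses force every $\Box$-component of $\Psi:=\xre{L_1\times\dots\times L_d}{\es{E\cup F}}$ to lie entirely in $\Psi_E:=\xre{L_1\times\dots\times L_d}{\es{E}}$ or entirely in $\Psi_F:=\xre{L_1\times\dots\times L_d}{\es{F}}$. Combined with uniqueness of the infinite cluster (Lemma~\ref{lem:Peierls argument}) and ergodicity, this lets us transfer strong percolation of $\Psi$ to one of the two pieces.

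First, the trivial case. By definition every random set is $\eps_d$-strongly percolating, so $\peierls{\P}(\cdot)\le\eps_d$ always. Hence if $\peierls{\P}(\Psi)\ge\eps_d$ there is nothing to prove. Assume from now on that $\eps_0:=\peierls{\P}(\Psi)<\eps_d$.

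Next, the combinatorial observation. A point $z\in\Psi$ lies in $\Psi_E\cup\Psi_F$. It cannot lie in both, since $E\cap F=\emptyset$ gives $\Psi_E\cap\Psi_F=\emptyset$. Now let $z,z+\vec{e}_i$ both lie in $\Psi$. We cannot have $z\in\Psi_E$ and $z+\vec{e}_i\in\Psi_F$: translating by $-L_iz_i$-type shifts, this would mean $\sigma\in\eta_{w}(E\cap\eta_{L_i\vec{e}_i}F)$ for $w=(L_1z_1,\dots,L_dz_d)$, which is empty. The symmetric case is excluded by $\eta_{L_i\vec{e}_i}E\cap F=\emptyset$. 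Therefore each $\Box$-component of $\Psi$ is contained in $\Psi_E$ or in $\Psi_F$.

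Now we show that $\Psi$ has a unique infinite component. Fix $\eps'\in(\eps_0,\eps_d)$. Then there is an $\eps'$-rare set $B$ such that $\Psi$ almost surely contains an infinite $\Box$-component of $\Z^d\setminus B$. By Lemma~\ref{lem:Peierls argument}, this component is the unique infinite component $I$, and all $\boxtimes$-components of $\Z^d\setminus I$ are finite. Suppose some infinite $\Box$-component $K$ of $\Psi$ were disjoint from $I$. Then $K$ would be an infinite, in particular $\boxtimes$-connected, subset of $\Z^d\setminus I$, which is impossible. So every infinite component of $\Psi$ meets $I\subset\Psi$. Since $I$ is connected, every infinite component of $\Psi$ equals the component $K_\ast$ of $\Psi$ containing $I$. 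Thus almost surely $\Psi$ has exactly one infinite $\Box$-component $K_\ast$, and $K_\ast$ is determined by $\sigma$ alone, independently of $B$ and $\eps'$.

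Then we apply the 0--1 law. Let $A_E$ be the event that $K_\ast\subset\Psi_E$, and define $A_F$ analogously. By the previous two steps, $A_E$ and $A_F$ are disjoint and $\P(A_E\cup A_F)=1$. Translation by $\ell\in\cL\subset L_1\Z\times\dots\times L_d\Z$ shifts $\Psi,\Psi_E,\Psi_F$ by $(\ell_1/L_1,\dots,\ell_d/L_d)\in\Z^d$. Hence $A_E$ and $A_F$ are $\cL$-invariant, and by ergodicity each has probability $0$ or $1$. Without loss of generality $\P(A_E)=1$. Then for every $\eps'\in(\eps_0,\eps_d)$ and the corresponding $B$, almost surely $I\subset K_\ast\subset\Psi_E$. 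So $\Psi_E$ is $\eps'$-strongly percolating. Letting $\eps'\downarrow\eps_0$ gives $\peierls{\P}(\Psi_E)\le\eps_0$, which proves the claim.

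The main subtlety is the uniqueness step. Without it, $\Psi$ could have infinite components in both $\Psi_E$ and $\Psi_F$, and the choice of which piece percolates could depend on $B$. This is exactly where the finiteness of $\boxtimes$-components of $\Z^d\setminus I$ from Lemma~\ref{lem:Peierls argument} is used.
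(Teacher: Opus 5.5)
Your proof is correct and complete. The paper does not reprove this lemma but imports it from \cite{hadas2026columnar}, and your route is the standard argument for it: non-adjacency of $\Psi_E$ and $\Psi_F$ from the three disjointness hypotheses, uniqueness of the infinite $\Box$-component of $\Psi$ via Lemma~\ref{lem:Peierls argument}, then $\cL$-ergodicity applied to the $B$-independent invariant events $A_E,A_F$. The only step you leave implicit is that $\eps$-strong percolation is monotone in $\eps$ (an $\eps''$-rare set is $\eps'$-rare whenever $\eps''\le\eps'$), which is what gives you a rare set $B$ for every $\eps'\in(\eps_0,\eps_d)$.
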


\section{The one-dimensional monomer-dimer system}
\label{sec:one-dimensional systems}

In this section, using the transfer matrix method, we compute several quantities related to the partition function of the one-dimensional monomer-dimer system.

The one-dimensional monomer-dimer system considered here is the classical one in which dimers interact only via a hard-core repulsion.
The lateral attraction seen in the higher-dimensional models is instead encoded in a $d$-dependent, \textit{effective} (dimer) chemical potential, $\mu_\eff:=\mu+4(d-1)b$.
To understand this choice, we first observe that, as in~\eqref{eqn:shifted Hamiltonian monomer-dimer}, the one-dimensional model thus defined is equivalently realized by the (formal) shifted Hamiltonian
\begin{equation}
    \label{eqn: shifted Hamiltonian one-dimensional monomer-dimer}
    \frac{1}{2}\mu_\eff\abs{\Z}+H(\sigma)
    =\frac{1}{2}\mu_\eff(\abs{\Z}-2\abs{\sigma}).
\end{equation}
That is, the only nonzero energetic contribution to the shifted Hamiltonian comes from the vacancies, each associated with an additive amount of
\begin{equation}
    \frac{1}{2}\mu_\eff
    =\frac{1}{2}\brackets[\big]{\mu+2(d-1)b}+2(d-1)\cdot\frac{1}{2}b,
\end{equation}
which is the maximum possible energetic penalty associated with a vacancy in the $d$-dimensional model among configurations in which all dimers have the same orientation (cf.~\eqref{eqn:shifted Hamiltonian monomer-dimer}).
Hence, multiplying the one-dimensional partition functions derived here over parallel columns in $d$ dimensions yields lower bounds on the corresponding $d$-dimensional partition functions, restricted to configurations in which all dimers have the same orientation.

\subsection{The model and its exact solution}

We now formally introduce the one-dimensional monomer-dimer system.
Consider $\Z$ in its usual graphical representation as an infinite path, denote its edge set by $\V^{\oned}$, and identify each edge of this path with its midpoint in $\R$.
Given $L\ge 1$, define the one-dimensional box $\mathrm{R}_L:=[-1/2,L-1/2]\subset\R$.
Configurations on $\V^\oned$ and boundary conditions on $\mathrm{R}_L$ are defined as in higher dimensions and denoted using the same notation with the superscript $\oned$ added; as an exception, the finite-volume Hamiltonians and Gibbs measures are adapted from~\eqref{eqn: shifted Hamiltonian one-dimensional monomer-dimer} in the natural way.

To construct the transfer matrix, we picture a sliding window of zero width, centered at each edge of $\V^\oned$, moving from left to right in steps of unit length.
There are exactly two possible states within the window: an edge is either occupied by a dimer or not.
Denote these states by $\ket{\times}$ and $\ket{\circ}$, respectively.
In the $(\ket{\times},\ket{\circ})$-basis, the transfer matrix is
\begin{equation}
    \label{eqn: transfer matrix}
    T=\begin{bmatrix}
        0 & 1 \\
        1 & e^{-\frac{1}{2}\beta\mu_\eff}
    \end{bmatrix}
    ,
\end{equation}
which has eigenvalues
\begin{equation}
    \lambda_1=\frac{1}{2}\parentheses[\Big]{e^{-\frac{1}{2}\beta\mu_\eff}+\sqrt{4+e^{-\beta\mu_\eff}}},
    \quad
    \lambda_2=\frac{1}{2}\parentheses[\Big]{e^{-\frac{1}{2}\beta\mu_\eff}-\sqrt{4+e^{-\beta\mu_\eff}}}.
\end{equation}

Before proceeding, we introduce a mesoscopic length scale that will play a recurrent role in this paper.
Recall that the \textbf{correlation length}~\cite[Section 15.1]{takahashi1999thermodynamics} of the one-dimensional monomer-dimer system is defined as 
\begin{equation}
    \xi:=\frac{1}{\log\abs{\lambda_1/\lambda_2}}=\frac{1}{\log(1+e^{-\frac{1}{2}\beta\mu_\eff}+\order{e^{-\beta\mu_\eff}})}.
\end{equation}
We use the leading-order term of $\xi$ to define the length scale
\begin{equation}
    \ell_0:=e^{\frac{1}{2}\beta\mu_\eff}
    =e^{\frac{1}{2}\beta\brackets{\mu+4(d-1)b}}.
\end{equation}

\subsection{Elementary estimates}
\label{sec:one-dimensional partition function bounds}

First, we derive a lower bound on the partition function on a box of even length with fully packed boundary condition, that is, the boundary condition on $\mathrm{R}_L$ prescribed by the configuration $\rho(e):=\indicator{e-1/2\text{ is even}}$ and denoted by the superscript $1$.

\begin{proposition}
    \label{prop:partition function magnetized boundary}
    For all $\beta>0$ and even $L\ge 4$,
    \begin{equation}
        Z^{1,\oned}_{\mathrm{R}_L;\beta}\ge 1+\frac{1}{8}e^{-\beta\mu_\eff}L^2.
    \end{equation}
\end{proposition}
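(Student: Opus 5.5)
The plan is to bound $Z^{1,\oned}_{\mathrm{R}_L;\beta}$ from below by restricting the sum to two explicit families of configurations: the fully packed configuration, and configurations with exactly two vacancies. All weights are nonnegative, so discarding every other configuration only lowers the sum. With the fully packed boundary condition, the edges outside $\mathrm{R}_L$ are fixed to $\rho$. The edges with midpoints $-1/2$ and $L-1/2$ sit at the boundary of the box. Under $\rho$ they are occupied exactly when $e-1/2$ is even, that is, at $-1/2$ (since $-1$ is odd, it is not occupied) and at $L-1/2$ (since $L-1$ is odd, it is not occupied). So the vertices $0,\dots,L-1$ inside the box must be covered by dimers and monomers lying inside, with the outside fully packed and compatible. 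The reference configuration $\rho$ restricted to the box covers $\{0,1\},\{2,3\},\dots,\{L-2,L-1\}$ and has no vacancies. It contributes weight $1$, which gives the leading term $1$.

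Next we count configurations with exactly two vacancies. Each vacancy costs $e^{-\frac12\beta\mu_\eff}$ by~\eqref{eqn: shifted Hamiltonian one-dimensional monomer-dimer}, so each such configuration has weight $e^{-\beta\mu_\eff}$. Such a configuration arises by choosing two vacant vertices $i<j$ in $\{0,\dots,L-1\}$ with $i$ even and $j$ odd. The dimers then tile $\{0,\dots,i-1\}$ as in $\rho$, tile $\{j+1,\dots,L-1\}$ as in $\rho$, and tile the interval $\{i+1,\dots,j-1\}$ shifted by one. That interval has even length $j-i-1$, so the tiling exists, and distinct pairs $(i,j)$ give distinct configurations. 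The dimers stay inside the box and do not conflict with the outside dimers, because the vertices at the ends are matched as in $\rho$. The number of admissible pairs is
\[
\#\{(i,j): 0\le i<j\le L-1,\ i \text{ even},\ j\text{ odd}\}=\sum_{k=0}^{L/2-1}\Big(\tfrac L2-k\Big)=\tfrac{L}{2}\Big(\tfrac L2+1\Big)\Big/2=\tfrac{L(L+2)}{8}\ge \tfrac{L^2}{8}.
\]
Adding the two contributions gives $Z\ge 1+\frac18 e^{-\beta\mu_\eff}L^2$.

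The only delicate point is the convention for the boundary condition. We need to confirm that the dimers of $\rho$ near the endpoints of $\mathrm{R}_L$ do not reach across the boundary to cover the vertices $0$ or $L-1$, which would change which vertices are free and hence the parity count. If the convention turns out to be the opposite, with boundary dimers covering those vertices, the same shifting argument still applies to the remaining interior vertices. The count then becomes roughly $(L-2)L/8$, and for $L\ge4$ we would add some extra families, such as configurations with two vacancies adjacent to a shifted segment, to recover the $L^2/8$ constant. An alternative that avoids the explicit count is to express $Z$ exactly through powers of the transfer matrix $T$ from~\eqref{eqn: transfer matrix} and expand in $e^{-\frac12\beta\mu_\eff}$. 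This would require checking that the higher-order terms are nonnegative, which holds because every configuration weight is positive. I expect the combinatorial route to be the cleaner one.
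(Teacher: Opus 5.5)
Your proposal is correct and matches the paper's proof: the fully packed configuration contributes $1$, and the $L(L+2)/8\ge L^2/8$ configurations with exactly two vacancies (an even site $i$ and an odd site $j>i$) each contribute $e^{-\beta\mu_{\eff}}$. Your worry about the boundary convention is unnecessary, though your stated reason is slightly off: the edges with midpoints $-1/2$ and $L-1/2$ lie in the closed box and are not prescribed by $\rho$ at all, but they are blocked by the fixed outside dimers $\{-2,-1\}$ and $\{L,L+1\}$, so the free vertices are exactly $0,\dots,L-1$ as you assumed.
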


\begin{proof}
    The fully packed configuration has weight $1$.
    There are $L(L+2)/8$ configurations with exactly two vacancies, each such configuration having weight $e^{-\beta\mu_\eff}$.
    The proposition follows.
\end{proof}

Next, we observe a simple lower bound on the partition function in $d$ dimensions with periodic boundary condition.

\begin{proposition}
    \label{prop:partition function periodic boundary}
    For all boxes $\Lambda$ with at least one side of even length,
    \begin{equation}
        \label{eqn:partition function periodic boundary}
        Z_{\Lambda;\beta}^{\per}\ge 1.
    \end{equation}
\end{proposition}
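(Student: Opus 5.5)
The plan is to bound the partition function below by the weight of a single configuration. All the weights $w^\per_{\Lambda;\beta}(\sigma)$ are nonnegative, and the shifted Hamiltonian in~\eqref{eqn:shifted Hamiltonian monomer-dimer} is a sum of nonnegative potentials. Since $\mu+2(d-1)b>0$ and $b>0$, the vacancy potential~\eqref{eqn:vacancy potential} and the broken-link potential~\eqref{eqn:broken-link potential} are both nonnegative. So it suffices to exhibit one $\Lambda$-periodic configuration $\sigma$ with $H^\per_\Lambda(\sigma)=0$. That configuration has weight $e^{0}=1$, which gives $Z^\per_{\Lambda;\beta}\ge 1$.

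Suppose, without loss of generality, that $\Side_i(\Lambda)$ is even. Take $\sigma$ to be the perfect matching in which every vertex is covered by an $x_i$-oriented dimer. Concretely, put a dimer on the edge from $v$ to $v+\vec{e}_i$ exactly when the $i$th coordinate of $v$ has a fixed parity, namely the parity matching the corner of $\Lambda$ (the corner coordinates lie in $1/2+\Z$). Because $\Side_i(\Lambda)$ is even, this pattern is invariant under translation by $\Side_i(\Lambda)\vec{e}_i$. It is invariant under every translation in the other directions $\vec{e}_j$, $j\ne i$, so in particular under $\Side_j(\Lambda)\vec{e}_j$. Hence $\sigma\in\Omega^\per_\Lambda$.

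Next I check that this $\sigma$ has neither vacancies nor broken links. It has no vacancies because every vertex is matched. For broken links, any edge $e$ perpendicular to an incident dimer is $x_j$-oriented for some $j\ne i$. Both endpoints $v$ and $v+\vec{e}_j$ of $e$ have the same $i$th coordinate, so both are covered by $x_i$-oriented dimers. These two dimers are parallel to each other and perpendicular to $e$, so $e$ is a link, which is neither Case~\ref{itm: broken link multiplicity 1} nor Case~\ref{itm: broken link multiplicity 2}. Edges parallel to $x_i$ have no perpendicular incident dimers, so $m_\sigma(e)=0$ for them as well. Therefore every potential term vanishes on the periodic torus, and $H^\per_\Lambda(\sigma)=0$.

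I expect no real obstacle here. The only point needing care is that the periodic Hamiltonian, with its sum over representatives of translation classes, is still a sum of the same nonnegative local terms. Even for side lengths as small as $1$ or $2$, where the torus wraps around, each term is evaluated on the periodic extension of $\sigma$, and that extension is the infinite configuration just analyzed, on which every term vanishes.
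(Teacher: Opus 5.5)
Your proof is correct and follows the paper's own argument: it likewise takes the fully packed configuration of dimers oriented along an even side, whose weight is $1$, to conclude $Z^\per_{\Lambda;\beta}\ge 1$. Your explicit checks of periodicity and of the absence of broken links fill in details the paper leaves implicit. The nonnegativity of the potentials is not actually needed, since every weight is a positive exponential.
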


\begin{proof}
    Without loss of generality, suppose that $\Side_1(\Lambda)$ is even.
    Then, there exists a configuration $\sigma\in\Omega^\per_\Lambda$ in which all dimers are $x_1$-oriented and there are no vacancies.
    Since $w^\per_{\Lambda;\beta}(\sigma)=1$, the proposition follows.
\end{proof}

Finally, we use Proposition~\ref{prop:partition function periodic boundary} to bound the chessboard seminorms of elementary defects in the $d$-dimensional system.

\begin{corollary}
    \label{cor:local defect chessboard seminorms}
    Let $v$ be a vertex and $e$ be an edge of $\Z^d$, and denote by $R_v$ and $R_e$ the $1\times\dots\times 1$ and $1\times\dots\times 1\times 2\times 1\times\dots\times 1$ boxes containing $v$ and $e$, respectively.
    Then,
    \begin{enumerate}
        \item \label{itm:vacancy chessboard seminorm} $\norm{v\text{ is a vacancy}}_{R_v}\le e^{-\frac{1}{2}\beta[\mu+2(d-1)b]}$, and
        \item \label{itm:broken link chessboard seminorm} there exists a constant $C>0$ depending only on $d$ such that $\norm{e\text{ is a broken link}}_{R_e}\le Ce^{-\frac{1}{2}\beta b}$.
    \end{enumerate}
\end{corollary}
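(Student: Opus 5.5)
We bound each chessboard seminorm by its finite-volume version $\norm{\cdot}_{R\mid\Lambda}$ on $\Lambda=\mathrm{R}_{n!\times\dots\times n!}$ and then let $n\to\infty$. The idea is that the product $\prod_{\tau\in T^R_\Lambda}\tau f$ of reflected copies of the indicator forces a defect at \emph{every} translate of the block across the torus. We bound the weight of such configurations pointwise, and we bound the partition function from below by Proposition~\ref{prop:partition function periodic boundary}.

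For item~\ref{itm:vacancy chessboard seminorm}, the block $R_v$ is a unit cell containing the single vertex $v$. Its reflections under $T^{R_v}$ cover every vertex of $\Z^d$, so the event $\bigcap_\tau\tau\set{v\text{ is a vacancy}}$ is the single configuration $\sigma=\emptyset$. In $\Lambda$ this configuration has $\abs{T^{R_v}_\Lambda}=\abs{\Lambda\cap\Z^d}$ vacancies and no broken links, since there are no dimers. Its weight is therefore $e^{-\frac12\beta[\mu+2(d-1)b]\abs{T^{R_v}_\Lambda}}$. Since $n!$ is even, Proposition~\ref{prop:partition function periodic boundary} gives $Z^\per_{\Lambda;\beta}\ge1$. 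Taking the $1/\abs{T^{R_v}_\Lambda}$-th root yields the bound for every $n$, and hence for the limsup.

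For item~\ref{itm:broken link chessboard seminorm}, fix $e$ with orientation $x_j$. The block $R_e$ has side length $2$ in direction $j$ and contains exactly the edge $e$, and its reflections tile $\Lambda$ into $N:=\abs{T^{R_e}_\Lambda}=\abs{\Lambda\cap\Z^d}/2$ reflected copies. On the event $\bigcap_\tau\tau\set{e\text{ is a broken link}}$, every copy $\tau e$ is a broken link, each contributing a factor at most $e^{-\frac12\beta b}$ to the weight. The copies $\tau e$ are distinct $x_j$-edges, so these broken-link potentials are distinct terms of $H^\per_\Lambda$, and all other terms are nonnegative. Thus every configuration in the event has weight at most $e^{-\frac12\beta bN}$.

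It remains to count configurations. Crudely, each vertex is either vacant or matched to one of its $2d$ neighbours, which gives at most $(2d+1)^{\abs{\Lambda\cap\Z^d}}=(2d+1)^{2N}$ configurations. Hence
\begin{equation}
    \norm{e\text{ is a broken link}}_{R_e\mid\Lambda}\le\parentheses[\big]{(2d+1)^{2N}e^{-\frac12\beta bN}}^{1/N}=(2d+1)^2e^{-\frac12\beta b},
\end{equation}
so $C=(2d+1)^2$ works.

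The main care point is checking that the reflected copies of the defect really are disjoint terms in $H^\per_\Lambda$. Each penalty is nonnegative after the shift, so dropping all other terms is legitimate, but we must make sure the reflections do not map distinct blocks to the same edge modulo $\cL^\Lambda$. This holds because $T^R_\Lambda$ acts simply transitively on the blocks tiling the torus. It is also why we rely on the divisibility $2\Side_i(R)\mid n!$.
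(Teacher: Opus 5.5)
Your proof is correct. Item~1 is exactly the paper's argument. For item~2 you take a slightly different route.

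The paper splits the event $\{e\text{ is a broken link}\}$ into the finitely many $R_e$-local patterns that realize it. Each pattern, once reflected over the torus, pins down a single configuration with a broken link in every block, so by $Z^\per_{\Lambda;\beta}\ge 1$ its seminorm is at most $e^{-\frac12\beta b}$. The paper then concludes by subadditivity of $\norm{\cdot}_{R_e}$. You instead bound the reflected product directly: you sum over all periodic configurations with the crude entropy bound $(2d+1)^{\abs{\Lambda\cap\Z^d}}$, which after taking the $N$-th root costs only the $\beta$-independent factor $(2d+1)^2$.

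Your version is more self-contained. It uses only the definition of the seminorm, nonnegativity of the shifted potentials, and $Z^\per_{\Lambda;\beta}\ge 1$, and not the seminorm (triangle-inequality) property, and it gives an explicit constant. The paper's pattern-by-pattern decomposition is the modular template it reuses for later defect events such as $M$ and $E_F(r,s)$. Your global count would serve equally well there, with constants depending on the block size.
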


\begin{proof}
    Item~\ref{itm:vacancy chessboard seminorm} follows immediately from the definition of the chessboard seminorm and Proposition~\ref{prop:partition function periodic boundary}.
    For Item~\ref{itm:broken link chessboard seminorm}, observe that there are a finite number (independent of the orientation of $e$) of configurations on $R_e$ that make $e$ into a broken link, each having $R_e$-chessboard seminorm no greater than $e^{-\frac{1}{2}\beta b}$.
    Item~\ref{itm:broken link chessboard seminorm} then follows by the subadditivity of the chessboard seminorm.
\end{proof}

\section{Orientational symmetry breaking}
\label{sec: orientational symmetry breaking}

\subsection{The Heilmann--Lieb approach}
\label{sec: Heilmann-Lieb approach}

In this subsection, we recall and recast the Heilmann--Lieb characterization of orientational order in the monomer-dimer system~\cite[Section~5]{heilmann1979lattice} under the strong-percolation framework of Section~\ref{sec:strongly percolating sets}.

Consider the hypercube $R=\mathrm{R}_{2\times \dots\times 2}$ of side length $2$.
We say that $R$ is \textbf{bad} with respect to a configuration $\sigma$ if either (a) it contains a vacancy of $\sigma$, or (b) it does not contain any vacancy but intersects dimers in $\sigma$ of different orientations, and denote by $Q$ the event that $R$ is bad.
It is easy to check by a case analysis that, in the latter case, at least one broken link of $\sigma$ is local to $R$, namely, there exists an edge $e\in\V$ such that the event that $e$ is a broken link is $R$-local.
Combining Proposition~\ref{prop: infinite-volume recursive chessboard estimate} and Corollary~\ref{cor:local defect chessboard seminorms} and applying a union bound, we find that there exists a universal constant $C_d>0$ depending only on $d$ such that
\begin{equation}
    \label{eqn: bad square chessboard seminorm}
    \norm{Q}_R < C_d e^{-\beta\min\set{\mu+2(d-1)b,\,b}/2}.
\end{equation}

If $R$ is not bad, then every vertex of $\Z^d$ in $R$ is incident to a dimer and these dimers all have the same orientation $o\in\set{x_1,\dots,x_d}$.
Denote these $R$-local events by $E_{x_i}$ for $1\le i\le d$, respectively, and introduce the random sets
\begin{equation}
    \Psi_{o}:=\es{E_{o}}=\xre{1\times\dots\times 1}{\es{E_{o}}}\quad\text{for }o\in\set{x_1,\dots,x_d},\quad\text{and }
    \Psi:=\bigcup_{o\in\set{x_1,\dots,x_d}}\Psi_{o}.
\end{equation}
Note that, by compactness, there exists a $\Z^d$-invariant Gibbs measure $\P$ which is the weak limit of a sequence of finite-volume Gibbs measures with periodic boundary condition.
By Lemma~\ref{itm:strongly percolating grid events - chessboard bound}, for all periodic Gibbs measures $\P$,
\begin{equation}
    \peierls{\P}(\Psi)\le \sqrt[2^d]{\norm{Q}_R}.
\end{equation}
On the other hand, notice that if $v\in\Psi_{o}$ and $w\in\Psi_{o'}$ for $o\ne o'$, then $v$ and $w$ are not $\square$-adjacent.
Thus, Lemma~\ref{lem:splitting strongly percolating sets} yields, for all $\Z^d$-ergodic Gibbs measures $\P$,
\begin{equation}
    \min_{o\in\set{x_1,\dots,x_d}}\peierls{\P}(\Psi_{o})\le\peierls{\P}(\Psi).
\end{equation}
From here on, we increase $\beta_0$ so that $\eps:=\sqrt[2^d]{C_d e^{-\beta_0\min\set{\mu+2(d-1)b,b}/2}}<\eps_d/d$, and continue to assume that $\beta\ge\beta_0$.
Then, for all $\Z^d$-ergodic Gibbs measures $\P$, 
\begin{equation}
    \label{eqn: at least one good event strongly percolates}
    \min_{o\in\set{x_1,\dots,x_d}}\peierls{\P}(\Psi_{o}) <\eps<\eps_d/d.
\end{equation}

\begin{proposition}[$d$ phases]
    \label{prop: d phases}
    Suppose that $\beta\ge\beta_0$.
    Let $\P$ be a $\Z^d$-ergodic Gibbs measure.
    There exists a unique label $o\in\set{x_1,\dots,x_d}$ such that, out of the random sets $\Psi_{x_i}$, where $1\le i\le d$, exactly $\Psi_o$ is $\eps$-strongly percolating, while the others have only finite $\boxtimes$-components $\P$-a.s.
\end{proposition}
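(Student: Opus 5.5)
By \eqref{eqn: at least one good event strongly percolates}, there is a label $o$ with $\peierls{\P}(\Psi_o)<\eps$. By the definition of the infimum in \eqref{eqn: strong percolation parameter}, $\Psi_o$ is then $\eps'$-strongly percolating for some $\eps'<\eps<\eps_d/d$. Strong percolation is monotone in the parameter: if $\eps'\le\eps''$, an $\eps'$-rare set is also $\eps''$-rare. Hence $\Psi_o$ is $\eps$-strongly percolating. Concretely, there is an $\eps'$-rare set $B$ such that $\Psi_o$ almost surely contains an infinite $\Box$-component of $\Z^d\setminus B$.

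Next I show that every $\Psi_{o'}$ with $o'\ne o$ has only finite $\boxtimes$-components. Lemma~\ref{lem:Peierls argument} applies because $\eps'<\eps_d$. It gives almost surely a unique infinite $\Box$-component $I$ of $\Z^d\setminus B$ with $I\subset\Psi_o$, and it says every $\boxtimes$-component of $\Z^d\setminus I$ is finite. The sets $\Psi_{x_i}$ are pairwise disjoint, since the events $E_{x_i}$ are mutually exclusive. So $\Psi_{o'}\subset\Z^d\setminus\Psi_o\subset\Z^d\setminus I$. Any $\boxtimes$-connected subset of $\Psi_{o'}$ therefore lies inside a single $\boxtimes$-component of $\Z^d\setminus I$, and so it is finite.

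For uniqueness of $o$ and the claim that $\Psi_{o'}$ is not $\eps$-strongly percolating, suppose $\Psi_{o'}$ were $\eps$-strongly percolating for some $o'\neq o$. Since $\eps<\eps_d$, the case $\eps\ge\eps_d$ in the definition is excluded, so there is a genuine $\eps$-rare set $B'$. Applying Lemma~\ref{lem:Peierls argument} to $B'$, the set $\Psi_{o'}$ almost surely contains an infinite $\Box$-component of $\Z^d\setminus B'$. That component is an infinite $\boxtimes$-connected subset of $\Psi_{o'}$, contradicting the previous paragraph. Thus exactly $\Psi_o$ is $\eps$-strongly percolating. The same argument shows the label $o$ is unique: two labels both satisfying the conclusion would each have to have only finite $\boxtimes$-components while the other percolates.

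The step needing the most care is the passage from the strict inequality $\peierls{\P}(\Psi_o)<\eps$ to actual $\eps$-strong percolation with a genuine rare set below $\eps_d$. It works only because $\eps<\eps_d$ was arranged when choosing $\beta_0$, so the trivial clause $\eps\ge\eps_d$ of the definition never applies. Everything else follows directly from Lemma~\ref{lem:Peierls argument} and the disjointness of the $E_{x_i}$, and ergodicity enters only through \eqref{eqn: at least one good event strongly percolates}.
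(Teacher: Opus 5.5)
Your proof is correct and follows essentially the same route as the paper: obtain a rare set $B$ such that $\Psi_o$ contains the unique infinite $\Box$-component $I$ of $\Z^d\setminus B$ via Lemma~\ref{lem:Peierls argument}, and conclude from $\Psi_{o'}\subset\Z^d\setminus\Psi_o\subset\Z^d\setminus I$. The differences are minor: you use the strict inequality to obtain a rare parameter $\eps'<\eps$ (where the paper fixes $\eps'\in(\eps,\eps_d/d)$), and you explicitly verify the ``exactly'' and uniqueness clauses, which the paper leaves implicit.
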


\begin{proof}
    Let $\P$ be a $\Z^d$-ergodic Gibbs measure.
    In view of~\eqref{eqn: at least one good event strongly percolates}, we may assume without loss of generality that $\peierls{\P}(\Psi_{x_1})\le\eps$, as the other cases are analogous.
    We need to show that $\Psi_{x_i}$, $2\le i\le d$, have finite $\boxtimes$-components $\P$-a.s.
    Fix $\eps'\in(\eps,\eps_d/d)$.
    By~\eqref{eqn: strong percolation parameter} and Lemma~\ref{lem:Peierls argument}, there exists an $\eps'$-rare set $B$ such that $\Psi_{x_1}$ contains the unique infinite $\square$-component $I$ of $\Z^d\setminus B$, and each $\boxtimes$-component of $\Z^d\setminus I$ is finite, $\P$-a.s.
    Noting that $\bigcup_{i=2}^d\Psi_{x_i}\subset\Z^d\setminus\Psi_{x_1}\subset\Z^d\setminus I$ completes the proof.
\end{proof}

For a $\Z^d$-ergodic Gibbs measure $\P$, let $\Phase{\P}\in\set{x_1,\dots,x_d}$ denote the label for $\P$ guaranteed by Proposition~\ref{prop: d phases}.
Let $\varphi:\R^d\to\R^d$ be the coordinate permutation $(x_1,\dots,x_{d-1},x_d)\mapsto(x_d,x_1,\dots,x_{d-1})$, which naturally induces an automorphism of the label set $\set{x_1,\dots,x_d}$.
It is immediate that if $\Phase{\P}=o$, then $\Phase{\varphi^j\P}=\varphi^j\cdot o$ for all $j\in\Z$.

\subsection{Microscopic rarity of misaligned dimers in the $x_1$-phase}
\label{sec: microscopic rarity}

In this subsection, we strengthen the Heilmann--Lieb characterization of the $\Z^d$-ergodic Gibbs measures by establishing the strong percolation of a different random set that, intuitively, captures the microscopic rarity of \textit{misaligned} dimers in the $x_1$-phase; here, we say that a dimer is \textbf{misaligned} if it is not $x_1$-oriented.
By permuting the coordinates, an analogous result holds for each of the other phases as well.

Recall that we have been working under the assumptions that $b>0$, $\mu+2(d-1)b>0$, and $(4d-6)b>\mu$.
Introduce the \textit{$\beta$-independent} microscopic length scales
\begin{equation}
    \label{eqn: microscopic length scale m}
    \fm:=\max\set*{2^{d+1}(4d-5),\,\ceil{\frac{d(4d-5) 2^{d+2} b}{\mu+2(d-1)b}}},
\end{equation}
\begin{equation}
    \label{eqn: microscopic length scale n}
    \fn:=\max\set*{4(6d-5)(\fm+1),\,2\floor{\frac{2(4d-5)b}{(4d-6)b-\mu}}+2}.
\end{equation}
Consider the hypercube $R=\mathrm{R}_{\fn\times\dots\times \fn}$ of side length $\fn$.
Let $\fv$ denote the vertex $(\fn/2,\dots,\fn/2)\in\Z^d$, and define the $R$-local event
\begin{equation}
    \label{eqn: weak vertical order}
    F_{x_1}:=\set{\sigma\in\Omega\mid\fv\text{ is not incident to a misaligned dimer in }\sigma}
\end{equation}
and the random set $\Theta_{x_1}:=\xre{\fn\times\dots\times \fn}\es{F_{x_1}}$.
The main result of this subsection is as follows.

\begin{theorem}
    \label{thm: microscopic rarity of misaligned dimers}
    There exists a constant $C_{\ref{thm: microscopic rarity of misaligned dimers}}>0$ such that the following holds.
    If $\beta\ge\beta_0$ and $\P$ is a $\Z^d$-ergodic Gibbs measure such that $\Phase{\P}=x_1$, then
    \begin{equation}
        \label{eqn: microscopic rarity of misaligned dimers}
        \peierls{\P}(\Theta_{x_1})\le C_{\ref{thm: microscopic rarity of misaligned dimers}} e^{-\beta(1-1/\fn)(4d-5)b}.
    \end{equation}
\end{theorem}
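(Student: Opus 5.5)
We propose to follow the strategy used for Model~I in~\cite{he2025extended}. The complement $\Omega\setminus F_{x_1}$ has a chessboard seminorm that cannot be small, since the $x_2,\dots,x_d$-phases exist. So we will not bound $\peierls{\P}(\Theta_{x_1})$ by Lemma~\ref{itm:strongly percolating grid events - chessboard bound} directly. Instead we split the bad event, on the mesoscopic block $R=\mathrm{R}_{\fn\times\dots\times\fn}$, into two parts: configurations in which $R$ is essentially of $x_1$-type apart from the misaligned dimer at $\fv$, and configurations in which $R$ contains a "large" defect, i.e.\ regions of another phase. The second part will be removed using Proposition~\ref{prop: d phases} and Lemma~\ref{lem:splitting strongly percolating sets}, applied at the scale $\fn$ together with Lemmas~\ref{lem:strongly percolating grid events} and~\ref{lem:intersection of strongly percolating sets}. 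Concretely, we define an $R$-local event $G$: every $2\times\dots\times2$ subcube of $R$ is good (in the sense of $\Psi$) except within a small neighbourhood of $\fv$, and the good subcubes in the outer shell of $R$ are all of type $E_{x_1}$. Since $\Phase{\P}=x_1$ and the other $\Psi_{x_i}$ have only finite components, $\peierls{\P}(\xre{\fn\times\dots\times\fn}\es{G})$ is at most a power of $\eps$. Lemma~\ref{lem:intersection of strongly percolating sets} then reduces the claim to bounding $\peierls{\P}$ for the event "$G$ holds and $\fv$ touches a misaligned dimer", which is a set of positions we want to be rare.

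The rarity of this event will come from the chessboard estimate. We use Lemma~\ref{itm:strongly percolating grid events - chessboard bound} on a symmetrised version of the event (reflection invariance about the centre of $R$ is automatic for $\fv$ at the centre). It then suffices to prove
\[
\norm{G\cap\{\fv\text{ touches a misaligned dimer}\}}_R\le C e^{-\beta(\fn^{d}-\fn^{d-1})(4d-5)b}\cdot(\text{entropy}),
\]
or an analogous bound whose $\fn^d$-th root gives the exponent $(1-1/\fn)(4d-5)b$. Under the chessboard estimate, $\fv$ is reflected to every site of a sublattice. The resulting configurations contain, in each of the $\fn^{d-1}$ columns parallel to $x_1$ through the grid, misaligned dimers with spacing $\fn$. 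Energetically, a single misaligned dimer surrounded by $x_1$-aligned dimers costs $(4d-5)b$ in broken links (this is presumably where $4d-5$ comes from): it creates $4d-4$ perpendicular broken-link incidences, and one vacancy pair or similar adjustment accounts for the rest. The hypothesis $(4d-6)b>\mu$ ensures that it is cheaper to pay this cost than to replace it by vacancies at density $1/\fn$. The hypothesis $\mu+2(d-1)b>0$ keeps vacancies costly.

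To make this rigorous we compare the numerator with the periodic partition function in a large torus. For the lower bound we take the product of one-dimensional partition functions with fully packed boundary condition along $x_1$-columns (Proposition~\ref{prop:partition function magnetized boundary}, or the plain bound $Z\ge1$ from Proposition~\ref{prop:partition function periodic boundary}). For the upper bound we take a sum over configurations in which every reflected copy of $\fv$ is covered by a misaligned dimer, bounding the energy from below by counting broken links and vacancies column by column. The choice of $\fn$ in~\eqref{eqn: microscopic length scale n}, in particular the term $2\lfloor 2(4d-5)b/((4d-6)b-\mu)\rfloor+2$, is exactly what is needed so that vacancies cannot "screen" the defects more cheaply within one period $\fn$. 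The term involving $\fm$ controls the combinatorial entropy and the interaction with the bad region near $\fv$.

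The main obstacle is this energy lower bound. We must show that, in any periodic configuration forced to have misaligned dimers at all reflected copies of $\fv$, the total energy per period cell is at least $(\fn^{d}-\fn^{d-1})\cdot(4d-5)b/\fn^{d}$ times the volume, up to $O(1)$ entropy factors per cell. This requires a careful local accounting of broken links around each misaligned dimer (multiplicity-$2$ links in Case~\ref{itm: broken link multiplicity 2} included), plus an argument that clusters of misaligned dimers or vacancies never lower the cost per defect. We would attempt this via a discharging scheme: each misaligned dimer collects the broken-link penalties on the $4d-4$ perpendicular incident edges together with one additional half-unit on each side along its column, and we show that overlaps can only happen when extra vacancies or broken links are present to pay for them.
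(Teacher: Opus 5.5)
There is a genuine gap: splitting off your event $G$ cannot produce the sharp exponent $(1-1/\fn)(4d-5)b$, which is the whole content of the theorem.

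Your reduction uses $\Theta_{x_1}\supseteq\xre{\fn\times\dots\times\fn}{\es{G}}\cap\xre{\fn\times\dots\times\fn}{\es{(G\setminus F_{x_1})^c}}$ together with Lemma~\ref{lem:intersection of strongly percolating sets}. That lemma already takes a square root, so even an optimal bound $Ce^{-\beta(4d-5)b}$ on the second set would leave you with only $e^{-\beta(4d-5)b/2}$.

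More fundamentally, the first set is not rare enough at all. Your $G$ forbids vacancies on essentially all of $R$, and Lemma~\ref{lem:quantitative Peierls argument} with $\ell=0$ gives $\peierls{\P}(\xre{\fn\times\dots\times\fn}{\es{G}})\ge\eps_d\,\P(G^c)$. But an isolated vacancy in the $x_1$-phase costs only $\frac{1}{2}[\mu+4(d-1)b]$, so vacancies occur at density of order $\ell_0^{-1}$. Meanwhile $(1-1/\fn)(4d-5)b-\frac{1}{2}[\mu+4(d-1)b]\ge\frac{1}{4}[(4d-6)b-\mu]>0$. So misaligned dimers must come out much rarer than vacancies. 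This separation of scales is exactly what the theorem asserts and what Proposition~\ref{prop:sealed boxes strongly percolate} later consumes. No argument that pays for $G^c$ as a separate bad event can get below roughly $\ell_0^{-1}$.

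The input you propose for excluding other-phase regions, Proposition~\ref{prop: d phases}, is weaker still. It rests on $\norm{Q}_R^{1/2^d}$ and gives a rate of order $e^{-\beta\min\{\mu+2(d-1)b,\,b\}/2^{d+1}}$.

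Your chessboard bookkeeping is also off. $R$ has side $\fn$ and $\Theta_{x_1}$ lives on the $\fn$-grid, so Lemma~\ref{itm:strongly percolating grid events - chessboard bound} applies with $r=1$. What you need is a seminorm bound $Ce^{-\beta(1-1/\fn)(4d-5)b}$ itself. An exponent $(\fn^d-\fn^{d-1})(4d-5)b$ is impossible when reflections force only one defect per block. Also, an isolated misaligned dimer costs $2(4d-5)$ broken-link units of $b/2$, with no vacancy involved.

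The missing idea is to build the rare set out of the very defects that make $F_{x_1}$ fail, so that vacancies elsewhere are never paid for.

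The paper takes the $\square$-components $C$ of $V(\sigma)$ (vacancies and vertices touching misaligned dimers), which are finite by Proposition~\ref{prop: d phases}. It keeps those containing a bad grid vertex $\eta_{\fn z}\fv$. It lets $B$ be the union of the outer boundaries $B_C$ of their $\fn$-coarse-grained hulls, so that $\Theta_{x_1}$ contains the infinite component of $\Z^d\setminus B$.

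Each site of a finite $A\subset B$ is then charged at least $(4d-5)b$ (Proposition~\ref{prop: localization of defect edges}):
\begin{itemize}
\item If $\abs{\gamma_C}<\fm$, then $C$ sits in a single block, and defect chasing around a misaligned dimer yields $2(4d-5)$ broken links counted with multiplicity.
\item If $\abs{\gamma_C}\ge\fm$, the dual surface $\gamma_C^\ast$ is cut into pieces of between $\fm$ and $(6d-5)\fm$ plaquettes. Each piece meets the boxes of at most $2^d$ sites of $B_C$ and crosses at least $\fm$ defect edges of cost at least $\delta_d$. The choice of $\fm$ gives $\fm\delta_d/2^d\ge(4d-5)b$.
\end{itemize}

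The factor $1-1/\fn$ comes from choosing, among the $\fn^d$ shifts of the $\fn$-grid, one in which at least a $(1-1/\fn)$ fraction of this penalty lies on edges interior to cells. The chessboard estimate is then applied cell by cell, with a union bound over penalty profiles.

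This surface-proportional cost for large defect regions, in place of an a priori exclusion of them, is what your proposal lacks.
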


The rest of this subsection is dedicated to the proof of Theorem~\ref{thm: microscopic rarity of misaligned dimers}.
We will proceed by explicitly constructing an $\order{e^{-\beta(1-1/\fn)(4d-5)b}}$-rare set $B$ such that $\Theta_{x_1}$ contains the unique infinite $\square$-component of $\Z^d\setminus B$, $\P$-a.s.

\subsubsection{Construction of the rare set}
\label{sec: construction of the rare set}

In this sub-subsection, we construct the random set $B$.
Let $\P$ be a $\Z^d$-ergodic Gibbs measure such that $\Phase{\P}=x_1$.
The following construction is valid $\P$-a.s.

Sample $\sigma\sim\P$.
Let $V(\sigma)\subset\Z^d$ consist of those vertices which are either (a) a vacancy of $\sigma$, or (b) incident to a misaligned dimer in $\sigma$.
By Proposition~\ref{prop: d phases}, $\peierls{\P}(\Psi_{x_1})\le\eps$, so $V(\sigma)$ has only finite $\boxtimes$- and hence $\square$-components.
Denote by $\cC(\sigma)$ the set of those $\square$-components $C$ of $V(\sigma)$ with the following property: there exists $z\in\Z^d\setminus\Theta_{x_1}(\sigma)$ such that the vertex $\eta_{\fn z}\fv$ belongs to $C$.
For each $C\in\cC(\sigma)$, define the $\fn$-coarse-grained version of $C$ as 
\begin{equation}
    D_C:=\set{z\in\Z^d\mid \eta_{\fn z}\mathrm{R}_{\fn\times\dots\times\fn}\cap C\ne\emptyset}.
\end{equation}
Since $C$ is finite, so is $D_C$, and we may let $I_C$ be the unique infinite $\square$-component of $\Z^d\setminus D_C$ and $H_C:=\Z^d\setminus I_C$ be the hull of $D_C$.
Define
\begin{equation}
    B_C:=\set{z\in H_C\mid\text{there exists }w\in I_C\text{ that is $\square$-adjacent to }z},
\end{equation}
and let $B=B(\sigma):=\bigcup_{C\in\cC(\sigma)}B_C$.
We record the following simple topological property of $B$.

\begin{lemma}
    \label{eqn: rare set topological property}
    If $\Z^d\setminus B(\sigma)$ has a unique infinite $\square$-component $I$, then $I\subset\Theta_{x_1}(\sigma)$.
\end{lemma}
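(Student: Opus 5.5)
We argue by contraposition: fix $z\in I$ and suppose $z\notin\Theta_{x_1}(\sigma)$. Then the vertex $\eta_{\fn z}\fv$ is incident to a misaligned dimer in $\sigma$, so $\eta_{\fn z}\fv\in V(\sigma)$. Let $C$ be the $\square$-component of $V(\sigma)$ containing it. By definition of $\cC(\sigma)$, we have $C\in\cC(\sigma)$, witnessed by this very $z$. Moreover, $\eta_{\fn z}\fv$ lies in the box $\eta_{\fn z}\mathrm{R}_{\fn\times\dots\times\fn}$ (it is its center), so $z\in D_C\subset H_C$. The goal is to show that $z$ cannot lie in $I$, which contradicts the assumption.

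The key step is a discrete separation argument. The set $I$ is infinite and $\square$-connected. The set $H_C$ is finite, since it is the complement of the unique infinite component of $\Z^d\setminus D_C$ with $D_C$ finite. Hence $I$ contains some point $w\notin H_C$, that is, some $w\in I_C$. Since $I$ is $\square$-connected, choose a $\square$-path in $I$ from $z\in H_C$ to $w\in I_C$. Along this path there is a first vertex lying in $I_C$; call it $w'$, and let $z'$ be its predecessor. Then $z'\in H_C$, $w'\in I_C$, and $z'$ is $\square$-adjacent to $w'$, so by definition $z'\in B_C\subset B(\sigma)$. But $z'$ lies on a path inside $I\subset\Z^d\setminus B(\sigma)$, which is a contradiction. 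Therefore every $z\in I$ lies in $\Theta_{x_1}(\sigma)$.

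There is one point to check before this argument applies: the construction must be well-defined. We need $V(\sigma)$ to have finite components, so that $C$, and hence $D_C$, is finite. This is recorded $\P$-a.s.\ just before the lemma via Proposition~\ref{prop: d phases}. We also need $\Z^d\setminus D_C$ to have a unique infinite $\square$-component, which holds since $D_C$ is finite and $d\ge 2$.

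With that settled, the only genuinely nontrivial step is the path-crossing argument, and it is elementary. Uniqueness of $I$ is not even needed beyond $I$ being an infinite $\square$-connected subset of $\Z^d\setminus B$.
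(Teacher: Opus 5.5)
Your proof is correct and follows the paper's argument: the paper shows $\Z^d\setminus\Theta_{x_1}(\sigma)\subset\bigcup_{C\in\cC(\sigma)}D_C\subset\bigcup_{C\in\cC(\sigma)}H_C$ and simply asserts that $I$ misses every $H_C$. Your path-crossing argument through $B_C$ supplies the justification for that assertion, which the paper leaves implicit. One harmless slip: $\fv$ is not literally the center of $\mathrm{R}_{\fn\times\dots\times\fn}$, but it is a vertex of that box, which is all you need.
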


\begin{proof}
    Under the assumption of the lemma, we must have that $I\subset \Z^d\setminus\bigcup_{C\in\cC(\sigma)}H_C$, so
    \begin{equation}
        \bigcup_{C\in\cC(\sigma)}H_C\subset\Z^d\setminus I.
    \end{equation}
    On the other hand, it follows from the definition of $\cC(\sigma)$ that 
    \begin{equation}
    \Z^d\setminus\Theta_{x_1}(\sigma)\subset\bigcup_{C\in\cC(\sigma)}D_C\subset\bigcup_{C\in\cC(\sigma)}H_C.
    \end{equation}
    Combining these inclusions completes the proof.
\end{proof}

\subsubsection{Defects associated with the rare set}
\label{sec: defects associated with the rare set}

We continue to work in the setting of Section~\ref{sec: construction of the rare set}.
To quantify the rarity of the random set $B=B(\sigma)$ constructed there, we will associate a set of defects with each $\square$-component $C\in\cC(\sigma)$ in such a way that distinct $\square$-components have disjoint sets of associated defects.

Given $e\in\V$ and a configuration $\sigma$, we say that $e$ is a \textbf{defect edge} of $\sigma$ if either (a) $e$ is a broken link of $\sigma$, or (b) $e$ is not a broken link of $\sigma$ but is incident to a vacancy of $\sigma$.

Let $\sigma$ be a configuration for which $B(\sigma)$ is defined and $C\in\cC(\sigma)$.
Since $C$ is finite, we may let $I^\mic_C$ be the unique infinite $\square$-component of $\Z^d\setminus C$ and $H^\mic_C:=\Z^d\setminus I^\mic_C$ be the hull of $C$.
Define
\begin{equation}
    \gamma_C:=\set{\set{u,v}\in\V\mid u\in H^\mic_C,\, v\in I^\mic_C,\, u\text{ is $\square$-adjacent to }v}.
\end{equation}
By the definition of $V(\sigma)$ of which $C$ is a $\square$-component, every $e\in\gamma_C$ is a defect edge, and $\gamma_{C}\cap\gamma_{C'}=\emptyset$ for distinct $\square$-components $C,C'\in\cC(\sigma)$.

\begin{lemma}
    \label{lem: a boundary edge in every boundary box}
    For each $z\in B_C$, there exists an edge in $\gamma_C$ that intersects $\eta_{\fn z}\mathrm{R}_{\fn\times\dots\times\fn}$.
\end{lemma}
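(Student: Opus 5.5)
Fix $z\in B_C$ and write $R_z:=\eta_{\fn z}\mathrm{R}_{\fn\times\dots\times\fn}$. The plan is to find a point of the hull $H^\mic_C$ inside $R_z$ and a point of the exterior $I^\mic_C$ inside $R_z$. We then join them by a lattice path inside $R_z$. Somewhere along that path it must step from $H^\mic_C$ to $I^\mic_C$, and that step is an edge of $\gamma_C$ meeting $R_z$. The key observations are that the $\fn$-boxes $\eta_{\fn w}\mathrm{R}_{\fn\times\dots\times\fn}$, $w\in\Z^d$, partition the vertex set $\Z^d$ into convex blocks, and that each block's vertex set is $\square$-connected.

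\textbf{Step 1: $R_z$ meets $I^\mic_C$.} By definition of $B_C$ there is $w\in I_C$ that is $\square$-adjacent to $z$. I claim that every vertex of $\Z^d$ lying in a block indexed by $I_C$ belongs to $I^\mic_C$. Such vertices are not in $C$, since $I_C\cap D_C=\emptyset$. Moreover, the union of blocks indexed by the infinite $\square$-connected set $I_C$ has a $\square$-connected, infinite vertex set: each block is connected, and $\square$-adjacent blocks contain $\square$-adjacent vertices. This union avoids $C$, so it lies in a single infinite component of $\Z^d\setminus C$, namely $I^\mic_C$.

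Since the blocks of $w$ and $z$ are adjacent, some vertex $v$ in block $w$ is $\square$-adjacent to a vertex $v'$ in $R_z$. If $v'\in I^\mic_C$ we are done with this step. Otherwise $v'\in H^\mic_C$, and $\{v',v\}\in\gamma_C$ already intersects $R_z$, which proves the lemma directly.

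\textbf{Step 2: $R_z$ meets $H^\mic_C$.} Since $z\in H_C=\Z^d\setminus I_C$, either $z\in D_C$ or $z$ lies in a finite component of $\Z^d\setminus D_C$. If $z\in D_C$, then $R_z$ contains a vertex of $C\subset H^\mic_C$. In the second case, I argue by contradiction. Suppose all vertices of $R_z$ lie in $I^\mic_C$. Pick one such vertex $a$ and an infinite self-avoiding $\square$-path from $a$ in $\Z^d\setminus C$. Project it onto block indices; consecutive vertices lie in equal or adjacent blocks, so this gives a $\square$-path of block indices. It avoids $D_C$, since it only visits blocks containing path vertices, which are not in $C$. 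That alone is not enough, because a block can meet both $C$ and the path. To handle this, work instead with the coarse hull: the finite component $K\ni z$ of $\Z^d\setminus D_C$ is surrounded by $D_C$, and every block index in $D_C$ has a block meeting $C$. Hence a vertex path from $a$ leaving the union of blocks in $K$ must enter a block of $D_C$.

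The cleaner route is to choose $a$ and the path so that the path never leaves blocks of $K\cup D_C$ without crossing $C$. This uses that $C$ is $\square$-connected and that $D_C$ encloses $K$. I expect this point to be the main obstacle: a vertex path may squeeze between points of $C$ inside blocks of $D_C$. If the topological argument becomes messy, I would fall back on the alternative case split. If $R_z$ meets $H^\mic_C$ we are done. If not, $R_z$ lies in $I^\mic_C$, and I would show directly that some block in $D_C$ adjacent to $K$ gives a boundary edge, though then the edge would not meet $R_z$. So the genuine approach is to prove $R_z\cap H^\mic_C\neq\emptyset$ via a discrete Jordan-type separation argument, in the spirit of \cite{timar2013boundary}.

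\textbf{Step 3: conclude.} With $h\in R_z\cap H^\mic_C$ and $i\in R_z\cap I^\mic_C$, take a monotone coordinate path from $h$ to $i$; it stays in the convex box $R_z$. The first step from $H^\mic_C$ to $I^\mic_C$ along it is an edge of $\gamma_C$ whose endpoints lie in $R_z$.
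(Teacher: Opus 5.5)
Your Steps 1 and 3 are sound, and Step 1 justifies that every vertex of the $w$-block lies in $I^\mic_C$ more carefully than the paper does. The gap is in Step 2. There you leave open the case ``$z$ lies in a finite $\square$-component of $\Z^d\setminus D_C$''. Your intended Jordan-type argument for that case uses only $z\in H_C$, not the adjacency of $z$ to $I_C$. For a point of $H_C\setminus D_C$ the claim $R_z\cap H^\mic_C\neq\emptyset$ is false in general. A $\square$-connected $C$ can meet every block $\square$-adjacent to the block of $z$ without separating $R_z$ from infinity; in $d=2$ it suffices to leave a gap near one corner of $R_z$, and then $R_z\subset I^\mic_C$. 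So no separation argument can close that case as you set it up.

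The missing observation is that this case never occurs for $z\in B_C$. Suppose $z\notin D_C$. Then $z$ and its $\square$-neighbour $w\in I_C$ are $\square$-adjacent points of $\Z^d\setminus D_C$, so they lie in the same $\square$-component, namely $I_C$. This contradicts $z\in H_C=\Z^d\setminus I_C$. Hence $B_C\subset D_C$, and $R_z$ contains a vertex $u\in C\subset H^\mic_C$.

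With this one line, your Step 3 finishes the proof. The paper does the same thing: it takes a $\square$-path from $u$ to a vertex of $\eta_{\fn w}\mathrm{R}_{\fn\times\dots\times\fn}$ inside the union of the two blocks. The first step of that path from $H^\mic_C$ into $I^\mic_C$ is an edge of $\gamma_C$ with an endpoint in $R_z$.
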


\begin{proof}
    Let $w\in I_C$ be $\square$-adjacent to $z$.
    Then, $\eta_{\fn w}\mathrm{R}_{\fn\times\dots\times\fn}\cap\Z^d\subset I_C^\mic$.
    Fix $u\in C\cap\eta_{\fn z}\mathrm{R}_{\fn\times\dots\times\fn}$ and $v\in\eta_{\fn w}\mathrm{R}_{\fn\times\dots\times\fn}\cap\Z^d$ arbitrarily, and consider a $\square$-path from $u$ to $v$ in $\eta_{\fn z}\mathrm{R}_{\fn\times\dots\times\fn}\cup \eta_{\fn w}\mathrm{R}_{\fn\times\dots\times\fn}$.
    This path contains the desired edge.
\end{proof}

For each $z\in B_C$, fix an edge $e_{C}(z)$ as supplied by Lemma~\ref{lem: a boundary edge in every boundary box}.

For each edge $e=\set{u,v}\in\V$, if $x_i$ is its orientation, let
\begin{equation}
    e^\ast:=\set*{\frac{1}{2}(u+v)+t\mid t_i=0,\, t_j\in\brackets*{-\frac{1}{2},\frac{1}{2}}\text{ for all }j\ne i}\subset\R^d
\end{equation}
be the $(d-1)$-dimensional \textbf{dual plaquette} of $e$. 
We declare two dual plaquettes to be \textbf{adjacent} if they share a $(d-2)$-dimensional face.
Thus, each dual plaquette is adjacent to exactly $6(d-1)$ other dual plaquettes.
Now, given $\gamma_C$, let $\gamma_C^\ast$ consist of the dual plaquettes of the edges in $\gamma_C$; note that $\gamma_C^\ast$ is connected with respect to the adjacency relation defined above~\cite[Section~2]{lebowitz1998improved}.

We partition $\gamma_C^\ast$ into connected subsets of dual plaquettes, which we call \textbf{pieces}, as follows.
\begin{enumerate}
    \item If $\abs{\gamma_C}<\fm$, then we take $\gamma_C^\ast$ itself as the only piece.
    \item If $\abs{\gamma_C}\ge\fm$, by~\cite[Lemma~14]{BalDeBiasio2022} (based on a fact mentioned in~\cite{alon1997edge}), there exists a partition of $\gamma_C^\ast$ into connected subsets $P_1^\ast,\dots,P_t^\ast$ of dual plaquettes such that 
\begin{enumerate}
    \item for all $1\le k\le t-1$, $\fm-1<\abs{P_k^\ast}\le 1+6(d-1)(\fm-1)$, and
    \item $\abs{P_t^\ast}\le 1+6(d-1)(\fm-1)$.
\end{enumerate}
If $\abs{P_t^\ast}\ge\fm$, then we take $P_1^\ast,\dots,P_t^\ast$ as the pieces.
Otherwise, we necessarily have $t\ge 2$.
After relabelling the subsets as needed, we may assume that $P_{t-1}^\ast\cup P_{t}^\ast$ is connected, and take $P_1^\ast,\dots,P_{t-2}^\ast,P_{t-1}^\ast\cup P_{t}^\ast$ as the pieces.
Thus, in both cases, each piece has cardinality at least $\fm$ and at most $(6d-5)\fm$.
\end{enumerate}

Define a map $\pi_C$ from $B_C$ to the set of pieces of $\gamma_C^\ast$ by mapping each $z\in B_C$ to the unique piece containing the dual plaquette $e_{C}(z)^\ast$.

\begin{lemma}
    \label{lem: cardinality of piece preimage}
    The preimage of each piece of $\gamma_C^\ast$ under $\pi_C$ has cardinality at most $2^d$.
\end{lemma}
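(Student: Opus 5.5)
The plan is a diameter/packing argument: a piece is a small connected set of dual plaquettes, so it fits inside a region that can touch only boundedly many coarse boxes $\eta_{\fn z}\mathrm{R}_{\fn\times\dots\times\fn}$. The key input is the size bound from the construction: every piece has at most $(6d-5)\fm$ dual plaquettes. It is also connected under the adjacency of dual plaquettes, i.e. sharing a $(d-2)$-dimensional face.

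First I will bound the spatial extent of a piece. Two adjacent dual plaquettes have midpoints that differ by at most $1$ in each coordinate (in fact by $1/2$ in at most two coordinates). Hence for a connected piece $P$ with $\abs{P}\le(6d-5)\fm$, any two midpoints of edges in $P$ satisfy
\begin{equation*}
\norm{e-e'}_\infty\le(6d-5)\fm-1.
\end{equation*}
Every edge $e$ with $e^\ast\in P$ therefore lies, together with its endpoints, in some axis-parallel cube of side length at most $(6d-5)\fm+1$.

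Next I will compare this with the coarse scale. By~\eqref{eqn: microscopic length scale n}, $\fn\ge 4(6d-5)(\fm+1)$, so the cube above has side length strictly less than $\fn$. Any set of side length less than $\fn$ meets, in each coordinate direction, at most two consecutive intervals of the form $[\fn z_i-1/2,\fn(z_i+1)-1/2]$. So it intersects at most $2^d$ of the boxes $\eta_{\fn z}\mathrm{R}_{\fn\times\dots\times\fn}$, $z\in\Z^d$.

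Finally, if $\pi_C(z)=P$, then $e_C(z)^\ast\in P$, and by Lemma~\ref{lem: a boundary edge in every boundary box} the edge $e_C(z)$ intersects $\eta_{\fn z}\mathrm{R}_{\fn\times\dots\times\fn}$. The distinct $z$ in the preimage thus index distinct coarse boxes, each meeting the small cube, so the preimage has at most $2^d$ elements. The main point needing care is the boundary convention. An edge may intersect a box only at a corner or face, since the boxes are closed with half-integer corners. An edge is a unit segment between lattice points, so it can meet two adjacent boxes. I will handle this by enlarging the cube to include the full edges, which only adds $1$ to its side length; the factor-$4$ margin in $\fn$ absorbs this comfortably.
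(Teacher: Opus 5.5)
Your proof is correct and follows the same diameter-versus-coarse-scale argument as the paper: a connected piece of at most $(6d-5)\fm$ dual plaquettes has extent well below $\fn\ge 4(6d-5)(\fm+1)$, so it meets at most $2^d$ of the boxes $\eta_{\fn z}\mathrm{R}_{\fn\times\dots\times\fn}$; you simply supply the details that the paper's one-line proof leaves implicit. One slip does not affect the argument: your parenthetical is inaccurate, since two coplanar adjacent dual plaquettes have midpoints differing by $1$ in a single coordinate (only perpendicular neighbours differ by $1/2$ in two coordinates), but the bound of $1$ per coordinate that you actually use is correct.
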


\begin{proof}
    Since $\fn\ge 4(6d-5)(\fm+1)$ and the cardinality of a piece is at most $(6d-5)\fm$, each piece intersects the $\fn\times\dots\times\fn$ hypercube $\eta_{\fn z}\mathrm{R}_{\fn\times\dots\times\fn}$ for at most $2^d$ distinct $z\in\Z^d$.
\end{proof}

The main result of this sub-subsection is as follows.

\begin{proposition}[Localization of defect edges]
    \label{prop: localization of defect edges}
    Let $\sigma\in\Omega$ be a configuration for which $B(\sigma)$ is defined and $A\subset\Z^d$ be a nonempty, finite set such that $A\subset B(\sigma)$.
    For each $C\in\cC(\sigma)$ such that $A_C:=A\cap B_C\ne\emptyset$, the following hold:
    \begin{enumerate}
        \item \label{itm: short pieces} If $\abs{\gamma_C}<\fm$, then $B_C=\set{z}$ for some $z\in\Z^d$, and, counting with multiplicity, at least $2(4d-5)$ broken links of $\sigma$ are incident to $C$, all local to $\eta_{\fn z}\mathrm{R}_{\fn\times\dots\times\fn}$.
        \item \label{itm: long pieces} If $\abs{\gamma_C}\ge\fm$, then $\gamma_C$ contains at least $\fm\abs{A_C}/2^d$ distinct defect edges of $\sigma$ (that is, not counting multiplicity for the broken links), all local to
        \begin{equation}
            \bigcup_{z\in A_C}\eta_{(\fn(z_1-1),\dots,\fn(z_d-1))}\mathrm{R}_{3\fn\times\dots\times 3\fn}.
        \end{equation}
    \end{enumerate}
    In particular, the defect edges identified above are distinct for distinct $\square$-components $C$ with $A_C\ne\emptyset$.
\end{proposition}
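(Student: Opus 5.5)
For both items I will work directly with $\gamma_C$, the edges leaving the hull of $C$, which by construction are all defect edges. The two cases separate cleanly: when $\gamma_C$ is short, $C$ is tiny and the work is a local energy count; when $\gamma_C$ is long, the work is bookkeeping with the pieces and the map $\pi_C$.

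\emph{Item~\ref{itm: short pieces}.} Since $\abs{\gamma_C}<\fm$, an isoperimetric argument gives $\abs{H^\mic_C}<\fm$, so the diameter of $C$ is less than $\fm$. Because $\fn\ge 4(6d-5)(\fm+1)\gg\fm$, the set $D_C$ consists of at most $2^d$ mutually $\boxtimes$-adjacent coarse cells. Hence $H_C=D_C$, and $B_C=D_C$.

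Next I use $C\in\cC(\sigma)$: some vertex $\eta_{\fn z}\fv$ lies in $C$. This vertex is the centre of its cell, at distance $\fn/2>\fm$ from the cell boundary, so $C$ lies inside $\eta_{\fn z}\mathrm{R}_{\fn\times\dots\times\fn}$. Therefore $D_C=\{z\}$ and $B_C=\{z\}$.

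For the energy count, note that a vertex of $C$ is either a vacancy or incident to a misaligned dimer. Since $\fv$-type vertices in $\Z^d\setminus\Theta_{x_1}$ are incident to misaligned dimers, $C$ contains at least one misaligned dimer $e'$, say $x_j$-oriented with $j\ne 1$. I then count broken links around $e'$:
\begin{enumerate}
    \item Each of the $2(d-1)$ perpendicular edges at each endpoint of $e'$, excluding the $x_1$ directions, is a broken link unless it is a link to a parallel misaligned neighbour.
    \item The $x_1$-oriented edges at the endpoints of $e'$ either reach an $x_1$-dimer, giving a multiplicity-2 broken link, or reach a vacancy or misaligned dimer.
\end{enumerate}
Working outward to the boundary $\gamma_C$, where all vertices outside are covered by $x_1$-dimers, every edge of $\gamma_C$ that is perpendicular to an $x_1$-dimer outside and to a misaligned dimer inside contributes multiplicity $\ge 1$.

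The target $2(4d-5)$ is exactly the count for a single isolated misaligned dimer surrounded by $x_1$-dimers. That dimer has $2\cdot(2d-1)-2=4d-4$ incident perpendicular edges, with multiplicities summing appropriately, and the claim is that this configuration is the minimum. I will prove the lower bound by a charging argument on $\gamma_C$. Counting edges of $\gamma_C$ with multiplicity, each misaligned boundary dimer contributes at least the corresponding amount, and vacancies on the boundary produce defect edges only in limited numbers. Rather than defect edges, the statement counts broken links incident to $C$, so I must also check that vacancies inside $C$ do not reduce the count: $C$ contains a misaligned dimer, and the broken links attached to it already suffice.

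I expect this minimization to be the main obstacle. It requires a case analysis over the local configuration near an extremal misaligned dimer of $C$, for instance the one with maximal coordinate in some perpendicular direction, to show that it sees at least $2(4d-5)$ units of broken-link multiplicity. Locality to the cell follows because $C$ and its neighbouring edges lie well inside the cell.

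\emph{Item~\ref{itm: long pieces}.} Each piece has at least $\fm$ plaquettes, so it contains at least $\fm$ distinct edges of $\gamma_C$, all of them defect edges. By Lemma~\ref{lem: cardinality of piece preimage}, the restriction of $\pi_C$ to $A_C$ hits at least $\abs{A_C}/2^d$ distinct pieces. Since pieces are disjoint, this gives at least $\fm\abs{A_C}/2^d$ distinct defect edges.

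For locality, each piece in $\pi_C(A_C)$ contains $e_C(z)^\ast$, which meets the cell of $z$. The piece has diameter at most $(6d-5)\fm<\fn$, so it stays inside the $3\fn$-box centred on that cell.

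\emph{Distinctness.} For distinct $C$, the sets $\gamma_C$ are disjoint, which handles Item~\ref{itm: long pieces}. For Item~\ref{itm: short pieces}, the broken links are incident to distinct $\square$-components $C$. I still have to check that a single broken link is not shared between two small components, or between a small component and some $\gamma_{C'}$. I would handle this by attributing each broken link to the endpoint lying in $C$, which I expect requires only a short argument.
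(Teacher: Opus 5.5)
Your handling of Item~2, and of $B_C=\set{z}$ in Item~1, is essentially the paper's. There is one slip: $\abs{\gamma_C}<\fm$ does not give $\abs{H^\mic_C}<\fm$ by isoperimetry, since volume can exceed boundary. What you need, and what holds, is that the connected set $H^\mic_C$ has extent less than $\fm/2$ in each coordinate direction, because every nonempty coordinate slice of it contributes at least two edges to $\gamma_C$.

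The genuine gap is the bound of $2(4d-5)$ broken links in Item~1. This is the real content of that item, and you leave it open as the ``main obstacle.'' The routes you sketch do not work as stated:
\begin{itemize}
\item Charging on $\gamma_C$ fails. An $x_1$-oriented edge of $\gamma_C$ from a vacancy of $C$ to an $x_1$-covered vertex is not a broken link at all, and some of the needed broken links lie strictly inside $C$.
\item Counting only edges incident to $e'$ fails, because neighbours of its endpoints may belong to $C$. A neighbour carrying a parallel $x_i$-dimer makes the edge a link, and a vacant neighbour drops the multiplicity from $2$ to $1$.
\item Your second bullet is also wrong on multiplicity. An $x_1$-oriented edge from an endpoint of $e'$ to an $x_1$-covered vertex has multiplicity $1$, not $2$, since that $x_1$-dimer is parallel to the edge.
\end{itemize}

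The missing idea is defect chasing along maximal runs, starting from the misaligned $x_i$-dimer $e$ incident to $\eta_{\fn z}\fv$.

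(a) Along the $x_i$-line through $e$, and along the $x_j$-lines ($j\notin\set{1,i}$) through each endpoint of $e$, take the maximal run of vertices of $C$. The exit vertex $v_k$ is adjacent to $C$ but not in it, so it is covered by an $x_1$-dimer, which is perpendicular to the edge $\set{u_k,v_k}$. The last vertex $u_k$ is either a vacancy or carries a misaligned dimer. That dimer is either parallel to the edge (multiplicity $1$) or perpendicular to it and non-parallel to the $x_1$-dimer at $v_k$ (multiplicity $2$). So every exit edge is a broken link, giving $2+4(d-2)$ of them.

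(b) For each $j\ne i$, now including $j=1$, take the maximal run of $x_i$-covered vertices along the $x_j$-lines through the endpoints of $e$. Its exit vertex is not $x_i$-covered, so the exit edge is broken, giving $4(d-1)$ more.

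An edge found in both (a) and (b) joins an $x_1$-covered vertex to an $x_i$-covered one, so it has multiplicity $2$. The total is therefore $2+4(d-2)+4(d-1)=2(4d-5)$. Note that (a) deliberately avoids the $x_1$-direction, precisely because of the vacancy problem above.

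All runs lie in $C$, so these edges are incident to $C$ and local to the cell. Distinctness then follows as in the paper. Every selected edge has an endpoint in $C$; this includes edges of $\gamma_C$, whose inner endpoint is forced to lie in $C$. Distinct components contain no adjacent vertices, so no edge can be selected for two components.
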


\begin{proof}
    Let $C\in\cC(\sigma)$ be such that $A_C\ne\emptyset$.

    First, suppose that $\abs{\gamma_C}<\fm$.
    By the definition of $\cC(\sigma)$, $C$ contains the translated vertex $\eta_{\fn z}\fv$, which is incident to a misaligned dimer $e$ in $\sigma$, for some $z\in\Z^d$.
    Since $\fn\ge 4(6d-5)(\fm+1)$, this forces $B_C=D_C=\set{z}$.
    In particular, every edge incident to $C$ has both endpoints contained in $\eta_{\fn z}\mathrm{R}_{\fn\times\dots\times\fn}$; it follows from this observation that all the broken links identified below are local to this hypercube.
    Let $x_i$ ($i\ne 1$) be the orientation of $e$.
    \begin{enumerate}
        \item \label{itm: defect chasing step 1} Consider the maximal run of consecutive vertices in $C$, in the $x_i$-direction passing through $e$.
        The vertex $v_k$ just beyond each endpoint $u_k$ of this run, $k=1,2$, is necessarily incident to an $x_1$-oriented dimer in $\sigma$.
        Hence, the $x_i$-oriented edge $\set{u_k,v_k}$ is a broken link of $\sigma$.
        This step yields two distinct broken links of $\sigma$.

        \item \label{itm: defect chasing step 2} For each $j\not\in\set{1,i}$, consider the maximal run of consecutive vertices in $C$, in the $x_j$-direction passing through either vertex of $e$.
        The vertex $v_k$ just beyond each endpoint $u_k$ of this run, $k=1,2$, is necessarily incident to an $x_1$-oriented dimer in $\sigma$.
        Hence, the $x_j$-oriented edge $\set{u_k,v_k}$ is a broken link of $\sigma$.
        This step yields $4(d-2)$ distinct broken links of $\sigma$ with distinct underlying edges from those identified in Step~\ref{itm: defect chasing step 1}.

        \item \label{itm: defect chasing step 3} For each $j\ne i$, consider the maximal run of consecutive vertices incident to an $x_i$-oriented dimer in $\sigma$, in the $x_j$-direction passing through either vertex of $e$.
        The vertex $v_k$ just beyond each endpoint $u_k$ of this run, $k=1,2$, is not incident to an $x_i$-oriented dimer by construction.
        Hence, the $x_j$-oriented edge $\set{u_k,v_k}$ is a broken link of $\sigma$.
        This step yields $4(d-1)$ distinct broken links of $\sigma$.
        In particular, any edge identified in both Steps~\ref{itm: defect chasing step 2} and~\ref{itm: defect chasing step 3} connects a vertex incident to an $x_1$-oriented dimer to one incident to an $x_i$-oriented dimer, so the edge is a broken link of multiplicity $2$.
    \end{enumerate}
    Altogether, counting with multiplicity, we have identified $2(4d-5)$ broken links of $\sigma$ with the desired localization property.

    Suppose now that $\abs{\gamma_C}\ge\fm$.
    By Lemma~\ref{lem: cardinality of piece preimage}, the image of $A_C$ under $\pi_C$ has cardinality at least $\abs{A_C}/2^d$.
    Moreover, every piece in $\pi_C(A_C)$ has cardinality at least $\fm$, with every dual plaquette in the piece crossing a defect edge of $\sigma$, so $\gamma_C$ contains at least $\fm\abs{A_C}/2^d$ distinct defect edges of $\sigma$.
    For the remaining localization property, recall that for each $z\in A_C$, the piece $\pi_C(z)$ contains the dual plaquette $e_C(z)^\ast$, which intersects $\eta_{\fn z}\mathrm{R}_{\fn\times\dots\times\fn}$.
    Since $\fn\ge4(6d-5)(\fm+1)$ and the piece in question has cardinality at most $(6d-5)\fm$, we conclude that all the defect edges of $\sigma$ crossed by the piece are local to $\eta_{(\fn(z_1-1),\dots,\fn(z_d-1))}\mathrm{R}_{3\fn\times\dots\times 3\fn}$.

    The final claim follows by noting that every defect edge selected for a given component $C\in\cC(\sigma)$ is incident to $C$, and distinct components $C,C'\in\cC(\sigma)$ cannot contain adjacent vertices.
\end{proof}

\subsubsection{Proof of Theorem~\ref{thm: microscopic rarity of misaligned dimers}}

By Lemma~\ref{eqn: rare set topological property}, it suffices to show that $B$ is $C_{\ref{thm: microscopic rarity of misaligned dimers}} e^{-\beta(1-1/\fn)(4d-5)b}$-rare for some $\beta$-independent constant $C_{\ref{thm: microscopic rarity of misaligned dimers}}>0$, uniformly in the choice of the $\Z^d$-ergodic Gibbs measure $\P$ such that $\Phase{\P}=x_1$.

Recall the definition of a defect edge from Section~\ref{sec: defects associated with the rare set}.
When the nature of a defect edge is unspecified, we associate with it an energetic penalty of $\delta_d:=\min\set[\big]{\frac{1}{2d}\brackets[\big]{\frac{\mu}{2}+(d-1)b},\frac{b}{2}}$.
Since every vacancy of a configuration $\sigma$ is incident to exactly $2d$ edges of $\Z^d$ and $\delta_d\le b/2$, summing the energetic penalties associated with distinct broken links (with multiplicity) and defect edges (of unspecified nature) in $\sigma$ underestimates the true energetic penalty associated with $\sigma$ via its vacancies and broken links, as computed by the Hamiltonian~\eqref{eqn:finite-volume Hamiltonian}.

Let $A\subset\Z^d$ be a nonempty, finite set. 
Let $\sigma\in\Omega$ be a configuration such that $B(\sigma)$ is defined and $A\subset B(\sigma)$.
Define
\begin{equation}
    \cC_{<,A}(\sigma):=\set{C\in\cC(\sigma)\mid A\cap B_C\ne\emptyset,\,\abs{\gamma_C}<\fm},
\end{equation}
and let $\cE_{<,A}(\sigma)$ be the union of the multisets of broken links of $\sigma$ associated with $C\in\cC_{<,A}(\sigma)$ via Item~\ref{itm: short pieces} of Proposition~\ref{prop: localization of defect edges}.
Thus, counting multiplicity,
\begin{equation}
    \label{eqn: links associated with short pieces cardinality}
    \abs{\cE_{<,A}(\sigma)}=2(4d-5)\abs{\cC_{<,A}(\sigma)}.
\end{equation}
Define $\cC_{\ge,A}(\sigma)$ and $\cE_{\ge,A}(\sigma)$ analogously so that the latter is a set of distinct defect edges of $\sigma$ identified via Item~\ref{itm: long pieces} of Proposition~\ref{prop: localization of defect edges}, with
\begin{equation}
    \label{eqn: links associated with long pieces cardinality}
    \abs{\cE_{\ge,A}(\sigma)}\ge \frac{\fm}{2^d}\sum_{C\in\cC_{\ge,A}(\sigma)}\abs{A\cap B_C}.
\end{equation}
Note that the defect edges in $\cE_{<,A}(\sigma)\cup\cE_{\ge,A}(\sigma)$ are local to $\bigcup_{z\in A}\eta_{(\fn(z_1-1),\dots,\fn(z_d-1))}\mathrm{R}_{3\fn\times\dots\times 3\fn}$.

Next, with $R=\mathrm{R}_{\fn\times\dots\times\fn}$, consider the set of translations of the grid of $R$,
\begin{equation}
    \cG^R:=\set[\big]{\eta_{z}G^R\mid z\in\set{0,\dots,\fn-1}^d}.
\end{equation}
For each $G\in\cG^R$ and $e\in\V$, let $\indicator{G}(e)$ denote the indicator function that both endpoints of $e$ belong to the same cell of $G$; thus, if $\indicator{G}(e)=1$, then the event that $e$ is a defect edge is local to some cell of $G$.
Note the identity
\begin{equation}
    \label{eqn: shift identity}
    \frac{1}{\fn^{d-1}(\fn-1)}\sum_{G\in\cG^R}\indicator{G}(e)=1,\quad\text{for all }e\in\V.
\end{equation}
Hence, there exists $G\in\cG^R$ such that
\begin{align}
    {}&\frac{b}{2}\sum_{e\in\cE_{<,A}(\sigma)}\indicator{G}(e)
    +\delta_d\sum_{e\in\cE_{\ge,A}(\sigma)}\indicator{G}(e)
    {}&&
    \nonumber
    \\
    \ge{}& \frac{\fn^{d-1}(\fn-1)}{\fn^d}\brackets[\bigg]{\frac{b}{2}\abs{\cE_{<,A}(\sigma)}+\delta_d\abs{\cE_{\ge,A}(\sigma)}}
    {}&&\text{having used~\eqref{eqn: shift identity} to choose $G$}
    \nonumber
    \\
    \ge{}& \bigg(1-\frac{1}{\fn}\bigg)\brackets[\bigg]{(4d-5)b\,\abs{\cC_{<,A}(\sigma)}+\frac{\fm\delta_d}{2^d}\sum_{C\in\cC_{\ge,A}(\sigma)}\abs{A\cap B_C}}
    {}&&\text{using~\eqref{eqn: links associated with short pieces cardinality} and~\eqref{eqn: links associated with long pieces cardinality}}
    \nonumber
    \\
    \ge{}& \bigg(1-\frac{1}{\fn}\bigg)(4d-5)b\brackets[\bigg]{\abs{\cC_{<,A}(\sigma)}+\sum_{C\in\cC_{\ge,A}(\sigma)}\abs{A\cap B_C}}
    {}&&\text{by the choice of $\fm$ in~\eqref{eqn: microscopic length scale m}}
    \nonumber
    \\
    \ge{}& \bigg(1-\frac{1}{\fn}\bigg)(4d-5)b\abs{A}.
    {}&&
    \label{eqn: penalty retainment}
\end{align}

Now, for each $G\in\cG^R$, let $\cF_A(G)$ be the set of cells of $G$ that intersect
\begin{equation}
    \bigcup_{z\in A}\eta_{(\fn(z_1-1),\dots,\fn(z_d-1))}\mathrm{R}_{3\fn\times\dots\times 3\fn}.
\end{equation}
Since any given hypercube of side length $3\fn$ intersects at most $5^d$ cells of $G$, we have that
\begin{equation}
    \label{eqn: number of cells intersecting A}
    \abs{\cF_A(G)}\le 5^d\abs{A}.
\end{equation}
For $\sigma\in\Omega$ and $F\in\cF_A(G)$, define
\begin{align}
    b_F(\sigma){}&:=\abs{\set{e\in\V\mid\text{both endpoints of $e$ belong to $F$, and $e$ is a broken link of $\sigma$}}},
    \\
    d_F(\sigma){}&:=\abs{\set{e\in\V\mid\text{both endpoints of $e$ belong to $F$, and $e$ is a defect edge of $\sigma$}}},
\end{align}
in both of which we count broken links with multiplicity.
Note that $0\le b_F(\sigma)\le d_F(\sigma)\le 2d\fn^{d-1}(\fn-1)$.
For integers $r,s\ge 0$ with $r+s\le 2d\fn^{d-1}(\fn-1)$, define the $F$-local event
\begin{equation}
    E_F(r,s):=\set{\sigma\in\Omega\mid b_F(\sigma)\ge r,\,d_F(\sigma)\ge r+s}.
\end{equation}
Also, define $\cP_A(G)$ as the set of functions $\cF_A(G)\to\set{(r,s)\mid r,s\ge 0,\, r+s\le 2d\fn^{d-1}(\fn-1)}$ such that, writing $(r_F,s_F)$ for the image of $F\in\cF_A(G)$ under such a function, we have
\begin{equation}
    \sum_{F\in\cF_A(G)}\bigg(\frac{b}{2}\cdot r_F+\delta_d\cdot s_F\bigg)\ge \bigg(1-\frac{1}{\fn}\bigg)(4d-5)b\abs{A}.
\end{equation}
By~\eqref{eqn: penalty retainment}, we have the cover
\begin{equation}
    \set{A\subset B}
    \subseteq\bigcup_{G\in\cG^R}\bigcup_{\phi\in\cP_A(G)}\bigcap_{F\in\cF_A(G)}E_F(r_F,s_F).
\end{equation}

It is straightforward to verify that there exists a constant $C>0$, depending only on $\fn$ (which is fixed since we have fixed $\mu$ and $b$), such that, uniformly in $\eta\in\set{\eta_{z}\mid z\in\set{0,\dots,\fn-1}^d}$, the cell $F$ of the translated grid $\eta G^R$, and $r,s\ge 0$,
\begin{equation}
    \norm{E_F(r,s)}_{F}
    \le C\exp{-\beta\bigg(\frac{b}{2}\cdot r+\delta_d\cdot s\bigg)}.
\end{equation}
Recalling~\eqref{eqn: number of cells intersecting A} and using Proposition~\ref{prop: infinite-volume chessboard estimate}, we conclude the proof by a union bound.

\section{Absence of translational symmetry breaking}
\label{sec: absence of translational symmetry breaking}

In this section, we prove the uniqueness of Gibbs measures exhibiting each phase identified in Section~\ref{sec: Heilmann-Lieb approach}. 
As a consequence, we rule out translational symmetry breaking and obtain estimates on the decay of correlations within each phase, completing the proof of Items~\ref{itm:main - invariance and extremality} and~\ref{itm:main - decay of correlations} of Theorem~\ref{thm:main}.
The main result of this section is as follows.

\begin{proposition}
    \label{prop:unique ergodic Gibbs measure for each phase}
    For each $o\in\set{x_1,\dots,x_d}$, there exists a unique $\Z^d$-ergodic Gibbs measure $\P_o$ such that $\Phase{\P_o}=o$. 
    Moreover,
    \begin{enumerate}
        \item \label{itm:unique ergodic Gibbs measure for each phase - extremal} $\P_o$ is extremal;
        \item \label{itm:unique ergodic Gibbs measure for each phase - reflection} for all $2\le j\le d$, $\P_{x_j}$ is the push-forward of $\P_{x_1}$ under $\varphi^{j-1}$, where $\varphi$ denotes the coordinate permutation $(x_1,\dots,x_{d-1},x_d)\mapsto(x_d,x_1,\dots,x_{d-1})$;
        \item \label{itm:unique ergodic Gibbs measure for each phase - decomposition} every periodic Gibbs measure is a convex combination of the $\P_{x_i}$, $1\le i\le d$.
    \end{enumerate}
\end{proposition}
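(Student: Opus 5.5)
We prove uniqueness within the $x_1$-phase by cluster swapping in the sense of Sheffield, and obtain the other statements from symmetry and ergodic decomposition. Existence of an ergodic measure with $\Phase{\P}=x_1$ comes first. Take the $\Z^d$-invariant periodic limit $\P$ from Section~\ref{sec: Heilmann-Lieb approach}. The chessboard estimate is symmetric under coordinate permutations, so we may symmetrize $\P$ under $\varphi$; its ergodic components are then periodic Gibbs measures. By Proposition~\ref{prop: d phases}, each component has some phase, and applying $\varphi$ moves components between phases. Hence some ergodic component has phase $x_1$, and $\varphi^{j-1}$ carries it to a component with phase $x_j$. Once uniqueness is proved, this gives Item~\ref{itm:unique ergodic Gibbs measure for each phase - reflection}.

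For uniqueness, let $\P,\P'$ be two $\Z^d$-ergodic Gibbs measures with phase $x_1$ and couple them independently as $\P\otimes\P'$. Both marginals satisfy Theorem~\ref{thm: microscopic rarity of misaligned dimers}. The clusters are defined on the pair $(\sigma,\sigma')$ as follows. Away from misaligned dimers, both configurations are $x_1$-oriented, and their interaction across columns involves only parallel dimers in neighbouring $x_1$-columns. The key point is that the conditional law of a single $x_1$-column, given its neighbours, is a one-dimensional monomer-dimer chain whose only nontrivial feature is the parity/phase of its dimer tiling. We therefore take as clusters the maximal $x_1$-segments on which $\sigma$ and $\sigma'$ have opposite parity, terminated where both have a vacancy at the same location. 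Inside a \emph{sealed box}, where neither configuration has misaligned dimers, swapping $\sigma$ and $\sigma'$ on such a segment preserves the product weight. This holds because the lateral attraction terms, summed over both configurations, are symmetric under the swap whenever all relevant dimers are $x_1$-oriented. Sheffield's swapping lemma then reduces uniqueness to showing that, $\P\otimes\P'$-a.s., every cluster is finite.

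To show finiteness, we build sealed boxes on a mesoscopic scale of $\fn$ transversally and $\ell_0$ (up to constants) longitudinally. A box is sealed if it contains no misaligned dimer in either configuration, and, within its central column segment of length comparable to $\ell_0$, each relevant column of $\sigma$ and $\sigma'$ has at least one vacancy pair that lets the parity discrepancy end. The probability of a misaligned dimer in a box of length $\ell_0$ is controlled by Theorem~\ref{thm: microscopic rarity of misaligned dimers} together with Lemmas~\ref{lem:intersection of strongly percolating sets} and~\ref{lem:strongly percolating grid events}. Because $(4d-6)b>\mu$ and $\fn$ is chosen as in~\eqref{eqn: microscopic length scale n}, the misalignment rate $e^{-\beta(1-1/\fn)(4d-5)b}$ times $\ell_0^{O(1)}$ is small. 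The vacancy events are bounded by the chessboard estimate for product measures (Lemma~\ref{lem:strongly percolating grid events - chessboard bound for product measures}). The lower bound from Proposition~\ref{prop:partition function magnetized boundary} shows that a vacancy-free column of length $\gg\ell_0$ has chessboard seminorm exponentially small in its length over $\ell_0$. Strong percolation of sealed boxes, combined with Lemma~\ref{lem:quantitative Peierls argument}, then confines each cluster to a one-dimensional column and forces it to terminate, with exponential tails at rate $\ell_0^{-1}$ along $x_1$ and rate $1$ transversally.

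The resulting disagreement bound shows $\P=\P'$, and, taking $\P'=\P$ conditioned on a tail event, also gives extremality (Item~\ref{itm:unique ergodic Gibbs measure for each phase - extremal}). Item~\ref{itm:unique ergodic Gibbs measure for each phase - decomposition} follows because a periodic Gibbs measure decomposes into $\cL$-ergodic components. Each such component is a periodic Gibbs measure, and averaging it over $\Z^d/\cL$ gives a $\Z^d$-invariant measure. Its phase is well defined, since the argument of Proposition~\ref{prop: d phases} only needs periodicity, and uniqueness then identifies it with $\P_{x_i}$. The main obstacle is the sealed-box estimate. It has to use the separation of scales between $\ell_0$ and the misalignment rate, and it has to check that swapping a parity cluster is weight-preserving even at the cluster ends, where adjacent columns may disagree. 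If the latter fails, I would enlarge clusters to include the neighbouring column segments inside the sealed box, so that their boundaries lie in regions where $\sigma$ and $\sigma'$ coincide.
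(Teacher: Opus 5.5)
The overall architecture matches the paper: cluster swapping within the $x_1$-phase, sealed boxes of length $\asymp\ell_0$ along $x_1$, a Peierls bound on the coarse grid, and symmetry plus ergodic decomposition for the remaining items. The existence, extremality and decomposition steps are fine; in the last one, you need extremality of $\P_{x_i}$ to pass from the $\Z^d/\cL$-average back to $\nu$. The gap is in the choice of clusters, which is the crux of the argument, and the claim you put in its place is false.

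\textbf{Clusters must work globally.} Sheffield's swapping lemma needs clusters defined for \emph{every} pair $(\sigma,\sigma')$ such that swapping any finite cluster is hard-core compatible, leaves the clusters invariant, and preserves the combined weight. Having this only inside sealed boxes is not enough. Measure preservation of the swap map is verified through the DLR equations on arbitrary cylinder events, so it is needed for every finite cluster wherever it lies.

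\textbf{Your clusters fail this.} They are not defined near misaligned dimers. Even in a fully $x_1$-oriented region, swapping a single-column discrepancy segment does not preserve the weight. Take an endpoint $w$ of the segment that is covered in $\sigma$ and vacant in $\sigma'$. Let $w'$ be its transverse neighbour in the adjacent column, also covered in $\sigma$ and vacant in $\sigma'$ (e.g.\ $\sigma$ fully packed and $\sigma'$ with vacancies at $w$ and $w'$). Before the swap, $\{w,w'\}$ is a broken link in neither configuration. After swapping only the segment ending at $w$, it is a broken link in both, so the weight changes by $e^{-\beta b}$. The lateral terms are swap-symmetric only if $(n_1^\sigma(w)-n_1^{\sigma'}(w))(n_1^\sigma(w')-n_1^{\sigma'}(w'))=0$ on every transverse edge at the cluster boundary.

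\textbf{The paper's fix and what it then requires.} The paper puts into $\Delta_{\sigma,\sigma'}$, besides $\sigma\triangle\sigma'$, every edge $\{u,v\}$ of orientation $x_i$ with $(n_j^\sigma(u)-n_j^{\sigma'}(u))(n_j^\sigma(v)-n_j^{\sigma'}(v))\neq 0$ for some $j\neq i$. Invariance and exact preservation of $\Phi_B(\sigma)+\Phi_B(\sigma')$ then hold for all configurations, via $2(xy+x'y')=(x+x')(y+y')+(x-x')(y-y')$.

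Your fallback of enlarging clusters into neighbouring columns is this idea in outline. But once clusters can hop between columns, your sealed boxes (no misalignment plus a terminating vacancy) no longer confine them to one column. You need an extra sealing condition forbidding such transverse edges along the $3\fs$-segment, the paper's $\Sigma_2$. You also need a new estimate showing that $\Sigma_2$ holds on the scale $\fs\asymp K\ell_0$. In an aligned environment, a transverse edge whose endpoints are each vacant in one of the configurations costs at least $\mu+(4d-5)b$ (Lemma~\ref{lem: M is rare}). This exceeds $\frac12[\mu+4(d-1)b]$, which is what makes the estimate close.

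\textbf{Termination condition.} A discrepancy segment ends at a vacancy of \emph{either} configuration (the paper's $\Sigma_3$), not at a common vacancy as you wrote. Demanding a common vacancy would push the sealing length to order $\ell_0^2$. Then $e^{-\beta(1-1/\fn)(4d-5)b}\,\ell_0^2$ is not small under $(4d-6)b>\mu$, e.g.\ for $\mu\ge 0$. The separation of scales absorbs only a single power of $\ell_0$, not $\ell_0^{O(1)}$.
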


The proof of Proposition~\ref{prop:unique ergodic Gibbs measure for each phase} is by a custom implementation of the method of cluster swapping introduced by Sheffield~\cite{sheffield2005random}, which generalizes that of disagreement percolation introduced by van den Berg~\cite{van1993uniqueness} and enhanced in van den Berg--Maes~\cite{van1994disagreement}.
Rather than relying on the latter method which requires considering an extended connectivity on $L(\Z^d)$ that makes the interacting monomer-dimer models into a Markov random field, here we tailor the clusters to be swapped based on a structural simplification special to this family of models, which in turn yields much more manageable clusters.

In Section~\ref{sec:cluster swapping}, we construct the clusters and state and prove the consequences of cluster swapping under the assumption that the clusters are almost surely finite.
In Section~\ref{sec:clusters are almost surely finite}, we verify that the clusters are indeed almost surely finite.
Lastly, in Section~\ref{sec:deduction of uniqueness of Gibbs measures}, we deduce Proposition~\ref{prop:unique ergodic Gibbs measure for each phase}.

\subsection{Cluster swapping}
\label{sec:cluster swapping}

In this subsection, we specify the clusters to be used in the cluster swapping argument and state and prove the implications of the latter.

We start by defining the clusters. 
Let $\sigma,\sigma'\in\Omega$.
Define $\Delta_{\sigma,\sigma'}$ as the union of the symmetric difference $\sigma\triangle\sigma'$ and the set of edges $e=\set{u,v}\in\V$ such that, letting $x_i$ be its orientation, there exists $j\ne i$ such that
\begin{equation}
    \brackets[\bigg]{\sum_{\substack{f\in\V,\,f\ni u\\x_j\text{-oriented}}}\parentheses[\big]{\sigma(f)-\sigma'(f)}}
    \brackets[\bigg]{\sum_{\substack{f\in\V,\,f\ni v\\x_j\text{-oriented}}}\parentheses[\big]{\sigma(f)-\sigma'(f)}}
    \ne 0.
\end{equation}
A \textbf{cluster} is a (maximal) connected component of $\Delta_{\sigma,\sigma'}$ with respect to the usual adjacency relation on $L(\Z^d)$ (see Section~\ref{sec:basic definitions}).

Suppose that $A\subset\V$ is finite.
Denote by $C_A(\sigma,\sigma')$ the union of all clusters that intersect $A$.
Denote by $C_A'(\sigma,\sigma')$ the union of $A$ and the set of edges in $\V$ (a) at distance one from $A$ (with respect to the adjacency relation on $\V$) or (b) at distance one or two from $C_A(\sigma,\sigma')$.
Define a mapping $m:\Omega^2\to\Omega^2$ (see Lemma~\ref{itm: cluster swapping - hard-core compatibility}), $(\sigma,\sigma')\mapsto(\omega,\omega')$, where
\begin{equation}
    \label{eqn: cluster swapping - mapping}
    \omega(e),\omega'(e):=
    \begin{cases}
        \sigma'(e),\sigma(e) & \text{if }C_A(\sigma,\sigma')\text{ is finite and }e\in C_A(\sigma,\sigma')\\
        \sigma(e),\sigma'(e) & \text{otherwise}
    \end{cases}
    .
\end{equation}
For $A,B\subset\V$, let $A\leftrightarrow B$ denote the event that a cluster intersects both $A$ and $B$.

We can now state the main result of this subsection.

\begin{theorem}[Cluster swapping]
    \label{thm:cluster swapping}
    Let $\P,\P'$ be Gibbs measures of the model. 
    Let $\sigma,\sigma'$ be independent samples from $\P$ and $\P'$, respectively.
    If 
    \begin{equation}
        \label{eqn:cluster swapping - no infinite component}
        (\P\otimes\P')(\Delta_{\sigma,\sigma'}\text{ has an infinite connected component})=0,
    \end{equation}
    then the following hold:
    \begin{enumerate}
        \item $\P=\P'$ and $\P$ is extremal;
        \item Let $A,B\subset\V$ and $f,g:\Omega\to[-1,1]$ be $A$- and $B$-local functions, respectively.
        Then,
        \begin{equation}
            \abs{\Cov_\P(f,g)}\le 2(\P\otimes\P)(A\leftrightarrow B).
        \end{equation}
    \end{enumerate}
\end{theorem}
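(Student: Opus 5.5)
The plan is the standard Sheffield/van den Berg–Maes cluster-swapping argument. The work goes into checking that the custom clusters $\Delta_{\sigma,\sigma'}$ have the right locality: swapping a finite union of clusters produces valid configurations, and it preserves the product weight.

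\textbf{Step 1 (validity and energy conservation).} Fix finite $A\subset\V$ and work on the event that $C:=C_A(\sigma,\sigma')$ is finite. I would first check that $\omega,\omega'$ are configurations. If two incident edges $e\in C$ and $f\notin C$ had $\omega(e)=\omega(f)=1$, then $\sigma'(e)=1$ and $\sigma(f)=1$. If $\sigma(f)\ne\sigma'(f)$, then $f\in\Delta$ and $f$ is adjacent to $e$, so $f\in C$, a contradiction. Hence $\sigma'(f)=\sigma(f)=1$, which contradicts the hard-core constraint for $\sigma'$.

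Next I would show that the total Hamiltonian contribution $H(\omega)+H(\omega')$ equals $H(\sigma)+H(\sigma')$ on every potential $\Phi_B$ whose support meets $C$. It suffices to check that for each potential the pair of local patterns is either swapped or unchanged.

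\emph{Vacancy potentials at a vertex $v$.} Either all edges at $v$ with $\sigma\neq\sigma'$ lie in $C$, since such edges are mutually adjacent, or none do. In either case the pair of local patterns at $v$ is swapped or unchanged.

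\emph{Broken-link potentials at an edge $e=\{u,v\}$.} This is where the extra edges in $\Delta$ enter. Call the $x_j$-imbalance at $u$ the sum $\sum_{f\ni u,\,x_j\text{-oriented}}(\sigma(f)-\sigma'(f))$. If for some $j\ne i$ the $x_j$-imbalances at both $u$ and $v$ are nonzero, then $e\in\Delta$. This links the disagreements at $u$ and at $v$ into one cluster, so the relevant dimers are swapped together. Otherwise, for every $j$, at most one endpoint carries a disagreement in direction $x_j$. The multiplicity $m(e)$ depends only on which orientation, if any, the perpendicular dimers at $u$ and at $v$ have. One then checks case by case that $m_\omega(e)+m_{\omega'}(e)=m_\sigma(e)+m_{\sigma'}(e)$, because swapping at one endpoint alone just exchanges which of the two configurations sees which pattern.

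I expect this case analysis to be the main obstacle. In it I would use $C'_A$, the distance-two neighbourhood, to ensure that all relevant potentials are covered.

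\textbf{Step 2 (measure preservation and conclusions).} Swapping is an involution, because $\Delta_{\omega,\omega'}=\Delta_{\sigma,\sigma'}$ and hence $C_A$ is unchanged. The event $\{C_A=K\}$ is measurable with respect to the restriction to $K$ and its distance-two neighbourhood. By the DLR equations for $\P\otimes\P'$ in a large finite volume containing $C'_A$, conditioning on the outside of a finite set together with Step 1 gives that $m$ pushes $\P\otimes\P'$ forward to $\P'\otimes\P$ restricted to the event $\{C_A\text{ finite}\}$. This event has full measure by \eqref{eqn:cluster swapping - no infinite component}.

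For $A$-local $f$, swapping exchanges $\sigma|_A$ and $\sigma'|_A$, since $A\cap\Delta\subset C_A$. Hence $\P(f)=\P'(f)$ for every local $f$, which gives $\P=\P'$.

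For the covariance bound, take $\P'=\P$ and apply $m$ with the set $A$. On the event $\neg(A\leftrightarrow B)$ the cluster union $C_A$ misses $B$, so $g(\omega)=g(\sigma)$ while $f(\omega)=f(\sigma')$. Writing
\[
\E[f(\sigma)g(\sigma)]=\E[f(\omega)g(\omega)]=\E[f(\sigma')g(\sigma)]+\text{error},
\]
the error is supported on $\{A\leftrightarrow B\}$ and is bounded by $2\,\P\otimes\P(A\leftrightarrow B)$, because $|f(\omega)g(\omega)-f(\sigma')g(\sigma)|\le2$.

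Extremality follows from the decay of correlations. Since all clusters are almost surely finite, $\P\otimes\P(A\leftrightarrow B)\to0$ as $B$ is moved off to infinity, so $\P$ is trivial on the tail $\sigma$-algebra. Alternatively, apply uniqueness to any two extremal components of $\P$.
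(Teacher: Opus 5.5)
Your route is the paper's: hard-core compatibility of the swap, invariance of $\Delta_{\sigma,\sigma'}$ (so $m$ is an involution), preservation of $\Phi_B(\sigma)+\Phi_B(\sigma')$, measure preservation via DLR, then the Hadas--Peled deductions, which you spell out correctly. However, the measure-preservation step is misstated in a way that breaks the proof of $\P=\P'$.

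\textbf{The measure-preservation step.} Conditioning on $(\sigma_{K^c},\sigma'_{K^c})$ and swapping only inside $K=C_A(\sigma,\sigma')$ leaves the outside untouched, so it keeps its $\P\otimes\P'$-law. What DLR plus energy conservation therefore gives is that $m$ \emph{preserves} $\P\otimes\P'$ (Lemma~\ref{lem: cluster swapping - measure preservation}). It does not push $\P\otimes\P'$ forward to $\P'\otimes\P$; that holds only a posteriori. With your version, the observation $\omega|_A=\sigma'|_A$ yields only the tautology $\P'(f)=\P'(f)$. The correct chain is $\P(f)=\E f(\sigma)=\E f(\omega)=\E f(\sigma')=\P'(f)$, and it uses invariance of $\P\otimes\P'$.

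Two smaller points in the same step:
\begin{enumerate}
\item The event $\{C_A=K\}$ also depends on the configurations near $A$, since one must certify that no edge of $A\setminus K$ lies in $\Delta_{\sigma,\sigma'}$. This is why $C'_A$ contains $A$ and its neighbours, not just the distance-two neighbourhood of $K$.
\item The identity $\Delta_{\omega,\omega'}=\Delta_{\sigma,\sigma'}$ needs an argument (Lemma~\ref{itm: cluster swapping - invariance}). The disagreeing $x_j$-edges at a vertex are pairwise adjacent, so each imbalance is either kept or negated.
\end{enumerate}

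\textbf{The broken-link step.} Your dichotomy is the right one, but the reason you give is not. Your initial reduction, ``each local pattern is swapped or unchanged'', fails for broken-link potentials. Swapping at one endpoint also does not exchange patterns between the two configurations. Here is a counterexample for $d\ge 3$:
\begin{itemize}
\item $\sigma$ has an $x_j$-dimer at $u$ and an $x_k$-dimer at $v$, with $j\ne k$ and both $\ne i$;
\item $\sigma'$ has no dimers touching $u$ or $v$, and the two configurations otherwise agree near $e$;
\item $A$ is the $x_j$-dimer.
\end{itemize}
Then $C_A$ is that single edge, and the multiplicities of $e$ go from $(2,0)$ to $(1,1)$. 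Only the sum survives.

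The mechanism is the one in Lemma~\ref{itm: cluster swapping - potential preservation}. Write $m_\sigma(e)=\sum_{j\ne i}[n_j^\sigma(u)+n_j^\sigma(v)-2n_j^\sigma(u)n_j^\sigma(v)]$. The linear terms are preserved because $\sigma+\sigma'=\omega+\omega'$ edgewise. By $2(xy+x'y')=(x+x')(y+y')+(x-x')(y-y')$, the product terms reduce to the imbalance products $(n_j^\sigma(u)-n_j^{\sigma'}(u))(n_j^\sigma(v)-n_j^{\sigma'}(v))$. These are preserved precisely because a nonzero one forces $e\in\Delta_{\sigma,\sigma'}$, which ties the disagreements at $u$ and $v$ into one cluster. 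Your case check would end up verifying this identity, so here the issue is naming the actual cancellation rather than a missing idea.
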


We start with some elementary properties of the clusters and the mapping $m$.

\begin{lemma}
    Let $A\subset\V$ be finite and $\sigma,\sigma'\in\Omega$, and write $(\omega,\omega')=m(\sigma,\sigma')$. 
    The following hold:
    \begin{enumerate}[label=(\roman*),ref=\thelemma(\roman*)]
        \item \label{itm: cluster swapping - hard-core compatibility} The mapping $m$ is compatible with the hard-core repulsion: $\omega,\omega'\in\Omega$.
        \item \label{itm: cluster swapping - invariance} The clusters are invariant under $m$: $\Delta_{\omega,\omega'}=\Delta_{\sigma,\sigma'}$. Consequently, $C_A(\omega,\omega')=C_A(\sigma,\sigma')$, and $m(\omega,\omega')=(\sigma,\sigma')$, namely, $m$ is an involution.
        \item \label{itm: cluster swapping - potential preservation} Recall the potential notation $\Phi_B$ from Section~\ref{sec:the Hamiltonian}. For all finite $B\subset\V$, $m$ preserves the combined potential at $B$: $\Phi_B(\omega)+\Phi_B(\omega')=\Phi_B(\sigma)+\Phi_B(\sigma')$.
    \end{enumerate}
\end{lemma}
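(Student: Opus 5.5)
Throughout, write $C:=C_A(\sigma,\sigma')$. If $C$ is infinite then $m$ is the identity and all three items are trivial, so assume $C$ is finite. For a vertex $u$ and an orientation $x_j$, set
\[
n_j(u):=\sum_{f\ni u,\ x_j\text{-oriented}}\sigma(f),\qquad n'_j(u):=\sum_{f\ni u,\ x_j\text{-oriented}}\sigma'(f),\qquad s_j(u):=n_j(u)-n'_j(u)\in\set{-1,0,1}.
\]
All three items rest on one local observation. Call a vertex $u$ \emph{swapped} if some edge of $C$ in $\sigma\triangle\sigma'$ contains $u$. If $u$ is swapped, then every edge of $\sigma\triangle\sigma'$ containing $u$ lies in $C$. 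Indeed, any two such edges are adjacent in $L(\Z^d)$ and both lie in $\Delta_{\sigma,\sigma'}$, so they belong to the same cluster. Consequently, on the star of $u$ the pair $(\omega,\omega')$ restricted to the disagreement edges equals either $(\sigma',\sigma)$ (if $u$ is swapped) or $(\sigma,\sigma')$ (otherwise). Agreement edges are unchanged in either case, since swapping them has no effect. In particular, the counts $(n_j(u),n'_j(u))$ for $(\omega,\omega')$ are either exchanged or unchanged, so $s_j^{\omega,\omega'}(u)=\pm s_j(u)$.

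For (ii), swapping on $C$ preserves the set $\sigma\triangle\sigma'$, and by the observation above $|s_j(u)|$ is unchanged at every vertex. Hence the defining condition $s_j(u)s_j(v)\neq0$ holds for $(\omega,\omega')$ exactly when it holds for $(\sigma,\sigma')$, and $\Delta_{\omega,\omega'}=\Delta_{\sigma,\sigma'}$. The clusters and $C_A$ are therefore unchanged, and applying $m$ twice swaps back, so $m$ is an involution.

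For (i), suppose two distinct edges $f,g$ at a vertex $u$ were both in $\omega$. Since $\sigma$ and $\sigma'$ are valid, exactly one of them, say $f$, lies in $C$, so $\omega(f)=\sigma'(f)=1$ and $\omega(g)=\sigma(g)=1$. Then $\sigma(f)=0$, because otherwise $f,g\in\sigma$ would both meet $u$. Likewise $\sigma'(g)=0$, because otherwise $f,g\in\sigma'$ would both meet $u$. So $f,g\in\sigma\triangle\sigma'$, and being adjacent, $g$ lies in the same cluster as $f$, i.e.\ in $C$, which is a contradiction. The same argument applies to $\omega'$.

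For (iii), consider first the vacancy potential at $u$. It depends only on the star of $u$, which by the observation is either swapped or fixed, so the sum of the two indicators is preserved.

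Now take the broken-link potential at $e=\set{u,v}$, oriented along $x_i$. Write $a(w)$ for the number of dimers at $w$ perpendicular to $e$. One checks the identity
\[
m_\sigma(e)=a(u)+a(v)-2\sum_{j\neq i}n_j(u)n_j(v)
\]
by cases: if both ends carry perpendicular dimers of the same orientation the value is $0$; if of different orientations, $2$; if only one end does, $1$; if neither, $0$. The terms $a(u)+a(v)$ depend only on the individual stars, so their combined value over the pair is preserved. The cross term is also preserved trivially unless exactly one of $u,v$, say $u$, is swapped while $v$ is incident to some disagreement edge $g\notin C$.

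In that bad case, $e\notin\Delta_{\sigma,\sigma'}$. Otherwise $e$ would be adjacent both to a disagreement edge at $u$ lying in $C$ and to $g$, forcing $e\in C$ and then $g\in C$. Hence $e$ is an agreement edge, and $s_j(u)s_j(v)=0$ for all $j\neq i$. Preservation of the cross term under swapping only at $u$ amounts to
\[
n_j(u)n_j(v)+n'_j(u)n'_j(v)=n'_j(u)n_j(v)+n_j(u)n'_j(v),
\]
which is equivalent to $(n_j(u)-n'_j(u))(n_j(v)-n'_j(v))=0$, i.e.\ exactly $s_j(u)s_j(v)=0$. This is precisely why the extra edges were built into $\Delta_{\sigma,\sigma'}$. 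The edge $e$ itself is unaffected since it is an agreement edge, and the other potentials vanish. The main care needed is this bad case; everything else is bookkeeping.
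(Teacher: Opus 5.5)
Your proof is correct and follows the paper's argument: (i) and (ii) use the same adjacency-of-disagreement-edges reasoning (your ``swapped star'' observation is the paper's case analysis on the set $F$), and for (iii) you use the same formula for $m_\sigma(e)$. The only difference is in bookkeeping for the cross term: you split according to which endpoints lie on swapped stars and show $e\notin\Delta_{\sigma,\sigma'}$ in the mixed case, whereas the paper uses $2(xy+x'y')=(x+x')(y+y')+(x-x')(y-y')$ to reduce to conservation of $s_j(u)s_j(v)$; both rest on the same defining property of the extra edges in $\Delta_{\sigma,\sigma'}$.
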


\begin{proof}[Proof of~\ref{itm: cluster swapping - hard-core compatibility}]
    It suffices to show that no swapped dimer violates the hard-core constraint.
    Without loss of generality, suppose that $C_A(\sigma,\sigma')$ is finite and $e\in C_A(\sigma,\sigma')$ is a dimer in $\sigma$.
    Suppose that $f\in\V$ intersects $e$ and is a dimer in $\sigma'$.
    We must have that $f\ne e$, as otherwise $e\not\in\Delta_{\sigma,\sigma'}$.
    However, this implies that $e,f\in\sigma\triangle\sigma'$ and so $f\in C_A(\sigma,\sigma')$.
    Therefore, $f$ is swapped into $\omega$ and cannot intersect $e$ after the latter is swapped into $\omega'$.
\end{proof}

\begin{proof}[Proof of~\ref{itm: cluster swapping - invariance}]
    There is nothing to prove if $C_A(\sigma,\sigma')$ is infinite, so we assume otherwise.
    It is clear that $\omega\triangle\omega'=\sigma\triangle\sigma'$.
    Let $e\in\V$ be an arbitrary edge, $x_i$ be its orientation, and $u$ be an endpoint of $e$.
    It suffices to show that, for each $j\ne i$, the sums
    \begin{equation}
        \sum_{\substack{f\in\V,\,f\ni u\\x_j\text{-oriented}}}\parentheses[\big]{\sigma(f)-\sigma'(f)}
        \quad\text{and}\quad
        \sum_{\substack{f\in\V,\,f\ni u\\x_j\text{-oriented}}}\parentheses[\big]{\omega(f)-\omega'(f)}
    \end{equation}
    are either equal or differ only in sign.
    To this end, consider the set $F$ of $x_j$-oriented edges $f\in\V$ that are incident to $u$ and such that $\sigma(f)\ne\sigma'(f)$. 
    If $F=\emptyset$, then both sums vanish.
    If $F=\set{f}$, then the sums are equal if $f\not\in C_A(\sigma,\sigma')$ and differ only in sign otherwise.
    Otherwise, $F=\set{f_1,f_2}$.
    Since both edges are incident to $u$, either both are in $C_A(\sigma,\sigma')$ or neither one is; the sums are equal in the latter case and differ only in sign in the former.
\end{proof}

\begin{proof}[Proof of~\ref{itm: cluster swapping - potential preservation}]
    There is nothing to prove if $C_A(\sigma,\sigma')$ is infinite, so we assume otherwise.
    It suffices to check the identity for those $B\subset\V$ realizing the vacancy and broken link potentials (see Section~\ref{sec:the Hamiltonian} for the notational conventions used below).
    For the vacancy potential~\eqref{eqn:vacancy potential}, we write
    \begin{equation}
        \Phi_B(\sigma)=\frac{1}{2}\brackets[\big]{\mu+2(d-1)b}\brackets[\bigg]{1-\sum_{\substack{e\in\V \\ e\ni v}}\sigma(e)}.
    \end{equation}
    The required identity follows from the observation that, by construction,
    \begin{equation}
        \label{eqn: cluster swapping - combined occupancy preservation}
        \sigma(e)+\sigma'(e)=\omega(e)+\omega'(e)\quad\text{for all }e\in\V.
    \end{equation}
    For the broken-link potential~\eqref{eqn:broken-link potential}, we start with a definition: for $\sigma\in\Omega$, $1\le i\le d$, and $v\in\Z^d$, let
    \begin{equation}
        \label{eqn: occupancy number}
        n^\sigma_i(v):=\brackets[\bigg]{\sum_{\substack{e\in\V,\,e\ni v\\x_i\text{-oriented}}}\sigma(e)}\in\set{0,1}
    \end{equation} 
    be the number of $x_i$-oriented dimers in $\sigma$ incident to $v$.
    Now, let $e=\set{u,v}$ be the edge under inspection by the broken-link potential, and let $x_i$ be its orientation.
    We write
    \begin{equation}
        \Phi_B(\sigma)=\frac{b}{2}\sum_{j\ne i}\brackets[\Big]{
            n_j^\sigma(u)+n_j^\sigma(v)-2n_j^\sigma(u)n_j^\sigma(v)
        },
    \end{equation}
    noting that, as expected, the summand is equal to $1$ precisely when exactly one endpoint of $e$ is incident to an $x_j$-oriented dimer in $\sigma$ and otherwise vanishes, for each $j\ne i$.
    By~\eqref{eqn: cluster swapping - combined occupancy preservation}, we have that
    \begin{equation}
        \label{eqn: cluster swapping - combined occupancy preservation for broken link potential}
        n_j^{\sigma}(u)+n_j^{\sigma'}(u)=n_j^{\omega}(u)+n_j^{\omega'}(u)
        \quad\text{and}\quad
        n_j^{\sigma}(v)+n_j^{\sigma'}(v)=n_j^{\omega}(v)+n_j^{\omega'}(v)
        \quad\text{for all }j\ne i.
    \end{equation}
    Thus, it suffices to show that
    \begin{equation}
        \sum_{j\ne i}\brackets[\Big]{n_j^\sigma(u)n_j^\sigma(v)+n_j^{\sigma'}(u)n_j^{\sigma'}(v)}
        =\sum_{j\ne i}\brackets[\Big]{n_j^\omega(u)n_j^\omega(v)+n_j^{\omega'}(u)n_j^{\omega'}(v)}.
    \end{equation}
    By~\eqref{eqn: cluster swapping - combined occupancy preservation for broken link potential} and the elementary identity $2(xy+x'y')=(x+x')(y+y')+(x-x')(y-y')$, it suffices to establish the termwise identity
    \begin{equation}
        \brackets[\Big]{n_j^\sigma(u)-n_j^{\sigma'}(u)}\brackets[\Big]{n_j^\sigma(v)-n_j^{\sigma'}(v)}
        =\brackets[\Big]{n_j^\omega(u)-n_j^{\omega'}(u)}\brackets[\Big]{n_j^\omega(v)-n_j^{\omega'}(v)}
    \end{equation}
    for all $j\ne i$.
    If either factor on the LHS vanishes, it follows from the proof of~\ref{itm: cluster swapping - invariance} that the corresponding factor on the RHS also vanishes.
    Otherwise, we have that $e\in\Delta_{\sigma,\sigma'}$, and the proof of~\ref{itm: cluster swapping - invariance} shows that, in passing from each factor on the LHS to its corresponding factor on the RHS, either both factors are unchanged or both factors are negated, so their product is conserved.
\end{proof}

Given these properties as the essential inputs, the proof of Theorem~\ref{thm:cluster swapping} follows the same lines as that of Peled--Spinka~\cite[Section~2.5.1]{peled2020three} and Hadas--Peled~\cite[Section~8.4]{hadas2026columnar}.
Below, following the proof of~\cite[Lemma~8.8]{hadas2026columnar}, we will prove that the mapping $m$ is measure-preserving.
This turns out to be the only step in the entire argument that relies on special features of the interacting monomer-dimer models considered in this paper.
In other words, once the following lemma is established, the proof of Theorem~\ref{thm:cluster swapping} follows that of~\cite[Theorem 8.2]{hadas2026columnar}, with Lemma~\ref{lem: cluster swapping - measure preservation} replacing~\cite[Lemma~8.8]{hadas2026columnar} and the usual connectivity on $L(\Z^d)$ replacing the $\boxtimes$-connectivity on $\Z^2$.
We omit this remaining argument and refer the reader to~\cite{hadas2026columnar} for details.

\begin{lemma}
    \label{lem: cluster swapping - measure preservation}
    In the setting of Theorem~\ref{thm:cluster swapping}, let $(\sigma,\sigma')$ be sampled from $\P\otimes\P'$.
    Then, $m(\sigma,\sigma')$ has the same distribution as $(\sigma,\sigma')$.
\end{lemma}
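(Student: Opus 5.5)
We follow the proof of \cite[Lemma~8.8]{hadas2026columnar}. Since $m$ is a measurable involution, it suffices to show that for every pair of cylinder events the probabilities under $\P\otimes\P'$ agree before and after applying $m$. We localize the randomness using the DLR equations in a finite window. Fix a finite $\Lambda\Subset\V$ and restrict to the event $\cE_\Lambda$ that $C_A(\sigma,\sigma')$ is finite and $C'_A(\sigma,\sigma')\subset\Lambda$. Off $\bigcup_\Lambda \cE_\Lambda$ the map $m$ is the identity, so it suffices to prove measure preservation on each $\cE_\Lambda$. A countable decomposition according to the realized set $C'_A$ is cleaner, so we fix a finite set $K\subset\V$ and work on the event $\cE_K:=\set{C_A\text{ finite},\ C'_A(\sigma,\sigma')=K}$.

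\textbf{Step 1: $\cE_K$ is determined by data inside $K$.} By Lemma~\ref{itm: cluster swapping - invariance}, $m$ maps $\cE_K$ to itself. We also check that $\cE_K$ is measurable with respect to the pair restricted to $K$ together with the pair outside $K$. More precisely, once the joint configuration outside the interior of $K$ is fixed, membership in $\cE_K$ and the swapped set $C_A$ depend only on the values inside. The role of the two-layer buffer in the definition of $C'_A$ is the following. Membership of an edge in $\Delta$ depends on dimers incident to its endpoints, that is, on dimers at distance at most $2$ in $L(\Z^d)$. The boundary layer of $K$ therefore consists of edges not in $\Delta$, and it certifies that the clusters meeting $A$ stop inside $K$. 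Moreover, changing the configuration strictly outside $K$ does not alter $\Delta$ on $C_A\cup\{\text{distance-}1\text{ edges}\}$. Hence conditioning on the outside is legitimate.

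\textbf{Step 2: apply DLR and the weight identity.} Let $K^\circ$ be the set of edges of $K$ whose potentials involve only edges of $K$. By the DLR property of $\P$ and $\P'$ on a finite region containing $C_A$ and its potential neighbourhood, we condition on $(\sigma,\sigma')$ outside $K^\circ$. The conditional law of the pair inside is then proportional to
\[
\exp\Bigl\{-\beta\sum_{B\cap K^\circ\neq\emptyset}\bigl[\Phi_B(\sigma)+\Phi_B(\sigma')\bigr]\Bigr\}.
\]
By Lemma~\ref{itm: cluster swapping - potential preservation}, every summand is invariant under $m$. By Lemma~\ref{itm: cluster swapping - hard-core compatibility}, $m$ maps admissible pairs to admissible pairs. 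By Step~1, $m$ fixes the outside data, leaves $\cE_K$ invariant, and is a bijection on the finite set of inside pairs with that boundary lying in $\cE_K$. Consequently the conditional measure of any event $E\cap\cE_K$ equals that of $m^{-1}(E)\cap\cE_K$. Integrating over the boundary and summing over $K$ yields the lemma.

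\textbf{Main obstacle.} The delicate point is Step~1. We must confirm that the cluster $C_A$ and the event $\cE_K$ really are unchanged when the configurations are varied only inside $K^\circ$ among pairs in $\cE_K$, so that the swapping bijection acts within a single conditioning fibre. This requires checking the ranges carefully. The $\Delta$-membership rule for an edge $e=\{u,v\}$ involves dimers through $u$ and $v$, which are edges at $L(\Z^d)$-distance at most one from $e$. The broken-link potential of $e$ involves the edges meeting $e$. Together these show that distance two from $C_A$ suffices. Should distance two fall short, the remedy is to enlarge $C'_A$ by a constant number of extra layers, since the lemma only uses it as a finite buffer.
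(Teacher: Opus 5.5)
Your proof is correct in substance but organized differently from the paper's. The paper works with a $\pi$-system of cylinder events $E=\{\sigma_D=\rho_D,\ \sigma'_D=\rho'_D\}$ whose window $D$ contains $C_A(\rho,\rho')\cup C'_A(\rho,\rho')$. It conditions on the complement of the set $C=C_A(\rho,\rho')$, which is now fixed, and computes $(\P\otimes\P')(E)$ and $(\P\otimes\P')(m^{-1}(E))$ explicitly as integrals of $\P^{\rho}_{C;\beta}(\rho_C)\,\P^{\rho'}_{C;\beta}(\rho'_C)$ and $\P^{\rho}_{C;\beta}(\rho'_C)\,\P^{\rho'}_{C;\beta}(\rho_C)$. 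The two-layer buffer in $C'_A$ is exactly what makes $C_A$, the identification of $m^{-1}(E)$, and the conditional Hamiltonian $H_C$ determined by the data on $D$.

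You instead partition $\{C_A\text{ finite}\}$ according to a realized finite set and condition on the complement of a fixed finite window. You then use Lemma~\ref{itm: cluster swapping - invariance} to see that $m$ is an involution of each conditioning fibre, and Lemma~\ref{itm: cluster swapping - potential preservation} to see that it preserves the product weight there. This proves $(\P\otimes\P')(m^{-1}(F))=(\P\otimes\P')(F)$ for all measurable $F$ directly, with no $\pi$-system generation claim. It also makes the buffer essentially superfluous.

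The only thing Step~2 actually needs is that $m$ modifies only edges of the window, i.e.\ $C_A\subset K^\circ$ throughout $\cE_K$. You do not need $C_A$ or $\Delta$ to be constant on a fibre, so the concerns in your Step~1 and ``main obstacle'' paragraph can be dropped. Be careful, though: with the paper's definition, $C'_A$ need not contain $C_A$. It consists of $A$, the edges adjacent to $A$, and the edges at distance one or two from $C_A$, which is why the paper writes $D\supset C_A\cup C'_A$. So, as written, $C_A\subset K^\circ$ can fail.

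Take $K:=C_A\cup C'_A$ instead. This contains the distance-two neighbourhood of $C_A$, so $C_A\subset K^\circ$, because every nonzero potential $\Phi_B$ with $e\in B$ involves only edges within distance two of $e$. Even more simply, partition by the value of $C_A$ itself and condition on its complement. There, Lemma~\ref{itm: cluster swapping - hard-core compatibility}, Lemma~\ref{itm: cluster swapping - invariance} and Lemma~\ref{itm: cluster swapping - potential preservation} give the fibrewise swap invariance immediately.
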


\begin{proof}
    Let $\sP\subset\sF\otimes\sF$ consist of:
    \begin{enumerate}
        \item \label{itm: infinite clusters} subevents of the event that $C_A(\sigma,\sigma')$ is infinite, and
        \item \label{itm: finite clusters} events of the form $E=\set[\big]{(\sigma,\sigma')\in\Omega^2\mid \sigma_D=\rho_D,\,\sigma'_D=\rho'_D}$, where $\rho,\rho'\in\Omega$ and $D\subset\V$ is finite with $D\supset C_A(\rho,\rho')\cup C_A'(\rho,\rho')$.
    \end{enumerate}
    The family $\sP$ is a $\pi$-system that generates the product $\sigma$-algebra $\sF\otimes\sF$~\cite{hadas2026columnar}, so it suffices to show that $(\P\otimes\P')(E)=(\P\otimes\P')(m^{-1}(E))$ for all $E\in\sP$.
    If $E$ is of type~\ref{itm: infinite clusters}, $m^{-1}(E)=E$ by the definition~\eqref{eqn: cluster swapping - mapping} of $m$, and there is nothing to prove.
    Otherwise, $E$ is of type~\ref{itm: finite clusters}.
    Write $C:=C_A(\rho,\rho')$, $E_\text{in}:=\set[\big]{(\sigma,\sigma')\mid \sigma_C=\rho_C,\,\sigma'_C=\rho'_C}$, and $E_\text{out}:=\set[\big]{(\sigma,\sigma')\mid \sigma_{D\setminus C}=\rho_{D\setminus C},\,\sigma'_{D\setminus C}=\rho'_{D\setminus C}}$.
    We compute
    \begin{align}
        (\P\otimes\P')(E)
        ={}& \E_{\P\otimes\P'}\brackets[\big]{\E_{\P\otimes\P'}\brackets[\big]{\indicator{E}\mid\sigma_{C^c},\sigma'_{C^c}}}
        {}&& \text{by the tower property}
        \nonumber
        \\
        ={}& \E_{\P\otimes\P'}\brackets[\big]{\indicator{E_\text{out}}(\sigma_{C^c},\sigma'_{C^c}) \, \E_{\P\otimes\P'}\brackets[\big]{\indicator{E_\text{in}}\mid\sigma_{C^c},\sigma'_{C^c}}}
        {}&&
        \nonumber
        \\
        ={}& \E_{\P\otimes\P'}\brackets[\big]{\indicator{E_\text{out}}(\sigma_{C^c},\sigma'_{C^c}) \, \P^{\rho}_{C;\beta}(\rho_C)\, \P^{\rho'}_{C;\beta}(\rho'_C)}
        {}&& \begin{subarray}{l} \text{by the DLR condition} \\ \text{and the definition of }E_\text{out} \end{subarray}.
    \end{align}
    Since $D\supset C_A(\rho,\rho')\cup C_A'(\rho,\rho')$, the defining properties of $E$ completely determine, throughout $E$, the membership in $\Delta_{\sigma,\sigma'}$ of each edge contained in $A$ or at distance at most one from $C$.
    Hence, by Lemma~\ref{itm: cluster swapping - invariance},
    \begin{equation}
         m^{-1}(E)=E_\text{out}\cap\set[\big]{(\sigma,\sigma')\mid \sigma_C=\rho'_C,\,\sigma'_C=\rho_C}.
    \end{equation}
    Hence, an analogous computation shows that
    \begin{equation}
        (\P\otimes\P')(m^{-1}(E))
        = \E_{\P\otimes\P'}\brackets[\big]{\indicator{E_\text{out}}(\sigma_{C^c},\sigma'_{C^c}) \, \P^{\rho}_{C;\beta}(\rho'_C)\, \P^{\rho'}_{C;\beta}(\rho_C)}.
    \end{equation}
    Noting that $m(\rho,\rho')=(\rho'_C\,\rho_{C^c},\rho_C\,\rho'_{C^c})$, recalling the definition of the finite-volume Hamiltonian~\eqref{eqn:finite-volume Hamiltonian}, and using Lemma~\ref{itm: cluster swapping - potential preservation}, we get
    \begin{equation}
        H_{C}(\rho_C\,\rho_{C^c})+H_{C}(\rho'_C\,\rho'_{C^c})
        =H_{C}(\rho'_C\,\rho_{C^c})+H_{C}(\rho_C\,\rho'_{C^c}),
    \end{equation}
    so that 
    \begin{equation}
        \P^{\rho}_{C;\beta}(\rho_C)\, \P^{\rho'}_{C;\beta}(\rho'_C)
        =\P^{\rho}_{C;\beta}(\rho'_C)\, \P^{\rho'}_{C;\beta}(\rho_C).
    \end{equation}
    We conclude that $(\P\otimes\P')(E)=(\P\otimes\P')(m^{-1}(E))$, as required.
\end{proof}

\subsection{The clusters are almost surely finite}
\label{sec:clusters are almost surely finite}

In this subsection, we prove that the clusters defined in Section~\ref{sec:cluster swapping} are almost surely finite for sufficiently large $\beta$, thereby verifying the assumption of Theorem~\ref{thm:cluster swapping}.
The characterization follows from the following quantitative statement about the rarity of large clusters, which also enables us to deduce exponential decay of correlations within each phase.

\begin{proposition}
    \label{prop:cluster swapping decay}
    There exist constants $\beta_0,C,c>0$ such that, for all $\beta\ge\beta_0$, if $\P,\P'$ are $\Z^d$-ergodic Gibbs measures with $\Phase{\P}=\Phase{\P'}=x_1$, then, for all $A,B\subset\V$,
    \begin{equation}
        (\P\otimes\P')(A\leftrightarrow B)\le \sum_{u\in A}\sup_{v\in B}\alpha_{x_1}(u,v),
    \end{equation}
    where 
    \begin{equation}
        \label{eqn: Heilmann-Lieb decay}
        \alpha_{x_1}(u,v)
        :=C\exp\set[\bigg]{-c\brackets[\bigg]{\ell_0^{-1}\abs{u_1-v_1}+\sum_{i=2}^d\abs{u_i-v_i}}}.
    \end{equation}
\end{proposition}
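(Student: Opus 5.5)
By a union bound over $u\in A$, it suffices to fix a single edge $u\in A$ and show that $(\P\otimes\P')(u\leftrightarrow v')\le \alpha_{x_1}(u,v)$ whenever the cluster of $u$ reaches some edge $v'$ at anisotropic distance at least that of $v$. Equivalently, we bound the probability that the cluster containing $u$ exits a box centred at $u$ of side $\ell_0 t$ in the $x_1$-direction and side $t$ in the transverse directions. The plan follows the strategy announced in the introduction: \emph{sealed boxes} on a mesoscopic scale strongly percolate under the product measure, and then Lemma~\ref{lem:quantitative Peierls argument} is applied on the coarse grid.

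First we define the mesoscopic scale. Let $L=K\ell_0$ for a large constant multiple $K$ of $\ell_0$ in the $x_1$-direction, rounded to an even integer, and use transverse side $\fn$ (or a fixed multiple of it). A joint configuration $(\sigma,\sigma')$ \emph{seals} the box $S=\rect{L\times\fn\times\dots\times\fn}{z}$ if two conditions hold. First, neither $\sigma$ nor $\sigma'$ has a misaligned dimer in a transverse neighbourhood of $S$. Second, in every $x_1$-column of $S$, there is a location near each $x_1$-end where $\sigma$ and $\sigma'$ agree on a pair of consecutive $x_1$-dimers.

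The key structural claim is that, when no misaligned dimers are present, every $n^\sigma_j-n^{\sigma'}_j$ with $j\ne 1$ vanishes. Hence the extra edges in $\Delta_{\sigma,\sigma'}$ can only be $x_j$-oriented edges for $j\neq1$ created through $n_1$ discrepancies at both endpoints. This needs checking: $\Delta$ contains an edge $e$ of orientation $x_i$ iff for some $j\ne i$ both endpoints have differing $x_j$-occupancy. With $j=1$, this allows transverse edges linking adjacent columns whose vertices both carry $n_1$ discrepancies, so clusters can spread transversally. I would therefore strengthen sealing so that each column of $S$ contains an \emph{agreement slab}: a transverse cross-section of the box on which $\sigma=\sigma'$ near that slab. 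Such a slab blocks clusters, since any edge of $\Delta$ must touch a vertex with a discrepancy. Then a cluster entering $S$ cannot cross it in the $x_1$-direction, and transverse spreading inside $S$ is confined, because adjacent sealed boxes also have slabs. So the coarse set of sealed boxes separates $u$ from far edges; the cluster of $u$ must stay inside a $\boxtimes$-connected set of unsealed coarse boxes.

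Next comes the probability bound, which I expect to be the main obstacle. We need $\peierls{\P\otimes\P'}$ of the sealed set to be at most $\eps_d/2$, say. By Lemma~\ref{lem:intersection of strongly percolating sets}, split the sealing event into two parts. The first is the misalignment part, controlled by Theorem~\ref{thm: microscopic rarity of misaligned dimers} applied to each of $\P,\P'$. A box of length $K\ell_0$ contains about $K\ell_0\fn^{d-2}$ sites, and the union bound costs $K e^{\frac12\beta\mu_\eff}e^{-\beta(1-1/\fn)(4d-5)b}$. This is small exactly because $(4d-6)b>\mu$ and $\fn$ was chosen through~\eqref{eqn: microscopic length scale n}; here is where the hypothesis enters, and the exponent arithmetic must be verified. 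The second is the agreement-slab part. Conditional on alignment, each column behaves like the one-dimensional system with effective chemical potential at most $\mu_\eff$. Using the chessboard estimate for product measures (Lemma~\ref{lem:strongly percolating grid events - chessboard bound for product measures}) with Proposition~\ref{prop:partition function magnetized boundary} and the transfer matrix~\eqref{eqn: transfer matrix}, the probability that two independent copies fail to share a common dimer phase across a full transverse slab within $x_1$-distance $K\ell_0$ decays like $e^{-cK}$. A column of length $K\ell_0$ allows roughly $K$ vacancy pairs and lets the parity decorrelate. The difficulty is obtaining simultaneous agreement across all $\fn^{d-1}$ columns of a slab, given the lateral coupling between columns. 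I would first try requiring instead that both configurations be fully packed with the same parity on a transverse slab, and bound the complement by chessboard as in the proof of Theorem~\ref{thm: microscopic rarity of misaligned dimers}, which costs only a $K$-dependent constant.

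Finally, having $\peierls{}\le\eps_d/2$ for the sealed set on the grid $L\Z\times\fn\Z^{d-1}$, Lemma~\ref{lem:quantitative Peierls argument} gives probability $2^{-(\ell+1)}$ that the cluster of $u$ reaches coarse $\ell^\infty$-distance $\ell$. Converting coarse distance to $\max\{\abs{u_1-v_1}/(K\ell_0),\abs{u_i-v_i}/\fn\}$ and using the maximum to dominate the anisotropic sum up to a factor $d$ yields $\alpha_{x_1}$ with $c\sim \log 2/(dK\fn)$.
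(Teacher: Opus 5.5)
There is a genuine gap: your sealed boxes do not confine clusters in the transverse directions, so the coarse-grained Peierls step does not go through. An agreement slab is a cross-section perpendicular to $x_1$, so it only blocks $x_1$-crossings. A cluster sitting between the two slabs of a sealed box $S$ can leave through a transverse face of $S$ into the neighbouring box, and nothing in your definition prevents this. The sentence ``transverse spreading inside $S$ is confined, because adjacent sealed boxes also have slabs'' is not an argument: the neighbours' slabs are also perpendicular to $x_1$, at unrelated positions. A cluster can therefore run transversally through an arbitrarily long chain of sealed boxes. Hence its coarse image need not lie in the complement of the sealed set, and Lemma~\ref{lem:quantitative Peierls argument} cannot be applied.

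Agreement is also the wrong notion here. Where both configurations are aligned and fully packed, $n_1^\sigma=n_1^{\sigma'}=1$ at every vertex whether or not the parities agree. So parity disagreement only produces $x_1$-edges of $\Delta_{\sigma,\sigma'}$, never transverse ones. Separately, your slab event requires the same parity in two copies simultaneously across $\fn^{d-1}$ columns, and it has no simple chessboard bound. What you need is a failure probability tending to $0$ as $K\to\infty$ uniformly in $\beta$, not ``a $K$-dependent constant.''

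The missing idea is to quantify the transverse mechanism you correctly identified. In an aligned environment, $n_1^\sigma(u)\ne n_1^{\sigma'}(u)$ means exactly that $u$ is a vacancy of exactly one of $\sigma,\sigma'$. So a transverse edge of $\Delta_{\sigma,\sigma'}$ requires two transversally adjacent vertices, each vacant in one of the copies. A defect-chasing count (the paper's Lemma~\ref{lem: M is rare}) shows this costs energy at least $\mu+(4d-5)b$, which exceeds $\frac12\mu_\eff$. Hence such pairs are absent from a column segment of length $3\fs\approx 3K\ell_0$ with high probability.

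The paper builds this absence into the sealing event as $\Sigma_2$, alongside no misaligned dimers ($\Sigma_1$, your first condition, with the same exponent arithmetic you describe). Then inside a sealed region $\Delta_{\sigma,\sigma'}$ consists of $x_1$-paths within single columns. Such a path is cut by a single vacancy of either copy, since at such a vertex at most one of the two $x_1$-edges lies in $\sigma\triangle\sigma'$. Consequently the boxes can be single columns $\fs\times1\times\dots\times1$. The termination event $\Sigma_3$ just asks for one vacancy in each flanking column segment. Its failure on a segment is a fully packed event, whose seminorm bound $(1+K^2/8)^{-2}$ follows directly from Proposition~\ref{prop:partition function magnetized boundary}.
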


\subsubsection{Sealed boxes}
\label{sec:sealed boxes}

Here, we leverage the improved characterization of the $x_1$-phase from Section~\ref{sec: microscopic rarity} to distinguish \textit{sealed} boxes in which a joint configuration $(\sigma,\sigma')\in\Omega^2$ exhibits desirable properties that rule out the existence of large, sprawling clusters nearby.

Introduce a \textit{$\beta$-dependent} mesoscopic length scale
\begin{equation}
    \label{eqn: intermediate length scale}
    \fs:=\fn\ceil{K\ell_0/\fn},
\end{equation}
where $K\ge 1$ is a constant to be chosen later.
Define the events $\Sigma_1\subset\Omega$ and $\Sigma_2,\Sigma_3\subset\Omega^2$ as follows:
\begin{enumerate}
    \item $\sigma\in\Sigma_1$ if and only if every dimer in $\sigma$ that intersects $\eta_{-\fs\vec{e}_1}\mathrm{R}_{3\fs\times 1\times\dots\times 1}$ is $x_1$-oriented.
    \item $(\sigma,\sigma')\in\Sigma_2$ if and only if, for every non-$x_1$-oriented edge $e=\set{u,v}$ having an endpoint in $\eta_{-\fs\vec{e}_1}\mathrm{R}_{3\fs\times 1\times\dots\times 1}$, it holds that (see~\eqref{eqn: occupancy number})
    \begin{equation}
        \label{eqn: no transverse edge}
        \brackets[\Big]{n_1^\sigma(u)-n_1^{\sigma'}(u)}\brackets[\Big]{n_1^\sigma(v)-n_1^{\sigma'}(v)}=0.
    \end{equation}
    \item $(\sigma,\sigma')\in\Sigma_3$ if and only if $\eta_{-\fs\vec{e}_1}\mathrm{R}_{\fs\times 1\times\dots\times 1}$ and $\eta_{\fs\vec{e}_1}\mathrm{R}_{\fs\times 1\times\dots\times 1}$ each contain a vacancy of either $\sigma$ or $\sigma'$.
\end{enumerate}
By an abuse of notation, we write 
\begin{equation}
    \Sigma_1(\sigma):=\Sigma_1\times\Omega
    \quad\text{and}\quad
    \Sigma_1(\sigma'):=\Omega\times\Sigma_1.
\end{equation}
We say that the box $\mathrm{R}_{\fs\times1\times\dots\times 1}$ is \textbf{sealed} in $(\sigma,\sigma')$ if the event $\Sigma:=\Sigma_1(\sigma)\cap\Sigma_1(\sigma')\cap\Sigma_2\cap\Sigma_3$ holds.
Write $\Pi:=\xre{\fs\times1\times\dots\times 1}{\es{\Sigma}}$.

\begin{lemma}
    \label{lem:sealed boxes - no misaligned dimers}
    There exists a constant $\beta_0>0$ such that, for all $\beta\ge\beta_0$, $K\ge 1$, and $\Z^d$-ergodic Gibbs measures $\P$ with $\Phase{\P}=x_1$, with $C_{\ref{thm: microscopic rarity of misaligned dimers}}>0$ as in Theorem~\ref{thm: microscopic rarity of misaligned dimers},
    \begin{equation}
        \peierls{\P}(\xre{3\fs\times\fn\times\dots\times\fn}{\es{\Sigma_1}})
        \le \fn\sqrt[\fn]{\frac{3C_{\ref{thm: microscopic rarity of misaligned dimers}}}{\fn} e^{-\beta (1-1/\fn)(4d-5)b}\fs}
        .
    \end{equation}
\end{lemma}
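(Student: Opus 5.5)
The plan is to reduce the event $\Sigma_1$ on the $3\fs\times 1\times\dots\times 1$ segment to a conjunction of translates of the event $F_{x_1}$ on the $\fn$-grid, and then invoke Theorem~\ref{thm: microscopic rarity of misaligned dimers} together with the grid lemmas of Section~\ref{sec:strongly percolating sets}.

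\textbf{Step 1: reduce $\Sigma_1$ to translates of $F_{x_1}$.} Every dimer meeting $\eta_{-\fs\vec{e}_1}\mathrm{R}_{3\fs\times 1\times\dots\times 1}$ has an endpoint among the $3\fs$ vertices of this segment. So $\Sigma_1$ holds as soon as none of these vertices is incident to a misaligned dimer. After translating by $-\fv$, each such vertex is of the form $\eta_{w}\fv$, with $w$ ranging over $3\fs$ consecutive integer multiples of $\vec{e}_1$. Hence
\begin{equation}
    \Sigma_1\supseteq\bigcap_{w}\eta_{w}F_{x_1}
\end{equation}
up to a fixed translation. This translation is harmless, since $\peierls{\P}$ is insensitive to deterministic shifts of the random set; for a $\Z^d$-invariant $\P$ one can also simply re-center.

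\textbf{Step 2: pass from $\Theta_{x_1}$ to its sublattice translates.} Write $w=-\fs\vec{e}_1+(a+\fn k)\vec{e}_1$ with $0\le a<\fn$ and $0\le k<3\fs/\fn$; here $\fn\mid\fs$. For each residue $a$, consider the translate $\eta_{a\vec{e}_1}F_{x_1}$ read on the coarse grid $\fn\Z\times\dots\times\fn\Z$. By stationarity of $\P$, and up to a shift of the grid, the corresponding random set has the same law as $\Theta_{x_1}$. Theorem~\ref{thm: microscopic rarity of misaligned dimers} therefore gives
\begin{equation}
    \peierls{\P}\le C_{\ref{thm: microscopic rarity of misaligned dimers}}e^{-\beta(1-1/\fn)(4d-5)b}=:\delta
\end{equation}
for each of them. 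To be careful, one can apply Theorem~\ref{thm: microscopic rarity of misaligned dimers} to the translated measure $\eta_{a\vec{e}_1}\P$. It is still $\Z^d$-ergodic with phase $x_1$, since translations preserve $\Phase{\cdot}$.

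\textbf{Step 3: coarsen along $x_1$ by a union bound.} For fixed $a$, define the event that all $3\fs/\fn$ translates along $x_1$ hold. Its random set, on the grid $3\fs\times\fn\times\dots\times\fn$, has as its complement the union of $3\fs/\fn$ sets, each $\delta$-rare-complement in the sense of strong percolation. Lemma~\ref{itm:strongly percolating grid events - subgrid over grid} is exactly this step, with $l=(\fn,\dots,\fn)$, $L=(3\fs,\fn,\dots,\fn)$ and $r=3\fs/\fn$. It gives
\begin{equation}
    \peierls{\P}\le\frac{3\fs}{\fn}\,\delta\quad\text{for each }a.
\end{equation}

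\textbf{Step 4: intersect over residues.} Intersecting over the $\fn$ residues $a$ with Lemma~\ref{lem:intersection of strongly percolating sets} ($k=\fn$) gives
\begin{equation}
    \fn\sqrt[\fn]{\frac{3\fs}{\fn}\,\delta},
\end{equation}
which is exactly the claimed bound. Finally, monotonicity of $\peierls{\P}$ under inclusion of random sets transfers this to $\xre{3\fs\times\fn\times\dots\times\fn}{\es{\Sigma_1}}$.

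The role of $\beta_0$ is only to ensure that Theorem~\ref{thm: microscopic rarity of misaligned dimers} applies. If the right-hand side exceeds $\eps_d$, the bound is trivial by the definition of $\peierls{\P}$, which also removes any constraint relating $K$ and $\beta$.

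\textbf{Main obstacle.} I expect the main difficulty to be the bookkeeping in Steps 1 and 2: aligning the segment's vertices with translates of $\fv$ on the correct grids, and justifying that grid-shifted copies of $\Theta_{x_1}$ obey the same bound. The cleanest route is through translation invariance: each shift $\eta_z\P$ is again a $\Z^d$-ergodic Gibbs measure with phase $x_1$. One must also check that the coarse-grid indexing of Lemma~\ref{lem:strongly percolating grid events}, with $H$ consisting of shifts in the $x_1$-direction only, matches the segment. Possibly one has to apply Lemma~\ref{lem:strongly percolating grid events} with $l=(\fn,\dots,\fn)$ and $L=(3\fs,\fn,\dots,\fn)$ separately for each residue class, which is what Step 3 does.
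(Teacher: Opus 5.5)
Your proposal is correct and follows the paper's proof essentially step for step. The paper likewise writes $\Sigma_1$ as $\bigcap_{i=0}^{\fn-1}\eta_{i\vec{e}_1}\bigcap_{j=0}^{3\fs/\fn-1}\eta_{j\fn\vec{e}_1}\tau F_{x_1}$; this is an equality there, since every dimer meeting the segment is incident to one of its $3\fs$ vertices, though your inclusion suffices. It then bounds each inner intersection by $\frac{3\fs}{\fn}C_{\ref{thm: microscopic rarity of misaligned dimers}}e^{-\beta(1-1/\fn)(4d-5)b}$ via Theorem~\ref{thm: microscopic rarity of misaligned dimers} and Lemma~\ref{itm:strongly percolating grid events - subgrid over grid}, and intersects over the $\fn$ residues with Lemma~\ref{lem:intersection of strongly percolating sets}; the grid shifts you worry about are handled, implicitly in the paper, by $\Z^d$-invariance of $\P$ (so $\eta_{a\vec{e}_1}\P=\P$).
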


\begin{proof}
    Recall the vertex $\fv$ from the context of~\eqref{eqn: weak vertical order}, and let $\tau$ denote the translation of $\R^d$ that maps $\fv$ to the vertex $(-\fs,0,\dots,0)$.
    Then,
    \begin{equation}
        \label{eqn: rewriting of Sigma_1}
        \Sigma_1=\bigcap_{i=0}^{\fn-1} \eta_{i\vec{e}_1} \bigcap_{j=0}^{3\fs/\fn-1}\eta_{j\fn\vec{e}_1}\tau F_{x_1}
        .
    \end{equation}
    By Theorem~\ref{thm: microscopic rarity of misaligned dimers} and Lemma~\ref{itm:strongly percolating grid events - subgrid over grid}, we have that 
    \begin{equation}
        \peierls{\P}\parentheses[\Bigg]{\xre{3\fs\times\fn\times\dots\times\fn}{\es{\bigcap\nolimits_{j=0}^{3\fs/\fn-1}\eta_{j\fn\vec{e}_1}\tau F_{x_1}}}}
        \le \frac{3C_{\ref{thm: microscopic rarity of misaligned dimers}}}{\fn}  e^{-\beta (1-1/\fn)(4d-5)b}\fs.
    \end{equation}
    Thus, by Lemma~\ref{lem:intersection of strongly percolating sets}, 
    \begin{equation}
        \peierls{\P}\parentheses{\xre{3\fs\times\fn\times\dots\times\fn}{\es{\Sigma_1}}}
        \le \fn\sqrt[\fn]{\frac{3C_{\ref{thm: microscopic rarity of misaligned dimers}}}{\fn} e^{-\beta (1-1/\fn)(4d-5)b}\fs},
    \end{equation}
    as required.
\end{proof}

The analysis of $\Sigma_2$ is slightly more involved as it requires a similar defect-chasing argument to that of Section~\ref{sec: microscopic rarity} but for a joint configuration.
We begin with some preliminary definitions.
Consider the box $R=\mathrm{R}_{\fn\times\dots\times\fn}$.
Concentric with $R$ is an elementary hypercube $S$ with vertices $\set{\fn/2-1,\fn/2}^d$.
Let $M\subset\Omega^2$ be the $R$-local event consisting of those $(\sigma,\sigma')$ such that:
\begin{enumerate}
    \item \label{itm: oriented environment} every dimer in $\sigma$ and $\sigma'$ that intersects $R$ is $x_1$-oriented;
    \item \label{itm: adjacent vacancy in oriented environment} there exist $i\ne 1$ and an $x_i$-oriented edge $e=\set{u,v}$ of $S$ such that $u$ and $v$ are each a vacancy in at least one of $\sigma$ and $\sigma'$.
\end{enumerate}
Note that the event $M$ is invariant under reflections through the axis-aligned hyperplanes passing through the center of $R$, so Lemma~\ref{lem:strongly percolating grid events - chessboard bound for product measures} can be applied to it.
We start with a bound on $\norm{M}_R$.

\begin{lemma}
    \label{lem: M is rare}
    There exists a constant $C_{\ref{lem: M is rare}}>0$ such that, for all $\beta\ge 0$,
    \begin{equation}
        \norm{M}_R\le C_{\ref{lem: M is rare}} e^{-\beta[\mu+(4d-5)b]}.
    \end{equation}
\end{lemma}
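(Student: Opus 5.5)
The plan is to reduce $\norm{M}_R$ to chessboard seminorms of fully specified local patterns, and then bound each pattern's seminorm by an explicit energy count.

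\textbf{Step 1 (decomposition).} Write $M$ as a disjoint union over pairs $(\rho_R,\rho'_R)$ of configurations on $R\cap\V$ that realize Items~\ref{itm: oriented environment} and~\ref{itm: adjacent vacancy in oriented environment}. There are at most $C(\fn,d)$ such pairs, with $\fn$ fixed. By subadditivity of the seminorm, it suffices to bound the seminorm of each pattern event $\set{\sigma_R=\rho_R,\ \sigma'_R=\rho'_R}$ uniformly in the pattern.

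Each pattern event is a product $f(\sigma)g(\sigma')$. For the product measure $\P^\per_{\Lambda;\beta}\otimes\P^\per_{\Lambda;\beta}$, the product over $\tau\in T^R_\Lambda$ of $\tau(f\otimes g)$ factorizes. Hence $\norm{f\otimes g}_{R\mid\Lambda}=\norm{f}_{R\mid\Lambda}\norm{g}_{R\mid\Lambda}$, and the same holds after the $\limsup$. This reduces the problem to single-configuration seminorms.

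\textbf{Step 2 (one-configuration bound).} For a fully specified pattern $\rho_R$, the event $\prod_\tau\tau\indicator{\sigma_R=\rho_R}$ determines $\sigma$ on all of $\Lambda$. So
\[
\norm{\sigma_R=\rho_R}_{R\mid\Lambda}=\bigl(w^\per_{\Lambda;\beta}(\text{tiling})/Z^\per_{\Lambda;\beta}\bigr)^{1/\abs{T^R_\Lambda}}.
\]
By Proposition~\ref{prop:partition function periodic boundary}, $Z^\per_{\Lambda;\beta}\ge1$. The energy of the tiling is at least $\abs{T^R_\Lambda}$ times the sum of the potentials $\Phi_B$ attached to vertices and edges near the centre cube $S$. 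Since $\fn\ge 4$, these lie strictly inside $R$ and are not shared between reflected copies. We only need a lower bound, so we discard all other nonnegative potentials.

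\textbf{Step 3 (energy count).} Take a vacancy $w$ of $\sigma$ in $S$. Every neighbour of $w$ is either vacant or incident to an $x_1$-oriented dimer, by Item~\ref{itm: oriented environment}. The vacancy costs $\tfrac12[\mu+2(d-1)b]$. Each of the $2(d-1)$ edges at $w$ transverse to $x_1$ whose other endpoint is covered is a broken link of multiplicity $1$ and costs $b/2$. Thus an isolated vacancy costs $\tfrac12\mu_\eff=\tfrac12[\mu+4(d-1)b]$, and each such edge lost to a vacant neighbour lowers this by at most $b/2$.

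There are two cases.
\begin{enumerate}
\item Both $u$ and $v$ are vacancies of the same configuration, say $\sigma$. The edge $uv$ is $x_i$-oriented with $i\ne1$, and neither endpoint is incident to a dimer, so $uv$ is not a broken link. Every other transverse edge at $u$ or $v$ either is a broken link or ends at a further vacancy, which costs even more. The total cost is therefore at least $\mu_\eff-b=\mu+(4d-5)b$.
\item $u$ is a vacancy of $\sigma$ and $v$ is a vacancy of $\sigma'$ only. The two factors give at least $\tfrac12\mu_\eff$ each, so the total is at least $\mu_\eff\ge\mu+(4d-5)b$.
\end{enumerate}
In either case, combining with Step 1 gives $\norm{M}_R\le C(\fn,d)\,e^{-\beta[\mu+(4d-5)b]}$.

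\textbf{Main obstacle.} The delicate point is the bookkeeping in Step 3. One must check that extra vacancies near $u$ and $v$ never lower the total below $\mu+(4d-5)b$. Each additional vacancy adds $\tfrac12[\mu+2(d-1)b]>0$, while each cancels at most one broken link, worth $b/2$, per shared edge. I would first try a short case analysis showing that every extra vacancy's own cost dominates the broken links it removes. This uses $\mu+2(d-1)b>0$ and the bound of at most $2d$ edges per vertex.
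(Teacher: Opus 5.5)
\textbf{There is a genuine gap in Step 3.} Steps 1 and 2 are fine: the pattern decomposition, the factorization of the product-measure seminorm (with $\limsup(a_nb_n)\le\limsup a_n\,\limsup b_n$), and $Z^\per_{\Lambda;\beta}\ge1$. The trouble is the energy count, and the fix you suggest under ``Main obstacle'' is false.

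Your claim that a transverse edge ending at a further vacancy ``costs even more'' than a broken link needs $\frac12[\mu+2(d-1)b]\ge\frac b2$, i.e.\ $\mu+(2d-3)b\ge0$. In that regime your local count is correct, since the $4d-6$ other endpoints are distinct vertices. But the hypotheses do not guarantee it. For $d=2$, any $\mu\in(-2b,-b)$ is allowed, and then an extra vacancy costs strictly less than the broken link it removes. So ``every extra vacancy's own cost dominates the broken links it removes'' fails.

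Keeping only potentials near the centre cube $S$ cannot be repaired. Take the pattern in which every vertex of $R$ is vacant in $\sigma$, and $\sigma'$ is fully packed with $x_1$-dimers. It lies in $M$ and has no broken links with both endpoints in $R$. Its cost near $S$ is $O(1)\cdot\frac12[\mu+2(d-1)b]$, which tends to $0$ as $\mu+2(d-1)b\downarrow0$, while the target $\mu+(4d-5)b\ge(2d-3)b$ stays bounded below. Even its full $R$-local energy is only $\fn^d\cdot\frac12[\mu+2(d-1)b]$. So the lemma depends on $\fn$ being large compared with $b/(\mu+2(d-1)b)$, which is exactly what the second entry in the definition of $\fm$ provides; $\fn\ge4$ is not enough. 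Your Case 2 claim that each configuration contributes at least $\frac12\mu_\eff$ has the same defect.

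\textbf{The missing idea is to follow maximal runs of vacancies across all of $R$.} For each transverse direction $x_j$, $j\ne1$, take the maximal run of consecutive vacancies of the relevant configuration in $R$ through $u$, respectively $v$. One of two things happens:
\begin{itemize}
\item The run stops inside $R$. Then the vertex just beyond each end is covered by an $x_1$-oriented dimer, so the connecting $x_j$-edge is a multiplicity-one broken link, and these are distinct across runs.
\item The run reaches the internal boundary of $R$. Then it contains at least $\fn/2$ vacancies, costing at least $2(\fm+1)\cdot\frac12[\mu+2(d-1)b]\ge\mu+(4d-5)b$ by the choice of $\fm$ and $\fn$.
\end{itemize}
In Case 1 this gives $2+4(d-2)=4d-6$ broken links plus the vacancies $u,v$, for a total of exactly $\mu+(4d-5)b$. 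In Case 2 it gives $4(d-1)$ broken links plus two vacancies, which is more. The run lengths absorb the extra vacancies without comparing a vacancy's cost to $b/2$. With this count over all $R$-local potentials in place of your Step 3, your Steps 1--2 complete the proof.
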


\begin{proof}
    Let $(\sigma,\sigma')\in M$, and $i\ne 1$ and the $x_i$-oriented edge $e=\set{u,v}$ be as in the definition of $M$.
    It suffices to show that the defining properties of $M$ force a combined energetic penalty of at least $\mu+(4d-5)b$ from \textit{$R$-local} vacancies and broken links of $\sigma$ and $\sigma'$.

    First, suppose that $u$ and $v$ are both vacancies in the same configuration, say, $\sigma$ (the other case is identical).
    \begin{enumerate}
        \item Consider the maximal run of consecutive vacancies of $\sigma$ in $R$, in the $x_i$-direction passing through $e$.
        If this run reaches a vertex in $R$ that is adjacent to $\Z^d\setminus R$ (that is, it reaches the \textit{internal boundary} of $R$), then it contains at least $\fn/2$ $R$-local vacancies, which give the combined energetic penalty of
        \begin{align}
            \frac{\fn}{2}\cdot \frac{1}{2}\brackets[\big]{\mu+2(d-1)b}
            \ge {}& 2(\fm+1)\cdot \frac{1}{2}\brackets[\big]{\mu+2(d-1)b}
            {}&& \text{by the first entry in~\eqref{eqn: microscopic length scale n}}
            \nonumber
            \\
            \ge {}& 2\cdot \frac{1}{2}\brackets[\big]{\mu+2(d-1)b}+(4d-6)\cdot\frac{b}{2}
            {}&& \text{by the second entry in~\eqref{eqn: microscopic length scale m}}
            \nonumber
            \\
            = {}& \mu+(4d-5)b,
            {}&&
        \end{align}
        and we are done.
        Otherwise, the vertex $v_k$ just beyond each endpoint $u_k$ of this run, $k=1,2$, is contained in $R$ and is incident to an $x_1$-oriented dimer in $\sigma$, so the $x_i$-oriented edge $\set{u_k,v_k}$ is an $R$-local broken link of $\sigma$; together, we obtain two distinct $R$-local broken links of $\sigma$.

        \item For each $j\not\in\set{1,i}$, consider the maximal run of consecutive vacancies of $\sigma$ in $R$, in the $x_j$-direction passing through either vertex of $e$.
        The same argument as in the previous step shows that, from each run, we obtain either $\fn/2$ $R$-local vacancies (and we are done) or two distinct $R$-local broken links of $\sigma$.
    \end{enumerate}
    Altogether, we obtain $2+4(d-2)$ $R$-local broken links of $\sigma$.
    Along with the two $R$-local vacancies $u$ and $v$, these defects give a combined energetic penalty of
    \begin{equation}
        2\cdot \frac{1}{2}\brackets[\big]{\mu+2(d-1)b}+(4d-6)\cdot\frac{b}{2}
        = \mu+(4d-5)b,
    \end{equation}
    as required.
    
    Second, suppose that $u$ and $v$ are each a vacancy in a different configuration, say, $u$ is a vacancy in $\sigma$ and $v$ is a vacancy in $\sigma'$ (the other case is identical).
    Here, we consider maximal runs of consecutive vacancies of $\sigma$ and $\sigma'$ in $R$ in each direction $x_j$, $j\ne 1$, passing through $u$ and $v$, respectively.
    As above, either we find $\fn/2$ $R$-local vacancies in one of the runs (and we are done) or we find two distinct $R$-local broken links in each run.
    In the latter case, we obtain a total of two vacancies and $4(d-1)$ distinct broken links of $\sigma$ and $\sigma'$, giving a combined energetic penalty of
    \begin{equation}
        2\cdot \frac{1}{2}\brackets[\big]{\mu+2(d-1)b}+(4d-4)\cdot\frac{b}{2}
        > \mu+(4d-5)b,
    \end{equation}
    which completes the proof.
\end{proof}

\begin{lemma}
    \label{lem:sealed boxes - no transverse edges}
    There exist constants $C,\alpha>0$ such that the following holds.
    Suppose that $\beta\ge\beta_0$, where $\beta_0$ is as in Lemma~\ref{lem:sealed boxes - no misaligned dimers}.
    For all $K\ge 1$ and $\Z^d$-ergodic Gibbs measures $\P,\P'$ with $\Phase{\P}=\Phase{\P'}=x_1$,
    \begin{equation}
        \peierls{\P\otimes\P'}(\xre{3\fs\times\fn\times\dots\times\fn}{\es{\Sigma_2}})
        \le C\parentheses[\Big]{e^{-\beta\min\set{(1-1/\fn)(4d-5)b,\,\mu+(4d-5)b}}\fs}^\alpha.
    \end{equation}
\end{lemma}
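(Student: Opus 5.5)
We will reduce $\Sigma_2$ to two simpler events: a configuration with no misaligned dimers near the column, and the absence of an $M$-type defect at every point of the column. First, consider a non-$x_1$-oriented edge $e=\{u,v\}$ with an endpoint in $\eta_{-\fs\vec e_1}\mathrm{R}_{3\fs\times1\times\dots\times1}$, say of orientation $x_i$ with $i\ne1$. Suppose both $\sigma$ and $\sigma'$ have only $x_1$-oriented dimers near $e$. Then each of $n_1^\sigma(u)$, $n_1^{\sigma'}(u)$ is $0$ exactly when $u$ is a vacancy of the corresponding configuration. So the factor $n_1^\sigma(u)-n_1^{\sigma'}(u)$ is nonzero only if $u$ is a vacancy of exactly one of $\sigma,\sigma'$, and likewise for $v$. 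Hence a violation of \eqref{eqn: no transverse edge} forces an $x_i$-oriented edge whose two endpoints are each a vacancy in at least one of the configurations. This is precisely item~\ref{itm: adjacent vacancy in oriented environment} in the definition of $M$, after translating $e$ to an edge of the central cube $S$. The aligned environment (item~\ref{itm: oriented environment}) is supplied by $\Sigma_1$ applied to a slightly thickened column. Since the column has transverse width $1$ but $e$ can stick out by one unit, I would work with $\Sigma_1$-type events on the $\fn$-neighbourhood. I would write
\begin{equation*}
\Sigma_2\supseteq \widetilde\Sigma_1(\sigma)\cap\widetilde\Sigma_1(\sigma')\cap\bigcap_{\eta}\eta(\Omega^2\setminus M),
\end{equation*}
where $\widetilde\Sigma_1$ requires every dimer within distance $\fn$ of the column to be $x_1$-oriented. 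The intersection over $\eta$ runs over the $O(\fs)$ translations placing $S$ so that its transverse edges cover all candidate edges $e$, with $e$ an edge of $\eta S$ and $\eta R$ contained in the $\fn$-neighbourhood. On this event, either item~\ref{itm: oriented environment} of $M$ fails for each $\eta$, which is excluded by $\widetilde\Sigma_1$, or item~\ref{itm: adjacent vacancy in oriented environment} fails, which gives \eqref{eqn: no transverse edge}.

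Next I will bound each piece. For $\widetilde\Sigma_1$, I would repeat the proof of Lemma~\ref{lem:sealed boxes - no misaligned dimers}, intersecting over the $O(\fn^{d})$ translates of $\fv$ rather than $\fn$ of them. Using Theorem~\ref{thm: microscopic rarity of misaligned dimers}, Lemma~\ref{itm:strongly percolating grid events - subgrid over grid} and Lemma~\ref{lem:intersection of strongly percolating sets}, this gives a bound $C(e^{-\beta(1-1/\fn)(4d-5)b}\fs)^{\alpha_1}$. This applies to $\P$ and to $\P'$ separately. The product-measure version is immediate, because a $\P$-rare set lifts to a $\P\otimes\P'$-rare set through the first coordinate.

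For the $M$ part, the event $\Omega^2\setminus M$ is $R$-local and invariant under reflections through the centre of $R$. Lemma~\ref{lem:strongly percolating grid events - chessboard bound for product measures}, applied with the periodic measures $\P,\P'$, together with Lemma~\ref{lem: M is rare}, gives
\begin{equation*}
\peierls{\P\otimes\P'}\bigl(X_{\fn\times\dots\times\fn}\overline{\Omega^2\setminus M}\bigr)\le \bigl(C e^{-\beta[\mu+(4d-5)b]}\bigr)^{1/\fn^d}.
\end{equation*}
Ergodic Gibbs measures need not be periodic, which is a technical caveat. I would handle it as in Lemma~\ref{lem:sealed boxes - no misaligned dimers}, where chessboard bounds are inherited; alternatively, I would note that in the $x_1$-phase the relevant bounds hold for every ergodic measure, obtained as an extreme point of periodic ones.

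Then I would pass from the $\fn$-grid to the coarse $3\fs\times\fn\times\dots\times\fn$ grid. Lemma~\ref{itm:strongly percolating grid events - subgrid over grid} multiplies the parameter by the number $3\fs/\fn$ of $\fn$-cells per coarse cell, and the $\fn^d$ shifts within a cell are handled by Lemma~\ref{lem:intersection of strongly percolating sets}. Combining the three pieces with Lemma~\ref{lem:intersection of strongly percolating sets} once more yields the bound $C(e^{-\beta\min\{\cdots\}}\fs)^\alpha$, with $\alpha$ of order $\fn^{-d}$ divided by the number of intersected sets. Since $\fs\le K\ell_0+\fn$ is polynomial in the exponentials, the prefactors $\fs$ appear only linearly before the roots are taken, as claimed.

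The main obstacle is the bookkeeping of the $M$ part. One must check that every candidate edge $e$ meeting the column is realised as a transverse edge of some translate of $S$ whose $R$ lies in the region where $\widetilde\Sigma_1$ guarantees alignment. One must also check that the root exponents from Lemma~\ref{lem:strongly percolating grid events - chessboard bound for product measures} and Lemma~\ref{lem:intersection of strongly percolating sets} combine into a single $\alpha>0$ independent of $\beta$ and $K$. Both are finite combinatorial counts depending only on $\fn$ and $d$.
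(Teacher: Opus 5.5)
Your reduction matches the paper's. You include $\widetilde\Sigma_1(\sigma)\cap\widetilde\Sigma_1(\sigma')\cap\bigcap_\eta\eta(\Omega^2\setminus M)$ in $\Sigma_2$, using that under alignment $n_1^\sigma(u)\neq n_1^{\sigma'}(u)$ forces $u$ to be a vacancy of exactly one configuration. That is exactly how the paper uses $M$: it intersects the translates $\psi\eta_{z_1\vec{e}_1}\tau M^c$ with $-\fs\le z_1\le 2\fs-1$ and $\psi\in\{\mathrm{id},\eta_{(-1,\dots,-1)}\}$, the two choices of $\psi$ catching transverse edges on either side of the column. Your treatment of $\widetilde\Sigma_1$ via the proof of Lemma~\ref{lem:sealed boxes - no misaligned dimers} is also fine.

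The gap is in the quantitative step for the $M$ part, which is where the content of the lemma lies. On the $\fn$-grid, for an event local to the $\fn$-cube $R$, Lemma~\ref{lem:strongly percolating grid events - chessboard bound for product measures} has $r=1$. It gives $\peierls{\P\otimes\P'}(\xre{\fn\times\dots\times\fn}{\es{\Omega^2\setminus M}})\le\norm{M}_R$ with no root; the exponent $1/\fn^d$ you wrote belongs to the unit grid. This is not a harmless loosening. You then multiply by $3\fs/\fn$ and get $\frac{3\fs}{\fn}(Ce^{-\beta[\mu+(4d-5)b]})^{1/\fn^d}$. Here $\fs\ge K\ell_0=Ke^{\frac12\beta[\mu+4(d-1)b]}$ is paired with the decay rate $[\mu+(4d-5)b]/\fn^d$, which is less than $\frac12[\mu+4(d-1)b]$. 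That expression tends to infinity with $\beta$. The later roots and intersections therefore give a bound that also grows with $\beta$, and it cannot yield $C(e^{-\beta\min\{\cdots\}}\fs)^\alpha$ with $\alpha$ independent of $\beta$ and $K$. Your closing remark that $\fs$ enters linearly before the roots are taken contradicts the chain you actually wrote.

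The order must be the paper's:
\begin{itemize}
\item Lemma~\ref{lem:strongly percolating grid events - chessboard bound for product measures} with $r=1$ and Lemma~\ref{lem: M is rare} bound the parameter of $\xre{\fn\times\dots\times\fn}{\es{\tau M^c}}$ by $C_{\ref{lem: M is rare}}e^{-\beta[\mu+(4d-5)b]}$.
\item Lemma~\ref{itm:strongly percolating grid events - subgrid over grid}, applied to the $3\fs/\fn$ translates $\eta_{p_1\fn\vec{e}_1}$, then gives $\frac{3C_{\ref{lem: M is rare}}}{\fn}e^{-\beta[\mu+(4d-5)b]}\fs$ on the $3\fs\times\fn\times\dots\times\fn$ grid.
\item Only after that does Lemma~\ref{lem:intersection of strongly percolating sets} take a $2\fn$-th root, over the $2\fn$ classes $(\psi,r_1)$ with $z_1=r_1+p_1\fn$.
\end{itemize}
Keeping $\fs$ attached to the full factor $e^{-\beta[\mu+(4d-5)b]}$ until after the subgrid-over-grid step is what Proposition~\ref{prop:sealed boxes strongly percolate} needs. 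There, $\mu+(4d-5)b>\frac12[\mu+4(d-1)b]$ is what makes $e^{-\beta[\mu+(4d-5)b]}\fs\to0$.

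Separately, your periodicity caveat is unfounded. A $\Z^d$-ergodic Gibbs measure is in particular $\Z^d$-invariant, hence periodic, so the chessboard lemmas apply to $\P$ and $\P'$ directly.
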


\begin{proof}
    Let $\tau$ denote the translation of $\R^d$ that maps $(\fn/2-1,\dots,\fn/2-1)$ to the origin.
    Observe the inclusion
    \begin{equation}
        \brackets[\Bigg]{\bigcap_{z_1=\pm\fn/2}\bigcap_{z_2,\dots,z_d=-\fn/2}^{\fn/2}\eta_{(z_1,\dots,z_d)}\parentheses[\big]{\Sigma_1(\sigma)\cap\Sigma_1(\sigma')}}
        \cap
        \brackets[\Bigg]{\bigcap_{\psi\in\set{\mathrm{id},\eta_{(-1,\dots,-1)}}}\bigcap_{z_1=-\fs}^{2\fs-1} \psi\eta_{z_1\vec{e}_1}\tau M^c} \subset\Sigma_2
        .
    \end{equation}
    Indeed, the first bracket rules out misaligned dimers intersecting the box
    \begin{equation}
        \eta_{(-\fs-\fn/2,-\fn/2,\dots,-\fn/2)}\mathrm{R}_{(3\fs+\fn)\times(\fn+1)\times\dots\times(\fn+1)}.
    \end{equation}
    At the same time, via the negation of Property~\ref{itm: adjacent vacancy in oriented environment} in the definition of $M$, the second bracket ensures that~\eqref{eqn: no transverse edge} holds for each non-$x_1$-oriented edge having an endpoint in the box $\eta_{-\fs\vec{e}_1}\mathrm{R}_{3\fs\times1\times\dots\times 1}$, as required by $\Sigma_2$; note that the alternative scenario corresponding to the negation of Property~\ref{itm: oriented environment} in the definition of $M$ is ruled out by intersection with the first bracket.

    On the one hand, by Lemmas~\ref{lem:sealed boxes - no misaligned dimers} and~\ref{lem:intersection of strongly percolating sets},
    \begin{equation}
        \label{eqn: sealed boxes - no transverse edges - first bound}
        \begin{multlined}
            \peierls{\P\otimes\P'}\parentheses[\Bigg]{\xre{3\fs\times\fn\times\dots\times\fn}{\es{\bigcap\nolimits_{z_1=\pm\fn/2}\bigcap\nolimits_{z_2,\dots,z_d=-\fn/2}^{\fn/2}\eta_{(z_1,\dots,z_d)}\parentheses[\big]{\Sigma_1(\sigma)\cap\Sigma_1(\sigma')}}}}
            \\
            \le 4(\fn+1)^{d-1}\sqrt[4(\fn+1)^{d-1}]{\fn\sqrt[\fn]{\frac{3C_{\ref{thm: microscopic rarity of misaligned dimers}}}{\fn} e^{-\beta (1-1/\fn)(4d-5)b}\fs}}.
        \end{multlined}
    \end{equation}
    On the other hand, we rewrite
    \begin{equation}
        \bigcap_{\psi\in\set{\mathrm{id},\eta_{(-1,\dots,-1)}}} \bigcap_{z_1=-\fs}^{2\fs-1} \psi\eta_{z_1\vec{e}_1}\tau M^c
        =\bigcap_{\psi\in\set{\mathrm{id},\eta_{(-1,\dots,-1)}}} \psi \bigcap_{r_1=0}^{\fn-1} \eta_{r_1\vec{e}_1} \bigcap_{p_1=-\fs/\fn}^{2\fs/\fn-1} \eta_{p_1\fn\vec{e}_1}\tau M^c
        .
    \end{equation}
    By Lemmas~\ref{lem:strongly percolating grid events - chessboard bound for product measures}, \ref{itm:strongly percolating grid events - subgrid over grid}, and~\ref{lem: M is rare}, we have that 
    \begin{equation}
        \peierls{\P\otimes\P'}\parentheses[\Bigg]{\xre{3\fs\times\fn\times\dots\times\fn}{\es{\bigcap\nolimits_{p_1=-\fs/\fn}^{2\fs/\fn-1} \eta_{p_1\fn\vec{e}_1}\tau M^c}}}
        \le \frac{3C_{\ref{lem: M is rare}}}{\fn}  e^{-\beta[\mu+(4d-5)b]}\fs.
    \end{equation}
    Thus, by Lemma~\ref{lem:intersection of strongly percolating sets}, 
    \begin{equation}
        \label{eqn: sealed boxes - no transverse edges - second bound}
        \peierls{\P\otimes\P'}\parentheses[\Bigg]{\xre{3\fs\times\fn\times\dots\times\fn}{\es{\bigcap\nolimits_{\psi\in\set{\mathrm{id},\eta_{(-1,\dots,-1)}}} \bigcap\nolimits_{z_1=-\fs}^{2\fs-1} \psi\eta_{z_1\vec{e}_1}\tau M^c}}}
        \le 2\fn\sqrt[2\fn]{\frac{3C_{\ref{lem: M is rare}}}{\fn} e^{-\beta[\mu+(4d-5)b]}\fs}.
    \end{equation}
    Combining~\eqref{eqn: sealed boxes - no transverse edges - first bound} with~\eqref{eqn: sealed boxes - no transverse edges - second bound} using Lemma~\ref{lem:intersection of strongly percolating sets} completes the proof.
\end{proof}

\begin{lemma}
    \label{lem:sealed boxes - sealing patterns}
    For all $\beta,K>0$ and periodic Gibbs measures $\P,\P'$, with $\fs$ given by~\eqref{eqn: intermediate length scale},
    \begin{equation}
        \begin{multlined}
            \peierls{\P\otimes\P'}(\xre{3\fs\times1\times\dots\times 1}{\es{\Sigma_3\cup(\Sigma_1(\sigma)\cap\Sigma_1(\sigma'))^c}})
            \le 2\parentheses[\bigg]{1+\frac{K^2}{8}}^{-2}.
        \end{multlined}
    \end{equation}
\end{lemma}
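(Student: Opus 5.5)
The plan is to reduce the claim to a chessboard estimate at the single scale $R:=\eta_{-\fs\vec{e}_1}\mathrm{R}_{3\fs\times1\times\dots\times1}$, and then to bound the relevant chessboard seminorm by comparing partition functions with the one-dimensional estimate of Proposition~\ref{prop:partition function magnetized boundary}.

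\textbf{Step 1: reduction to a seminorm.} The event $\Sigma_3\cup(\Sigma_1(\sigma)\cap\Sigma_1(\sigma'))^c$ is $R$-local. It is invariant under the reflections through the axis-aligned hyperplanes passing through the center of $R$. In the $x_1$-direction, such a reflection swaps the two outer segments $\eta_{-\fs\vec{e}_1}\mathrm{R}_{\fs\times1\times\dots\times1}$ and $\eta_{\fs\vec{e}_1}\mathrm{R}_{\fs\times1\times\dots\times1}$ and preserves $\Sigma_1$. In the transverse directions, $R$ has width one, so the reflection only permutes the edges meeting $R$ and preserves orientations. Applying Lemma~\ref{lem:strongly percolating grid events - chessboard bound for product measures} with $l_i=L_i$ (so $r=1$) bounds the left-hand side by $\norm{U}_R$, where $U:=\Sigma_3^c\cap\Sigma_1(\sigma)\cap\Sigma_1(\sigma')$. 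Now $\Sigma_3^c$ says that one of the two outer segments contains no vacancy of $\sigma$ and no vacancy of $\sigma'$. Hence $U\subset U_-\cup U_+$, where $U_\pm:=E_\pm(\sigma)\cap E_\pm(\sigma')$. Here $E_\pm\subset\Omega$ is the event that every dimer intersecting $R$ is $x_1$-oriented and the corresponding segment $\eta_{\pm\fs\vec{e}_1}\mathrm{R}_{\fs\times1\times\dots\times1}$ is vacancy-free. Subadditivity of the seminorm reduces the claim to $\norm{U_\pm}_R\le(1+K^2/8)^{-2}$. Since $U_\pm$ is a product event, the expectation of its reflected product under $\P^\per\otimes\P^\per$ factorizes in each finite volume, and the $\limsup$ of a product is at most the product of the $\limsup$s. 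This gives $\norm{U_\pm}_R\le\norm{E_\pm}_R^2$, so it remains to show that $\norm{E_\pm}_R\le(1+K^2/8)^{-1}$ for a single configuration.

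\textbf{Step 2: the single-configuration bound.} On the torus $\Lambda=\mathrm{R}_{n!\times\dots\times n!}$, the reflected product $\prod_{\tau\in T^R_\Lambda}\tau\indicator{E_\pm}$ forces two things. First, every dimer meeting the union of the reflected copies of $R$, which is all of $\Lambda$, is $x_1$-oriented. Second, in each copy, one segment of length $\fs$ along an $x_1$-line is vacancy-free; there are $\abs{T^R_\Lambda}$ such disjoint segments. So the quantity of interest is $Z^\per_{\Lambda;\beta}(\cap_\tau\tau E_\pm)/Z^\per_{\Lambda;\beta}$, and I would bound the denominator below by an injection argument.

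For all-$x_1$ configurations, the energy is (number of vacancies)$\cdot\frac12[\mu+2(d-1)b]$ plus $\frac b2$ times the number of transverse edges with exactly one vacant endpoint. For $x_1$-oriented edges there is no cost, since $n_j\equiv0$ for all $j\ne1$. Creating a new vacancy therefore raises the energy by at most $\frac12\mu_\eff$, because it affects only the $2(d-1)$ transverse edges at that vertex.

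Take a constrained configuration. Each vacancy-free segment is fully packed, with an even length $\fs\in\fn\Z$, and its dimers lie inside it; I would check this parity/boundary alignment, choosing the segment positions appropriately. In each segment independently, either leave it unchanged or replace its packing by one of the $\fs(\fs+2)/8$ one-dimensional configurations with exactly two vacancies, as counted in the proof of Proposition~\ref{prop:partition function magnetized boundary}. The result is still an all-$x_1$ periodic configuration. The map is injective, because the segments were vacancy-free before and the modifications are confined to them. Each modification costs a factor of at least $e^{-\beta\mu_\eff}=\ell_0^{-2}$.

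Summing over all choices gives
$$Z^\per_{\Lambda;\beta}\ge Z^\per_{\Lambda;\beta}\Big(\bigcap_\tau\tau E_\pm\Big)\Big(1+\tfrac18\ell_0^{-2}\fs^2\Big)^{\abs{T^R_\Lambda}}.$$
Taking the $\abs{T^R_\Lambda}$-th root and using $\fs\ge K\ell_0$ yields $\norm{E_\pm}_R\le(1+K^2/8)^{-1}$, uniformly in $n$ and $\beta$.

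\textbf{Main obstacle.} The delicate point is the energy bookkeeping under insertion. One must verify that the broken-link penalty added per new vacancy never exceeds $(d-1)b$, even when vacancies are inserted in transversally adjacent columns simultaneously. The explicit formula for the energy of all-$x_1$ configurations shows that neighbouring insertions can only cancel broken links, never add extra ones, so the bound holds uniformly and independently across segments.
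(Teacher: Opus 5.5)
\textbf{There is a gap in Step 2, at exactly the point you flagged.} The rest is fine: the reflection invariance, the factorization $\norm{U_\pm}_R\le\norm{E_\pm}_R^2$, and the bookkeeping that each inserted vacancy in an all-$x_1$ configuration costs at most $\tfrac12\mu_\eff$.

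The positions of the vacancy-free segments are dictated by $\Sigma_3$ and by the reflection group $T^R_\Lambda$, so you cannot choose them. The claim that their covering dimers lie inside them is false. A run of $\fs$ consecutive vertices on an $x_1$-line, all covered by $x_1$-dimers, may be tiled in the shifted way, with both end vertices matched across the ends of the segment. For instance, $E_-$ allows the dimer $\{(-\fs-1,0,\dots,0),(-\fs,0,\dots,0)\}$.

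You also cannot simply enlarge the segments. Reflected copies place them back to back: the segment $[-\fs,-1]$ of the base copy abuts its mirror image $[-2\fs,-\fs-1]$ under the reflection through $x_1=-\fs-\tfrac12$. For a shifted segment, the two-vacancy replacements confined to it number only $\fs(\fs-2)/8$. Injectivity can still be arranged, because aligned and shifted segments are distinguishable in the image. But you then only get $\norm{E_\pm}_R\le\bigl(1+\fs(\fs-2)/(8\ell_0^2)\bigr)^{-1}$. This does not imply $(1+K^2/8)^{-1}$ when $\fs=K\ell_0$, that is, when $K\ell_0\in\fn\Z$. As written, the argument proves a slightly weaker bound than the one stated.

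\textbf{The missing ingredient is the recursive chessboard estimate} (Proposition~\ref{prop: infinite-volume recursive chessboard estimate}), which lets you drop to the scale $\fs$ before estimating. Set $R_\pm:=\eta_{\pm\fs\vec{e}_1}\mathrm{R}_{\fs\times1\times\dots\times1}$. Let $E_{R_\pm}\subset\Omega^2$ be the event that in both configurations every vertex of $R_\pm$ is incident to an $x_1$-oriented dimer.
\begin{itemize}
\item The paper notes that $(\Sigma_1(\sigma)\cap\Sigma_1(\sigma'))\setminus\Sigma_3\subset E_{R_-}\cup E_{R_+}$.
\item Since the corners of your $3\fs$-box lie on $G^{R_\pm}$, the $3\fs$-box seminorm of $E_{R_\pm}$ is at most $\norm{E_{R_\pm}}_{R_\pm}$.
\item At this scale the reflected copies of $R_\pm$ cover the whole torus. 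The reflected product therefore forces a complete $x_1$-packing, and the constrained partition function is exactly $2^{2(n!)^{d-1}}$. This is subexponential in the volume and disappears after taking the $\abs{T^{R_\pm}_\Lambda}$-th root.
\item The denominator is bounded below by tiling the torus into $(n!)^d/\fs$ aligned segments. Apply Proposition~\ref{prop:partition function magnetized boundary} to each, using your energy bookkeeping.
\end{itemize}
No injection and no alignment issue arise. The exact bound $2\bigl(1+\fs^2/(8\ell_0^2)\bigr)^{-2}\le 2(1+K^2/8)^{-2}$ follows.
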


\begin{proof}
    Denote $S=\eta_{-\fs\vec{e}_1}\mathrm{R}_{3\fs\times1\times\dots\times 1}$.
    By Lemma~\ref{lem:strongly percolating grid events - chessboard bound for product measures},
    \begin{equation}
        \peierls{\P\otimes\P'}(\xre{3\fs\times1\times\dots\times 1}{\es{\Sigma_3\cup(\Sigma_1(\sigma)\cap\Sigma_1(\sigma'))^c}})\le\norm{(\Sigma_1(\sigma)\cap\Sigma_1(\sigma'))\setminus\Sigma_3}_{S}.
    \end{equation}
    With the goal of applying a union bound to the RHS, we fix an $\fs\times1\times\dots\times 1$ box 
    \begin{equation}
        R\in\set{\eta_{-\fs\vec{e}_1}\mathrm{R}_{\fs\times1\times\dots\times 1},\eta_{\fs\vec{e}_1}\mathrm{R}_{\fs\times1\times\dots\times 1}}
    \end{equation} 
    and study the $R$-local event $E_R\subset\Omega^2$ that in both $\sigma$ and $\sigma'$, every vertex in $R$ is incident to an $x_1$-oriented dimer.
    By dividing $\mathrm{R}_{n!\times\dots\times n!}$ into $(n!)^d/\fs$, $\fs\times1\times\dots\times 1$ boxes with disjoint interiors and applying Proposition~\ref{prop:partition function magnetized boundary} to each box, we get that
    \begin{equation}
        Z^\per_{\mathrm{R}_{n!\times\dots\times n!};\beta}
        \ge \parentheses[\bigg]{1+\frac{\fs^2}{8\ell_0^2}}^{(n!)^{d}/\fs}.
    \end{equation}
    Hence,
    \begin{equation}
        \norm{E_R}_R
        \le \limsup_{n\to\infty}\set[\Bigg]{\frac{2^{2(n!)^{d-1}}}{\brackets[\big]{1+\fs^2/(8\ell_0^2)}^{2(n!)^{d}/\fs}}}^{\fs/(n!)^d}
        =\parentheses[\bigg]{1+\frac{\fs^2}{8\ell_0^2}}^{-2}.
    \end{equation}
    Noting that
    \begin{equation}
        (\Sigma_1(\sigma)\cap\Sigma_1(\sigma'))\setminus\Sigma_3\subset \bigcup_{R\in\set{\eta_{\pm\fs\vec{e}_1}\mathrm{R}_{\fs\times1\times\dots\times 1}}}E_R,
    \end{equation}
    the proof of the lemma is complete after applying Proposition~\ref{prop: infinite-volume recursive chessboard estimate} (noting that the corners of $S$ are on $G^R$) and a union bound over the two choices of $R$.
\end{proof}

We are now ready to prove the strong percolation of sealed boxes in the $x_1$-phase.

\begin{proposition}
    \label{prop:sealed boxes strongly percolate}
    For every $\eps>0$, one may choose $K\ge 1$ and then $\beta_0>0$ such that, for all $\beta\ge\beta_0$ and $\Z^d$-ergodic Gibbs measures $\P,\P'$ with $\Phase{\P}=\Phase{\P'}=x_1$, 
    \begin{equation}
        \peierls{\P\otimes\P'}(\Pi)\le \eps.
    \end{equation}
\end{proposition}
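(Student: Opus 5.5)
We combine the three preceding lemmas through Lemmas~\ref{lem:intersection of strongly percolating sets} and~\ref{lem:strongly percolating grid events}. The event $\Sigma$ is the intersection of three events: $\Sigma_1(\sigma)\cap\Sigma_1(\sigma')$, $\Sigma_2$, and $\Sigma_3\cup(\Sigma_1(\sigma)\cap\Sigma_1(\sigma'))^c$. All three are $\eta_{-\fs\vec{e}_1}\mathrm{R}_{3\fs\times1\times\dots\times1}$-local. The first two have already been controlled on the coarse grid $3\fs\times\fn\times\dots\times\fn$, and the third on the grid $3\fs\times1\times\dots\times1$.

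The first step is to bring everything to the common grid $\fs\times1\times\dots\times1$ on which $\Pi$ is defined. We apply Lemma~\ref{itm:strongly percolating grid events - grid over subgrid}, with $l=(\fs,1,\dots,1)$ and $L=(3\fs,\fn,\dots,\fn)$ or $L=(3\fs,1,\dots,1)$. This converts each bound into a bound on $\peierls{\P\otimes\P'}$ of $\xre{\fs\times1\times\dots\times1}{\es{\cdot}}$ of the event. The cost is taking a root of order at most $r\le 3\fn^{d-1}$, a $\beta$- and $K$-independent constant, together with a maximum over the translates $\eta\in H$. The bounds in Lemmas~\ref{lem:sealed boxes - no misaligned dimers}, \ref{lem:sealed boxes - no transverse edges} and~\ref{lem:sealed boxes - sealing patterns} are uniform over translations, because the measures are $\Z^d$-ergodic; translates of ergodic measures remain ergodic with the same phase, or equivalently, shifted events are controlled by the same arguments. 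Hence the maximum over $H$ costs nothing. Lemma~\ref{lem:intersection of strongly percolating sets} with $k=3$ then gives
\begin{equation*}
\peierls{\P\otimes\P'}(\Pi)\le 3\max\set{\epsilon_1,\epsilon_2,\epsilon_3}^{1/3},
\end{equation*}
where the $\epsilon_i$ are the converted bounds. The marginal statement for $\Sigma_1(\sigma)$ under $\P$ transfers to $\P\otimes\P'$, since rare sets depending only on $\sigma$ remain rare under the product.

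The second step is the choice of parameters. The bound of Lemma~\ref{lem:sealed boxes - sealing patterns}, $2(1+K^2/8)^{-2}$, is independent of $\beta$ and tends to $0$ as $K\to\infty$. We therefore first fix $K$ so large that, after the constant-order root and the final cube root, this contribution is at most $\eps$.

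With $K$ fixed, we use $\fs\le K\ell_0+\fn$ and $\ell_0=e^{\frac12\beta[\mu+4(d-1)b]}$. The products appearing in Lemmas~\ref{lem:sealed boxes - no misaligned dimers} and~\ref{lem:sealed boxes - no transverse edges} are then bounded by constants times
\begin{equation*}
\exp\set[\Big]{-\beta\brackets[\Big]{\min\set{(1-1/\fn)(4d-5)b,\,\mu+(4d-5)b}-\tfrac12\mu-2(d-1)b}}.
\end{equation*}
The exponent is negative precisely because of the hypotheses. For the first term, $(1-1/\fn)(4d-5)b-\frac12\mu-2(d-1)b>0$ is equivalent to
\begin{equation*}
(4d-6)b-\mu>\frac{2(4d-5)b}{\fn},
\end{equation*}
which is guaranteed by the second entry in~\eqref{eqn: microscopic length scale n}. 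For the second term, $\frac12\mu+(2d-3)b>0$ follows from $\mu+2(d-1)b>0$ and $b>0$. Hence both of these contributions tend to $0$ as $\beta\to\infty$, and we increase $\beta_0$ so that they, too, fall below the required threshold.

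The main point to check carefully is this exponent bookkeeping, namely that the $\fs\approx K\ell_0$ loss is beaten by the defect rates exactly under $(4d-6)b>\mu$. A secondary point is the use of Lemma~\ref{lem:strongly percolating grid events - chessboard bound for product measures} in Lemma~\ref{lem:sealed boxes - sealing patterns}, which requires periodic rather than merely ergodic measures. To handle this, we would note that the extremal Gibbs measures in question are obtained as ergodic components of periodic (limit) measures, or else restate that lemma's bound via the periodic measures from which the chessboard seminorm is computed. If necessary, we would follow~\cite{hadas2026columnar} and apply the estimate to the union-type event before decomposing into ergodic parts.
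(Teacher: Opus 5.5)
Your proposal is correct and follows the paper's proof essentially step for step. You split $\Sigma$ with Lemma~\ref{lem:intersection of strongly percolating sets}, pass between the grids with Lemma~\ref{itm:strongly percolating grid events - grid over subgrid} (with $r=3\fn^{d-1}$ and $r=3$ respectively), fix $K$ from Lemma~\ref{lem:sealed boxes - sealing patterns}, and then enlarge $\beta_0$ using the same two exponent comparisons. The paper keeps $\Sigma_1(\sigma)$ and $\Sigma_1(\sigma')$ separate and uses $k=4$, which spares the extra intersection your three-event grouping needs.

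Your ``secondary point'' is not an issue. A $\Z^d$-ergodic Gibbs measure is in particular $\Z^d$-invariant, hence periodic in the paper's sense, so Lemma~\ref{lem:sealed boxes - sealing patterns} applies to $\P,\P'$ directly. Drop the suggested detour through ergodic components: chessboard-derived rarity bounds for a mixture do not pass to its ergodic components.
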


\begin{proof}
    Let $\eps>0$.
    Write
    \begin{equation}
        \Sigma=\Sigma_1(\sigma)\cap\Sigma_1(\sigma')\cap\Sigma_2\cap\left[\Sigma_3\cup(\Sigma_1(\sigma)\cap\Sigma_1(\sigma'))^c\right]
        .
    \end{equation}
    By Lemma~\ref{lem:intersection of strongly percolating sets},
    \begin{equation}
        \begin{multlined}
            \peierls{\P\otimes\P'}(\Pi)
            \le 4\max\set[\Big]{
                \peierls{\P\otimes\P'}(\xre{\fs\times1\times\dots\times 1}{\es{\Sigma_1(\sigma)}}),\,
                \peierls{\P\otimes\P'}(\xre{\fs\times1\times\dots\times 1}{\es{\Sigma_1(\sigma')}}),
                \\
                \peierls{\P\otimes\P'}(\xre{\fs\times1\times\dots\times 1}{\es{\Sigma_2}}),\,
                \peierls{\P\otimes\P'}(\xre{\fs\times1\times\dots\times 1}{\es{\Sigma_3\cup(\Sigma_1(\sigma)\cap\Sigma_1(\sigma'))^c}})
            }^{1/4}.
        \end{multlined}
    \end{equation}

    Let $\delta>0$.
    First, by taking $K\ge 1$ sufficiently large, we may ensure that the bound given by Lemma~\ref{lem:sealed boxes - sealing patterns} is no greater than $\delta$.
    Second, note that
    \begin{align}
        {}& \parentheses*{1-\frac{1}{\fn}}(4d-5)b -\frac{1}{2}\brackets{\mu+4(d-1)b}
        {}&&
        \nonumber
        \\
        = {}& \frac{1}{2}[(4d-6)b-\mu] -\frac{(4d-5)b}{\fn}
        {}&&
        \\
        \ge {}& \frac{1}{2}[(4d-6)b-\mu] -\frac{1}{4}[(4d-6)b-\mu]
        {}&& \text{by the second entry in~\eqref{eqn: microscopic length scale n}}
        \nonumber
        \\
        = {}& \frac{1}{4}[(4d-6)b-\mu] >0 
        {}&&\text{by the assumption }(4d-6)b>\mu.
    \end{align}
    Thus, by increasing $\beta_0$ if necessary, we may also ensure that the bound given by Lemma~\ref{lem:sealed boxes - no misaligned dimers} is no greater than $\delta$.
    It is clear that
    \begin{equation}
        \peierls{\P\otimes\P'}(\xre{\fs\times1\times\dots\times 1}{\es{\Sigma_1(\sigma)}})
        \le \peierls{\P}(\xre{\fs\times1\times\dots\times 1}{\es{\Sigma_1}}),
        \quad
        \peierls{\P\otimes\P'}(\xre{\fs\times1\times\dots\times 1}{\es{\Sigma_1(\sigma')}})
        \le \peierls{\P'}(\xre{\fs\times1\times\dots\times 1}{\es{\Sigma_1}}).
    \end{equation}
    Third, note that
    \begin{equation}
        \mu+(4d-5)b-\frac{1}{2}\brackets{\mu+4(d-1)b}
        \ge \frac{1}{2}[\mu+(2d-2)b]
        >0.
    \end{equation}
    Thus, by increasing $\beta_0$ yet again if necessary, we may moreover ensure that the bound given by Lemma~\ref{lem:sealed boxes - no transverse edges} is no greater than $\delta$.
    Hence, by Lemma~\ref{itm:strongly percolating grid events - grid over subgrid}, we get that
    \begin{equation}
        \begin{multlined}
            \peierls{\P\otimes\P'}(\Pi)
            \le 4\max\set[\Big]{
            \sqrt[3\fn^{d-1}]{3\fn^{d-1}\sqrt[3\fn^{d-1}]{\delta}},
            \sqrt[3]{3\sqrt[3]{\delta}}}^{1/4}.
        \end{multlined}
    \end{equation}
    The proof is complete by taking $\delta$ sufficiently small.
\end{proof}

\subsubsection{Bounding the clusters}

The following deterministic statement clarifies the sense in which sealed boxes rule out the presence of large clusters nearby.

\begin{proposition}
    \label{prop:bounded clusters in sealed boxes}
    Let $(\sigma,\sigma')\in\Omega^2$. 
    Suppose that $S=\eta_{(\fs z_1,z_2,\dots,z_d)}\mathrm{R}_{\fs\times1\times\dots\times 1}$ is sealed in $(\sigma,\sigma')$ for some $z\in\Z^d$.
    Let $e\in\V\cap S$.
    If $e\in\Delta_{\sigma,\sigma'}$, then $e$ is $x_1$-oriented, and the connected component of $e$ in $\Delta_{\sigma,\sigma'}$ is contained in $(\fs(z_1-1)-1/2,\fs(z_1+2)-1/2)\times\set{z_2}\times\dots\times\set{z_d}$.
\end{proposition}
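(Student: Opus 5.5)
By translation invariance of all the definitions involved, we may take $z=0$. Then $S=\mathrm{R}_{\fs\times1\times\dots\times1}$, and sealing gives $\Sigma_1(\sigma)\cap\Sigma_1(\sigma')\cap\Sigma_2\cap\Sigma_3$ on the enlarged segment $T:=\eta_{-\fs\vec{e}_1}\mathrm{R}_{3\fs\times1\times\dots\times1}$. The plan is to show that, near $T$, membership in $\Delta_{\sigma,\sigma'}$ is governed entirely by the one-dimensional line $\ell:=\R\times\set{0}^{d-1}$. Along that line, the vacancies supplied by $\Sigma_3$ act as walls that a cluster cannot cross.

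The first step is to show that no non-$x_1$-oriented edge $e=\set{u,v}$ with an endpoint in $T$ lies in $\Delta_{\sigma,\sigma'}$.
\begin{itemize}
\item By $\Sigma_1$ for both configurations, $e\notin\sigma\triangle\sigma'$.
\item For the second clause in the definition of $\Delta$, take $j\ne i$, where $x_i$ is the orientation of $e$. If $j\ne1$, then the $x_j$-oriented edges at the endpoint lying in $T$ are misaligned dimers meeting $T$. They are therefore absent from both $\sigma$ and $\sigma'$ by $\Sigma_1$, so that factor vanishes.
\item If $j=1$, the product is exactly the one in~\eqref{eqn: no transverse edge}, which vanishes by $\Sigma_2$.
\end{itemize}
The same check settles the first assertion of the statement: an edge $e\in\V\cap S$ that is not $x_1$-oriented has an endpoint in $T$, so it cannot lie in $\Delta_{\sigma,\sigma'}$.

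The second step concerns $x_1$-oriented edges on $\ell$ inside $T$. For these, the only $j$ with $j\ne1$ involve $x_j$-oriented edges at vertices of $T$, and those vanish by $\Sigma_1$. Hence such an edge lies in $\Delta_{\sigma,\sigma'}$ if and only if it lies in $\sigma\triangle\sigma'$.

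Combining the two steps, a component of $\Delta_{\sigma,\sigma'}$ containing $e\in S$ can only move along $x_1$-oriented edges of $\ell$ while it stays within $T$. Indeed, any $L(\Z^d)$-neighbour of such an edge that leaves $\ell$ is non-$x_1$-oriented with an endpoint in $T$, and the first step excludes it. So the component is a run of consecutive edges of $\ell$ that all lie in $\sigma\triangle\sigma'$, for as long as it remains in $T$.

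The third step uses $\Sigma_3$ to stop this run before it leaves $T$. On the segments $\eta_{\pm\fs\vec{e}_1}\mathrm{R}_{\fs\times1\times\dots\times1}$ there is a vertex $w$ that is a vacancy of, say, $\sigma$. Neither $x_1$-edge at $w$ is in $\sigma$, and at most one is in $\sigma'$. The key observation is that consecutive edges of $\ell$ which all lie in $\sigma\triangle\sigma'$ must alternate between $\sigma$ and $\sigma'$, since two dimers of the same configuration cannot be incident. The two $x_1$-edges at $w$ would both have to belong to $\sigma'$ to lie in $\sigma\triangle\sigma'$ together, which is impossible. Hence the run cannot pass through $w$.

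The step I expect to need the most care is the bookkeeping at the ends. One has to check that the vacancy walls sit strictly inside $T$, so that every edge the run touches before hitting a wall still has its endpoints in $T$ and the first two steps apply to it. One must also handle the boundary case in which the wall vertex is an endpoint of the run, by showing that the run stops at $w$ or before. Once this is done, the component lies between the two walls, and hence within the open interval $(-\fs-1/2,\,2\fs-1/2)\times\set{0}^{d-1}$ claimed in the statement.
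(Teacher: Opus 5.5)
Your proposal is correct and follows the same route as the paper's proof. $\Sigma_1(\sigma)\cap\Sigma_1(\sigma')\cap\Sigma_2$ forces every edge of $\Delta_{\sigma,\sigma'}$ near the sealed box to be an $x_1$-oriented edge of the line through $S$ lying in $\sigma\triangle\sigma'$, and the vacancies from $\Sigma_3$ terminate that path on both sides. The end bookkeeping you defer is routine: your second step needs only one endpoint of the edge in $\eta_{-\fs\vec{e}_1}\mathrm{R}_{3\fs\times1\times\dots\times1}$, so it also covers the edge just beyond a wall vertex, even when that vertex is $(-\fs,0,\dots,0)$ or $(2\fs-1,0,\dots,0)$.
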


\begin{proof}
    Since $(\sigma,\sigma')\in\eta_{(\fs z_1,z_2,\dots,z_d)}\brackets[\big]{\Sigma_1(\sigma)\cap\Sigma_1(\sigma')\cap\Sigma_2}$, $e$ is necessarily $x_1$-oriented, and the connected component of $e$ in $\Delta_{\sigma,\sigma'}$ is a path consisting entirely of $x_1$-oriented edges.
    This path terminates at the first instance of a vacancy in $\sigma$ or $\sigma'$, which exists in both $\eta_{(\fs (z_1-1),z_2,\dots,z_d)}\mathrm{R}_{\fs\times1\times\dots\times 1}$ and $\eta_{(\fs (z_1+1),z_2,\dots,z_d)}\mathrm{R}_{\fs\times1\times\dots\times 1}$ since $(\sigma,\sigma')\in\eta_{(\fs z_1,z_2,\dots,z_d)}\Sigma_3$.
\end{proof}

\subsubsection{Proof of Proposition~\ref{prop:cluster swapping decay}}

With $\eps=\eps_d/2$, choose $K\ge 1$ and $\beta_0>0$ so that the conclusion of Proposition~\ref{prop:sealed boxes strongly percolate} holds.
Let $\alpha_{x_1}(u,v)$ be defined as in~\eqref{eqn: Heilmann-Lieb decay}, where $C\ge 1$ and $c>0$ will be determined at the end of the proof.

Define $f:\V\to\Z^d$ by $f(e):=(\floor{\frac{e_1}{\fs}},\floor{e_2},\dots,\floor{e_d})$.
Let $(\sigma,\sigma')$ be sampled from $\P\otimes\P'$.
We will rely on the following claim that long paths in $\Delta_{\sigma,\sigma'}$ induce long sequences of non-sealed boxes.

\begin{claim}
    \label{clm:disagreement paths induce box paths}
    Let $u,v\in\V$.
    Suppose that $u-v\not\in[-4\fs,4\fs]\times\set{0}\times\dots\times\set{0}$ and that $u,v$ are connected by a path in $\Delta_{\sigma,\sigma'}$.
    Then, $f(u),f(v)$ are connected by a $\boxtimes$-path in $\Z^d\setminus\Pi$.
\end{claim}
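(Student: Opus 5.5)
Take a path $u=e^0,e^1,\dots,e^N=v$ in $\Delta_{\sigma,\sigma'}$, with consecutive edges adjacent in $L(\Z^d)$, and look at its image $f(e^0),\dots,f(e^N)$ in $\Z^d$. Then discard the indices $k$ for which $f(e^k)\in\Pi$, and show that what remains is a $\boxtimes$-path from $f(u)$ to $f(v)$. Two facts drive this.

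First, consecutive images are $\boxtimes$-adjacent. If $e^k$ and $e^{k+1}$ share a vertex, their midpoints differ by at most $1$ in each coordinate. The floors $\floor{e_i}$ therefore differ by at most $1$ for $i\ge2$, and $\floor{e_1/\fs}$ changes by at most $1$ because $\fs\ge1$. So $f(e^k)$ and $f(e^{k+1})$ are equal or $\boxtimes$-adjacent. We also need that $f(e)=w$ forces $e\in\V\cap\eta_{(\fs w_1,w_2,\dots,w_d)}\mathrm{R}_{\fs\times1\times\dots\times1}$ whenever that box is sealed. This can fail for a transverse edge whose midpoint has a half-integer coordinate $i\ge2$, so $f(e)$ only places $e$ within distance $1/2$ of that box. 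For such edges the $\Sigma_2$ condition, which covers every non-$x_1$-oriented edge with an endpoint in the $3\fs$-box, still rules out membership in $\Delta_{\sigma,\sigma'}$. Combined with $\Sigma_1$, this shows that a transverse edge of $\Delta$ cannot touch a sealed box, whichever floor convention is used.

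Second, the path can only pass through sealed boxes near its ends. Suppose $f(e^k)\in\Pi$ for some $k$. By Proposition~\ref{prop:bounded clusters in sealed boxes}, the whole connected component of $e^k$ in $\Delta_{\sigma,\sigma'}$ lies in a single $x_1$-line segment of length less than $3\fs$. That component contains both $u$ and $v$. This gives $u-v\in(-3\fs,3\fs)\times\set{0}\times\dots\times\set{0}$, contradicting the hypothesis. Hence no $f(e^k)$ lies in $\Pi$, and $f(e^0),\dots,f(e^N)$, with repetitions removed, is already a $\boxtimes$-path in $\Z^d\setminus\Pi$ from $f(u)$ to $f(v)$. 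The slack between $4\fs$ in the hypothesis and $3\fs$ here absorbs rounding at the box endpoints.

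The main obstacle is the boundary bookkeeping in the second step. The plan is to show that membership $f(e)=w$ with $w\in\Pi$ places $e$ inside the sealed box, or in its closed $1/2$-neighbourhood in the transverse directions, where $\Sigma_1$ and $\Sigma_2$ still apply. For a transverse edge $e=\{p,q\}\in\Delta$ adjacent to the box, an endpoint lies in the box, so $\Sigma_2$ forces the $x_1$-product to vanish. $\Sigma_1$ kills the other orientations, and $e\notin\sigma\triangle\sigma'$, so $e\notin\Delta$. Once this is checked, the remaining argument is the direct contradiction above.
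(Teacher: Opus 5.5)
Your proof is correct and follows the paper's argument: the image of the disagreement path under $f$ is a $\boxtimes$-walk, and a sealed box anywhere on it would, by Proposition~\ref{prop:bounded clusters in sealed boxes}, confine the whole cluster (hence $u$ and $v$) to a short $x_1$-segment, contradicting the hypothesis. Your extra boundary bookkeeping is harmless but not needed: boxes are closed, so a transverse edge with $f(e)=z$ has its midpoint on a face of $R_z$, giving $e\in\V\cap R_z$, and the proposition applies directly.
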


\begin{proof}
    Let $P$ be a path in $\Delta_{\sigma,\sigma'}$ connecting $u$ and $v$.
    The map $f$ sends consecutive \textit{edges} in $P$ (which are elements of $\V$) to either the same vertex or $\boxtimes$-adjacent vertices in $\Z^d$.
    Thus, $f(P)$ contains a $\boxtimes$-walk $Q$ in $\Z^d$ connecting $f(u)$ to $f(v)$.
    It suffices to show that $Q\subset\Z^d\setminus\Pi$.

    First, note that, for each $z\in Q$, there exists $w\in P$ such that $w\in R_z:=\eta_{(\fs z_1,z_2,\dots,z_d)}\mathrm{R}_{\fs\times1\times\dots\times 1}$; this follows directly from the definition of $f$.

    By contradiction, suppose that $Q\cap\Pi\ne\emptyset$.
    Let $z\in Q\cap\Pi$, and choose $w\in P$ as in the previous paragraph.
    Then, $R_z$ is sealed.
    By Proposition~\ref{prop:bounded clusters in sealed boxes}, $w$ is $x_1$-oriented, and the connected component of $w$ in $\Delta_{\sigma,\sigma'}$, which includes $u$ and $v$, is contained in $[w_1-2\fs,w_1+2\fs]\times\set{w_2}\times\dots\times\set{w_d}$, contradicting the assumption that $u-v\not\in[-4\fs,4\fs]\times\set{0}\times\dots\times\set{0}$.
\end{proof}

Let $u\in A$.
By a union bound, it suffices to show that
\begin{equation}
    (\P\otimes\P')(\{u\}\leftrightarrow B)\le \sup_{v\in B}\alpha_{x_1}(u,v).
\end{equation}
Suppose that $u$ is connected by a path in $\Delta_{\sigma,\sigma'}$ to some $v\in B$.
If there exists such a $v$ with $u-v\in[-4\fs,4\fs]\times\set{0}\times\dots\times\set{0}$, then
\begin{equation}
    \alpha_{x_1}(u,v)\ge C\exp{-4c\ell_0^{-1}\fs}.
\end{equation}
Otherwise, by Claim~\ref{clm:disagreement paths induce box paths}, $f(u)$ is connected to $f(v)$ by a $\boxtimes$-path in $\Z^d\setminus\Pi$.
By Proposition~\ref{prop:sealed boxes strongly percolate} and Lemma~\ref{lem:quantitative Peierls argument}, the probability of this event is bounded above by
\begin{equation}
    \begin{multlined}
        (\eps/\eps_d)^{1+\inf_{v\in B}\norm{f(v)-f(u)}_\infty}
        \le (\eps/\eps_d)^{\frac{1}{d}\inf_{v\in B}\left(\fs^{-1}\abs{u_1-v_1}+\sum_{i=2}^d\abs{u_i-v_i}\right)}
        \\
        =\sup_{v\in B}(\eps/\eps_d)^{\frac{1}{d}\left(\fs^{-1}\abs{u_1-v_1}+\sum_{i=2}^d\abs{u_i-v_i}\right)}.
    \end{multlined}
\end{equation}
Choosing $C\ge 1,\,c>0$ so that 
\begin{equation}
    C\exp{-4c\ell_0^{-1}\fs}\ge 1,
    \quad
    (\eps/\eps_d)^{\frac{1}{d}}\le e^{-c},
    \quad\text{and}\quad
    (\eps/\eps_d)^{\frac{1}{d\fs}}\le e^{-c\ell_0^{-1}}
\end{equation}
completes the proof.

\subsection{Deduction of Proposition~\ref{prop:unique ergodic Gibbs measure for each phase}}
\label{sec:deduction of uniqueness of Gibbs measures}

The existence of a $\Z^d$-ergodic Gibbs measure $\P$ with $\Phase{\P}=x_1$ follows from the discussion in Section~\ref{sec: Heilmann-Lieb approach}.
To prove uniqueness, suppose that $\P$ and $\P'$ are two such measures, and let $\sigma,\sigma'$ be independent samples from $\P$ and $\P'$, respectively.
By Proposition~\ref{prop:cluster swapping decay}, the assumption~\eqref{eqn:cluster swapping - no infinite component} of Theorem~\ref{thm:cluster swapping} is satisfied, so $\P=\P'$ and $\P$ is extremal.

Now, suppose that $\P$ is a $\Z^d$-ergodic Gibbs measure with $\Phase{\P}=x_j$, $1\le j\le d$.
As noted at the end of Section~\ref{sec: Heilmann-Lieb approach}, $\Phase{\varphi^{d-j+1}\P}=x_1$.
By the uniqueness and extremality of $\P_{x_1}$ just proven, $\P=\varphi^{j-1}\P_{x_1}$ is also the unique measure with phase label $x_j$ and is extremal.

It remains to prove Item~\ref{itm:unique ergodic Gibbs measure for each phase - decomposition}.
Let $\P$ be a periodic Gibbs measure and $\cL\subset\Z^d$ be a full-rank sublattice such that $\P$ is $\cL$-invariant.
By the ergodic decomposition theorem~\cite[Theorem (14.17)]{georgii2011gibbs}, $\P$ is a convex combination of $\cL$-ergodic Gibbs measures.
Thus, it suffices to show that each $\cL$-ergodic Gibbs measure is one of the $\P_{x_i}$, $1\le i\le d$.
To this end, let $\nu$ be an $\cL$-ergodic Gibbs measure.
Consider the $\Z^d$-invariant measure 
\begin{equation}
    \tilde{\nu}:=\frac{1}{\abs{\Z^d/\cL}}\sum_{\eta\in\Z^d/\cL}\eta\nu.
\end{equation}
By the ergodic decomposition theorem and the uniqueness proven above, $\tilde{\nu}$ is a convex combination of $\P_{x_i}$, $1\le i\le d$.
However, noting that each $\P_{x_i}$, $1\le i\le d$, is $\cL$-ergodic by extremality and $\eta\nu$ is $\cL$-ergodic for each $\eta\in\Z^d/\cL$ since the latter translation preserves $\cL$-ergodicity, we now have two $\cL$-ergodic decompositions of $\tilde{\nu}$.
By the uniqueness of the decomposition, we conclude that each $\eta\nu$ is one of the $\P_{x_i}$, $1\le i\le d$.
In particular, $\nu\in\set{\P_{x_i}\mid 1\le i\le d}$, which completes the proof.

\section{Microscopic characterizations of nematic order}
\label{sec:microscopic characterization of nematic order}

In this section, we derive microscopic characterizations of nematic order in terms of the probability of observing an $x_1$-oriented dimer versus a misaligned one within the $x_1$-phase, thus establishing Item~\ref{itm:main - dimer probabilities} of Theorem~\ref{thm:main}.

First, we estimate the probability of observing a misaligned dimer in the $x_1$-phase.

\begin{theorem}
    \label{thm:misaligned dimers are unlikely}
    Suppose that $\beta\ge\beta_0$.
    If $e\in\V$ is not $x_1$-oriented, then
    \begin{equation}
        \P_{x_1}(\sigma(e)=1)=\order{e^{-\beta(1-1/\fn)(4d-5)b}}.
    \end{equation}
\end{theorem}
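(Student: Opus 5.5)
The plan is to deduce the estimate directly from Theorem~\ref{thm: microscopic rarity of misaligned dimers}, using translation invariance of $\P_{x_1}$ and Lemma~\ref{lem:quantitative Peierls argument} with $\ell=0$.

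Let $e=\set{u,u'}$ be misaligned. If $\sigma(e)=1$, then the vertex $u$ is incident to a misaligned dimer. Write $u=\eta_{\fn z+y}\fv$ with $z\in\Z^d$ and $y\in\set{0,\dots,\fn-1}^d$. Then $\sigma(e)=1$ implies $z\notin\xre{\fn\times\dots\times\fn}{\es{\eta_y F_{x_1}}}$. Since $\P_{x_1}$ is $\Z^d$-invariant (Proposition~\ref{prop:unique ergodic Gibbs measure for each phase}), the pushforward $\eta_{-y}\P_{x_1}$ equals $\P_{x_1}$. Hence the law of the random set $\xre{\fn\times\dots\times\fn}{\es{\eta_y F_{x_1}}}$ coincides with that of $\Theta_{x_1}$, after the harmless relabeling $u\mapsto u-y$. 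It therefore suffices to bound $\P_{x_1}(z\notin\Theta_{x_1})$ uniformly in $z$.

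Apply Lemma~\ref{lem:quantitative Peierls argument} with $\ell=0$ and $\Psi=\Theta_{x_1}$. The event in that lemma contains the event $z\notin\Theta_{x_1}$, since the trivial path $\set{z}$ qualifies. This gives
\begin{equation}
    \P_{x_1}(z\notin\Theta_{x_1})\le \peierls{\P_{x_1}}(\Theta_{x_1})/\eps_d .
\end{equation}
Alternatively, one can argue directly from the definition: if $\Theta_{x_1}$ is $\eps'$-strongly percolating with $\eps'<\eps_d$, then $z\notin\Theta_{x_1}$ forces $z\in B$ or $z$ outside the infinite component, and the Peierls bound controls this by $O(\eps')$.

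Since $\P_{x_1}$ is $\Z^d$-ergodic (it is extremal and invariant) with $\Phase{\P_{x_1}}=x_1$, Theorem~\ref{thm: microscopic rarity of misaligned dimers} applies and gives
\begin{equation}
    \peierls{\P_{x_1}}(\Theta_{x_1})\le C_{\ref{thm: microscopic rarity of misaligned dimers}}e^{-\beta(1-1/\fn)(4d-5)b}.
\end{equation}
Combining the two displays yields $\P_{x_1}(\sigma(e)=1)\le \eps_d^{-1}C_{\ref{thm: microscopic rarity of misaligned dimers}}e^{-\beta(1-1/\fn)(4d-5)b}$, which is the claimed bound with a $\beta$-independent constant.

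The only delicate point is the bookkeeping in the translation step. One must check that $\es{\eta_y F_{x_1}}$ on the $\fn$-grid has the same law as $\Theta_{x_1}$, which uses full $\Z^d$-invariance of $\P_{x_1}$ rather than merely $\fn\Z^d$-invariance. One must also confirm that the use of Lemma~\ref{lem:quantitative Peierls argument} with $\ell=0$ is legitimate. If the infimum defining $\peierls{}$ is not attained, one simply applies the lemma's bound with any $\eps'$ slightly larger than $\peierls{\P_{x_1}}(\Theta_{x_1})$.
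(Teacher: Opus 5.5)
Your proposal is correct and follows essentially the same route as the paper. $\Z^d$-invariance of $\P_{x_1}$ reduces the claim to bounding $\P_{x_1}(F_{x_1}^c)$, Lemma~\ref{lem:quantitative Peierls argument} with $\ell=0$ bounds this by $\peierls{\P_{x_1}}(\Theta_{x_1})/\eps_d$, and Theorem~\ref{thm: microscopic rarity of misaligned dimers} gives the rate; the paper just skips your explicit decomposition $u=\eta_{\fn z+y}\fv$ and passes from an endpoint of $e$ directly to $F_{x_1}^c$.
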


\begin{proof}
    Let $v\in\Z^d$ be an endpoint of $e$, and recall the event $F_{x_1}$ from~\eqref{eqn: weak vertical order}.
    Since $\P_{x_1}$ is $\Z^d$-invariant, we have that
    \begin{equation}
        \P_{x_1}(\sigma(e)=1)
        \le \P_{x_1}(v\text{ is incident to a misaligned dimer in }\sigma)
        =\P_{x_1}(F_{x_1}^c)
        \le \frac{\peierls{\P_{x_1}}(\Theta_{x_1})}{\eps_d}
    \end{equation}
    by Lemma~\ref{lem:quantitative Peierls argument}.
    Applying Theorem~\ref{thm: microscopic rarity of misaligned dimers} completes the proof.
\end{proof}

Second, we estimate the probability of observing an $x_1$-oriented dimer in the $x_1$-phase.

\begin{theorem}
    Suppose that $\beta\ge\beta_0$.
    If $e\in\V$ is $x_1$-oriented, then
    \begin{equation}
        \P_{x_1}(\sigma(e)=1)=\frac{1}{2}-\order{e^{-\frac{1}{2}\beta[\mu+2(d-1)b]}}.
    \end{equation}
\end{theorem}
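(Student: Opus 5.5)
We prove the estimate by counting dimers at a single vertex and then using symmetry to split the count between the two $x_1$-oriented edges at that vertex.

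Fix a vertex $v\in\Z^d$. Exactly one of the following happens: $v$ is a vacancy, $v$ is covered by a misaligned dimer, or $v$ is covered by one of the two $x_1$-oriented edges $v\pm\frac12\vec{e}_1$. This gives the identity
\begin{equation}
    \P_{x_1}(\sigma(v+\tfrac12\vec{e}_1)=1)+\P_{x_1}(\sigma(v-\tfrac12\vec{e}_1)=1)
    =1-\P_{x_1}(v\text{ is a vacancy})-\P_{x_1}(v\text{ is incident to a misaligned dimer}).
\end{equation}

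The two terms on the left are equal. By $\Z^d$-invariance from Item~\ref{itm:main - invariance and extremality}, the probability $\P_{x_1}(\sigma(e)=1)$ is the same for every $x_1$-oriented $e$, because all such edges are translates of one another. Each term on the left is therefore half of the right-hand side, so
\begin{equation}
    \frac12-\P_{x_1}(\sigma(e)=1)=\frac12\brackets[\big]{\P_{x_1}(v\text{ is a vacancy})+\P_{x_1}(v\text{ is incident to a misaligned dimer})},
\end{equation}
and it suffices to show that both probabilities on the right are $\order{e^{-\frac12\beta[\mu+2(d-1)b]}}$.

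The vacancy term follows from the chessboard estimate. Proposition~\ref{prop: infinite-volume chessboard estimate}, applied with $A=\set{\mathrm{id}}$ to the periodic (indeed $\Z^d$-invariant) Gibbs measure $\P_{x_1}$, together with Corollary~\ref{cor:local defect chessboard seminorms}\ref{itm:vacancy chessboard seminorm}, gives
\begin{equation}
    \P_{x_1}(v\text{ is a vacancy})\le\norm{v\text{ is a vacancy}}_{R_v}\le e^{-\frac12\beta[\mu+2(d-1)b]}.
\end{equation}

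The misaligned term is handled by the bound from the proof of Theorem~\ref{thm:misaligned dimers are unlikely}, which gives $\P_{x_1}(F_{x_1}^c)=\order{e^{-\beta(1-1/\fn)(4d-5)b}}$. It remains to check that this exponent dominates $\frac12[\mu+2(d-1)b]$. The proof of Proposition~\ref{prop:sealed boxes strongly percolate} shows that
\begin{equation}
    (1-1/\fn)(4d-5)b>\tfrac12[\mu+4(d-1)b],
\end{equation}
and the right-hand side exceeds $\frac12[\mu+2(d-1)b]$ because $b>0$. Hence the misaligned term is also $\order{e^{-\frac12\beta[\mu+2(d-1)b]}}$.

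Combining the two bounds gives the claim. The sign is consistent with the statement: $\P_{x_1}(\sigma(e)=1)$ equals $\frac12$ minus a nonnegative quantity of order $e^{-\frac12\beta[\mu+2(d-1)b]}$.
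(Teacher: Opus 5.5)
Your proof is correct and follows the paper's argument. You use the same three-way partition at an endpoint $v$ (one of the two $x_1$-oriented edges at $v$ is occupied, $v$ meets a misaligned dimer, or $v$ is a vacancy), $\Z^d$-invariance to split the first event evenly, the chessboard bound for the vacancy, and the misaligned-dimer estimate together with an exponent comparison. The only cosmetic difference is that you justify $(1-1/\fn)(4d-5)b>\frac12[\mu+2(d-1)b]$ via the inequality in the proof of Proposition~\ref{prop:sealed boxes strongly percolate}, whereas the paper cites $(4d-6)b>\mu$ and the lower bound on $\fn$ directly.
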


\begin{proof}
    Let $e\in\V$ be $x_1$-oriented.
    Let $v\in\Z^d$ be an endpoint of $e$, and let $e'$ be the other $x_1$-oriented edge incident to $v$.
    Consider the events
    \begin{align}
        E_1{}&:=\set{\sigma(e)+\sigma(e')=1}, \\
        E_2{}&:=\set{v\text{ is incident to a misaligned dimer in }\sigma}, \\
        E_3{}&:=\set{v\text{ is a vacancy of }\sigma}.
    \end{align}
    Note that these events partition $\Omega$.
    By Theorem~\ref{thm:misaligned dimers are unlikely}, $\P_{x_1}(E_2)=\order{e^{-\beta(1-1/\fn)(4d-5)b}}$.
    By Proposition~\ref{prop: infinite-volume chessboard estimate} and Item~\ref{itm:vacancy chessboard seminorm} of Corollary~\ref{cor:local defect chessboard seminorms}, $\P_{x_1}(E_3)\le e^{-\frac{1}{2}\beta[\mu+2(d-1)b]}$.
    Hence, 
    \begin{equation}
        \P_{x_1}(E_1)=1-\P_{x_1}(E_2)-\P_{x_1}(E_3)=1-\order{e^{-\frac{1}{2}\beta[\mu+2(d-1)b]}},
    \end{equation}
    having used that $(4d-6)b>\mu$ and $\fn\ge 4(6d-5)[2^{d+1}(4d-5)+1]$ in the second equality.
    Since $\P_{x_1}$ is $\Z^d$-invariant, $\P_{x_1}(E_1)=2\P_{x_1}(\sigma(e)=1)$, which implies the theorem.
\end{proof}

\section*{Acknowledgments}

We thank the anonymous referee for our recent paper~\cite{he2025extended} on Model~I for suggesting the simplified proof strategy adopted in its revision, which enabled the progress on Models~II and~V reported here.

\section*{AI use statement}

The critical observation in Lemma~\ref{itm: cluster swapping - potential preservation} was discovered by GPT-6 Astra in a session dedicated to exploring the applicability of cluster swapping to Model~II.
The same AI model was consulted in extending the proof to arbitrary dimensions $d\ge 2$ and used to program the Monte Carlo simulation leading up to Figure~\ref{fig: model V}.

\bibliographystyle{plain}
\bibliography{bibliography}

\end{document}